\theoremstyle{definition}
\newtheorem{defn}{Definition}
\theoremstyle{plain}
\newtheorem{teo}[defn]{Theorem}
\newtheorem{lem}[defn]{Lemma}
\numberwithin{equation}{section}
\DeclareMathOperator{\Der}{Der}
\DeclareMathOperator{\diag}{diag}
\DeclareMathOperator{\GL}{GL}
\DeclareMathOperator{\id}{id}
\DeclareMathOperator{\Mat}{Mat}
\DeclareMathOperator{\PGL}{PGL}
\DeclareMathOperator{\Rep}{Rep}
\DeclareMathOperator{\spa}{span}
\DeclareMathOperator{\Spec}{Spec}
\DeclareMathOperator{\tr}{tr}
\mathchardef\mhyphen="2D
\newcommand{\alg}[1]{{#1\mhyphen\mathbf{Alg}}}
\newcommand{\app}[3]{#1\colon #2\to #3}
\newcommand{\bb}[1]{\mathbb{#1}}
\newcommand{\bimod}[1]{{#1\mhyphen\mathbf{Bimod}}}
\newcommand{\comm}[2]{[#1,#2]}
\newcommand{\cq}[2]{#1\slash\!\!\slash #2}
\newcommand{\de}{\mathrm{d}}
\newcommand{\deq}{\mathrel{:=}}
\newcommand{\dgalg}[1]{{#1\mhyphen\mathbf{dga}}}
\newcommand{\DR}{\mathrm{DR}}
\newcommand{\falg}[2]{#1\langle #2\rangle}
\newcommand{\gc}[2]{[\![#1,#2]\!]}
\newcommand{\imm}{\mathrm{i}}
\newcommand{\isom}{\simeq}
\newcommand{\la}[1]{\mathfrak{#1}}
\newcommand{\lmod}[1]{{#1\mhyphen\mathbf{Mod}}}
\newcommand{\mph}{\makebox[0.7em]{$\cdot$}}
\newcommand{\nc}{\mathrm{nc}}
\newcommand{\ol}[1]{\overline{#1}}
\newcommand{\op}{\mathrm{op}}
\newcommand{\p}[1]{(#1)}
\newcommand{\pair}[2]{\langle #1,#2\rangle}
\newcommand{\poibr}[2]{\{#1,#2\}}
\newcommand{\rist}[2]{\left. #1\right|_{#2}}
\newcommand{\ind}[1]{\tilde{#1}}% induced matrix-valued object
\newcommand{\wind}[1]{\widetilde{#1}}
\newcommand{\tind}[1]{\hat{#1}}% induced scalar-valued object (trace)
\newcommand{\wtind}[1]{\widehat{#1}}
\begin{document}

\title{An introduction to associative geometry\\
  with applications to integrable systems}
\author{Alberto Tacchella}
\maketitle

\begin{abstract}
  The aim of these notes is to provide a reasonably short and ``hands-on''
  introduction to the differential calculus on associative algebras over a
  field of characteristic zero. Following a suggestion of Ginzburg's we call
  the resulting theory \emph{associative geometry}. We argue that this
  formalism sheds a new light on some classic solution methods in the theory
  of finite-dimensional integrable dynamical systems.
\end{abstract}

\tableofcontents
\clearpage

\section{Introduction}

The fundamental relationship between algebra and geometry is well known since
the times of Fermat and Descartes. In the modern language of category theory
this relationship is expressed in terms of equivalences of the form
\begin{equation*}
  \xymatrix@C=3pc{
    \mathbf{Spa} \ar@<0.5ex>[r]^-{\mathcal{O}} & \mathbf{Alg}^{\op}
    \ar@<0.5ex>[l]^-{\Spec}}
\end{equation*}
where \(\mathbf{Spa}\) is some category of \emph{spaces} and \(\mathbf{Alg}\)
is some category of (commutative) \emph{algebras}. The functor \(\mathcal{O}\)
maps a space to the algebra of ``regular'' (in the appropriate sense)
functions on it; the functor \(\Spec\) maps an algebra to its \emph{spectrum},
an object of the category \(\mathbf{Spa}\) naturally associated with it. Two
celebrated instances of this kind of construction are \emph{Gelfand duality},
relating compact Hausdorff spaces to commutative C*-algebras, and the basic
duality of modern algebraic geometry, relating affine schemes to commutative
rings (see \cite[Chapter 1]{khal09} for a detailed exposition of these and
other examples of this kind).

In this framework it is natural to ask whether this picture can be broadened
by taking as \(\mathbf{Alg}\) some category of associative \emph{but not
  necessarily commutative} algebras. It became clear early on that a naive
approach to this question is not viable: namely, one cannot simply extend in
any non-trivial way the usual definition of spectrum coming from algebraic
geometry to the category of all rings. This insight has been recently
formalized as a set of actual no-go theorems \cite{reyes12,vdbh14}.

To cope with this problem many different strategies for doing geometry on
particular classes of noncommutative algebras have been developed. In the
approach pioneered by Alain Connes in the 1980s and popularized in the book
\cite{conn94}, the idea is to use as a starting point the dictionary provided
by the aforementioned Gelfand duality and interpret the theory of (not
necessarily commutative) C*-algebras as a kind of ``noncommutative topology''.
When needed, this picture can be further refined by introducing analogues of
smooth structures and Riemannian metrics. This gave rise to a very rich theory
which is deeply rooted in functional analysis. We refer again the reader to
\cite{khal09} for a recent and very readable introduction to this field.

Another possible strategy to pursue is to generalize the usual
algebro-geometric concepts to some (hopefully large) class of noncommutative
rings. Here we encounter another important distinction, which corresponds to
the classical split between projective and affine algebraic geometry. In the
projective case, one is naturally led to study suitable classes of
\emph{graded} noncommutative rings. This is the approach taken, for instance,
by Artin and Zhang in their seminal paper \cite{az94}. The resulting theory,
which is known as ``noncommutative projective geometry'', is beautifully
described in the surveys \cite{svdb01}, \cite{staf02} and \cite{keel01}.

The affine case can be further divided, following \cite[Section 1]{ginzlect},
in two main strands. The first, that Ginzburg calls ``noncommutative geometry
in the small'', is best seen as a \emph{generalization} of conventional
(affine) algebraic geometry. Here one is typically interested in some sort of
noncommutative deformation of commutative algebras like superalgebras, rings
of differential operators and universal enveloping algebras of Lie algebras.
This kind of investigations is strictly related to the various mathematical
approaches to the problem of \emph{quantization}.

On the other hand, the second approach (called ``noncommutative geometry in
the large'' by Ginzburg) is a completely new theory that does \emph{not}
reduce to the commutative one as a special case. In this approach one deals
with \emph{generic} associative algebras, the basic examples being given by
free ones (namely, algebras of noncommutative polynomials in a finite number
of variables). In these notes we are going to explain in more detail this
point of view; following a suggestion of Ginzburg's, we shall refer to this
approach by the name of \emph{associative geometry}.

To recover some degree of geometric intuition in this very general setting the
following perspective, usually attributed to Kontsevich (see \cite[Section 9]
{kont93}), is very helpful. Let \(\bb{K}\) be a field of characteristic zero.
For each \(d\in \bb{N}\) we have a \emph{representation functor}
\begin{equation*}
  \app{\Rep_{d}}{\mathbf{AsAlg}}{\mathbf{AffSch}}
\end{equation*}
mapping each associative \(\bb{K}\)-algebra \(A\) to the affine scheme of
\(d\)-dimensional representations of \(A\) (that is, algebra morphisms \(A\to
\Mat_{d,d}(\bb{K})\)). According to Kontsevich, the ``associative-geometric''
objects on \(A\) are precisely those objects which induce in a natural way a
family of the corresponding (commutative) objects on each scheme in the
sequence \((\Rep_{d}(A))_{d\in \bb{N}}\). In other words, one can see
associative-geometric objects on \(A\) as ``blueprints'' for an infinite
sequence of ordinary geometric objects defined on representation spaces of
\(A\), each scheme \(\Rep_{d}(A)\) giving an increasingly better approximation
to the mysterious geometry determined by \(A\).

The first aim of these notes is to provide a reasonably compact survey of the
fundamental constructions and results at the basis of this circle of ideas.
The second aim is to show how the resulting theory provides a new
interpretation for some classic solution techniques in the field of
finite-dimensional integrable dynamical systems.

In more detail, the paper is organized as follows. In section \ref{s:ncdf} we
review the definition of the basic notions of differential calculus on a
generic associative algebra over a field of characteristic zero. In particular
in \S\ref{assoc-df} we build the \emph{Karoubi-de Rham complex}, whose
elements play the role of ``associative differential forms''. We analyze in
detail a couple of examples, the associative affine spaces (\S\ref{nc-aff-sp})
and the associative varieties determined by path algebras of quivers
(\S\ref{qpalg}). To deal with the latter it will be necessary to slightly
refine the class of associative differential forms used by introducing a
notion of differential calculus relative to a subalgebra (\S\ref{rel-df}).

In section \ref{s:rep} we review the connection between the worlds of
associative and commutative geometry. Following Kontsevich's philosophy
recalled above, one is led to consider the space of finite-dimensional
representations of a (finitely generated) associative algebra. This can be
interpreted as an affine scheme (or variety) on which a natural action of the
general linear group is defined. We explain in some detail the basic process
through which associative-geometric objects defined on the algebra \(A\)
induce \(\GL\)-invariant geometric objects on representation spaces. The
relative case, which is the appropriate one for quiver representation spaces,
is treated in \S\ref{qrep}.

Finally in section \ref{s:appl} we survey the applications of this formalism
to finite-dimensional integrable dynamical systems. We first review the
development by Kontsevich and Ginzburg of the associative version of
\emph{symplectic varieties} (inducing ordinary symplectic structures on
representation spaces) and the definition of the canonical associative
symplectic structures on free algebras and quiver path algebras. Then we
consider some simple examples of Hamiltonian systems on associative spaces,
and show how their (trivial) solutions project down to interesting flows on
some symplectic quotients of the corresponding representation spaces. This
approach can be seen as a natural extension of the \emph{projection method}
introduced by Olshanetsky and Perelomov to solve the systems of Calogero-Moser
type \cite{op81}.

In order to keep our treatment within reasonable bounds we were forced to
gloss over some more recent developments in associative geometry such as the
introduction of \emph{bisymplectic structures} \cite{cbeg07} and \emph{double
  Poisson structures} \cite{vdb08}. We hope to be able to cover these
important topics (and their applications to integrable systems) in a sequel to
these notes \cite{nc2}.

As it should be clear from the above summary, this paper is meant to be purely
expository. Every construction we are going to review can be found in more
advanced sources such as \cite{ginzlect} and \cite{lebr08}. On the reader's
part we assume a basic acquaintance with the fundamental notions of algebraic
(or differential) geometry, but little or no experience in dealing with
noncommutative algebras. We also assume a reasonable amount of familiarity
with basic category theory (especially universal constructions and adjoint
functors), and (for the material in section 4) a working knowledge of ordinary
symplectic geometry.

It should be stressed that I am not an expert in noncommutative geometry; the
content of these notes merely reflects my understanding of a small part of
this topic at the time of the deadline for submitting the manuscript. I hope
this effort will be useful for other novices who intend to venture into this
complex and fascinating field.

\paragraph{Acknowledgments.} It is a pleasure to take this opportunity to
thank the organizers and all the participants of the conference ``Interactions
between Geometry and Physics''. Their warm reception to a talk based on this
material persuaded me to dig out these notes from the depths of my hard disk
and finally submit them for publication.

A special word of thanks goes to all the people with whom I discussed
noncommutative matters during the last few years. The list includes (but is
not limited to) Igor Mencattini, Daniel Levcovitz, Fabio Ferrari Ruffino,
Pietro Tortella, Carlos Grossi, Sasha Anan'in, Claudio Bartocci, Valeriano
Lanza, Claudio L. S. Rava, Vladimir Rubtsov, Giovanni Landi and Andrea
Raimondo. I would also like to thank one of the anonymous referees for many
useful remarks.

A first draft of these notes was written when the author was supported by the
FAPESP post-doctoral grant 2011/09782-6.

\section{Differential calculus on associative algebras}
\label{s:ncdf}

In this section we review the construction of the universal calculus of
differential forms on associative algebras. Most of this material is taken
from Ginzburg's lectures \cite{ginzlect} and may be found in many other
standard references such as, for instance, \cite{landi97,dub01,gbvf01}.

\subsection{K\"ahler differentials}
\label{K-diff}

In ordinary (commutative) geometry a fundamental role is played by
\emph{differential forms}. As the name implies, the concept originated in the
theory of smooth spaces (differentiable manifolds), but was later exported in
much more general settings by a purely algebraic construction, known as
\emph{K\"ahler differentials}. It turns out that this more abstract
reformulation can easily be adapted to the associative context.

We start by recalling some definitions. Let \(A\) be a (not necessarily
commutative) ring. An \(A\)\textbf{-bimodule} is an abelian group \(M\)
equipped with \emph{two} actions of \(A\), one from the left and one from the
right, which are compatible in the following sense:
\begin{equation}
  \label{eq:05}
  a.(m.b) = (a.m).b \qquad\text{ for every } a,b\in A,\, m\in M.
\end{equation}
An \(A\)-bimodule \(M\) is called \textbf{symmetric} if the two actions
coincide, that is \(a.m = m.a\) for every \(a\in A\) and \(m\in M\). Clearly,
if \(A\) is commutative then every left (or right) \(A\)-module can be
extended to a symmetric \(A\)-bimodule by simply defining the opposite action
to be equal to the one given (but notice that, even in the commutative case,
not every \(A\)-bimodule is of this kind). It follows that the categories of
left \(A\)-modules, of right \(A\)-modules and of symmetric \(A\)-bimodules
are all isomorphic when the ring \(A\) is commutative, hence we can simply
speak of ``\(A\)-modules'' and let the elements of \(A\) act from whatever
side we wish.

Now let \(\bb{K}\) be a field of characteristic zero and \(A\) a commutative
\(\bb{K}\)-algebra, thought of as the algebra of ``regular'' functions \(X\to
\bb{K}\) for some space \(X\). Given an \(A\)-module \(M\), a
\(\bb{K}\)-linear map \(\app{\delta}{A}{M}\) is called a \textbf{derivation}
if
\begin{equation}
  \label{eq:06}
  \delta(ab) = \delta(a)b + a\delta(b) \qquad\text{ for every } a,b\in A.
\end{equation}
Let us denote by \(\Der(A,M)\) the set of derivations \(A\to M\), and consider
the functor
\begin{equation}
  \label{eq:07}
  \app{\Der(A,\mph)}{\lmod{A}}{\mathbf{Set}}
\end{equation}
which sends each \(A\)-module \(M\) to the set \(\Der(A,M)\) and each
\(A\)-linear map \(\app{f}{M}{N}\) to its pushforward \(f^{*}\) (defined by
\(f^{*}(\delta) = f\circ \delta\)). We claim that this functor has a
\emph{universal element} (see \cite[p. 57]{MacLane1998}), namely there exists
a pair \(\p{\Omega^{1}(A),d}\), where \(\Omega^{1}(A)\) is an \(A\)-module and
\(d\in \Der(A,\Omega^{1}(A))\), such that for every other pair
\(\p{M,\theta}\) of this kind there exists a unique \(A\)-linear map
\(\app{f_{\theta}}{\Omega^{1}(A)}{M}\) that makes the following diagram
commute in \(\lmod{A}\):
\begin{equation}
  \label{eq:08}
  \xymatrix{
    A\ar[r]^-{d} \ar[dr]_{\theta} & \Omega^{1}(A) \ar[d]^{f_{\theta}}\\
    & M}
\end{equation}
To see this one can either build the required \(A\)-module ``by hand'', taking
as generators the formal symbols \(\{\de a\}_{a\in A}\) and imposing suitable
relations between them (see e.g. \cite[Chap. 16]{Eisenbud1995}), or proceed in
a more explicit way by taking a suitable quotient of the kernel of the
multiplication map of \(A\), seen as a \(\bb{K}\)-linear map\footnote{Here and
  in what follows we adopt the following convention: whenever we use a tensor
  product sign without a subscript, we mean a tensor product over \(\bb{K}\).}
\(A\otimes A\to A\) (see e.g. \cite[Chap. 9]{Matsumur1989}).

The elements of \(\Omega^{1}(A)\) are called \textbf{K\"ahler differentials}
and act as substitutes of differential forms in a purely algebraic context.
When \(A\) is the coordinate algebra of a \emph{smooth} algebraic variety, the
K\"ahler differentials are precisely the usual differential forms with regular
(that is, polynomial) coefficients; for \emph{singular} varieties the two
concepts no longer agree, and K\"ahler differentials behave in a better way%
\footnote{We remark that more care is needed when interpreting K\"ahler
  differentials in non-algebraic contexts, see
  \url{http://mathoverflow.net/q/6074}.}

Suppose now that our \(\bb{K}\)-algebra \(A\) is no longer commutative. Since
the definition \eqref{eq:06} of derivation makes perfect sense also when \(M\)
is a non-symmetric \(A\)-bimodule, we can again define a functor
\begin{equation}
  \label{eq:09}
  \app{\Der(A,\mph)}{\bimod{A}}{\mathbf{Set}}
\end{equation}
in exactly the same manner as above, the only difference being that now the
domain is the whole category of \(A\)-bimodules. We can then ask if the same
universal problem previously used to define K\"ahler differentials has a
solution in the new context, and the answer is affirmative.
\begin{teo}
  \label{th:omega1nc}
  The functor \eqref{eq:09} has a universal element: there exists a pair
  \(\p{\Omega^{1}_{\nc}(A),d}\), where \(\Omega^{1}_{\nc}(A)\) is an
  \(A\)-bimodule and \(d\in \Der(A,\Omega^{1}_{\nc}(A))\), such that for every
  other pair \(\p{M,\theta}\) of this kind there exists a unique
  \(A\)-bimodule morphism \(\app{f_{\theta}}{\Omega^{1}_{\nc}(A)}{M}\) that
  makes the following diagram commute in \(\bimod{A}\):
  \begin{equation}
    \label{eq:10}
    \xymatrix{
      A\ar[r]^-{d} \ar[dr]_{\theta} & \Omega^{1}_{\nc}(A) \ar[d]^{f_{\theta}}\\
      & M}
  \end{equation}
\end{teo}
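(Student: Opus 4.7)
The plan is to mimic the standard construction of Kähler differentials, but without passing to the quotient by $I^{2}$ that enforces symmetry of the bimodule. Concretely, I would take
\begin{equation*}
  \Omega^{1}_{\nc}(A) \deq \ker\p{\mu\colon A\otimes A \to A},
\end{equation*}
viewed as an $A$-subbimodule of $A\otimes A$ (with the outer bimodule structure, $c\cdot(a\otimes b)\cdot c' = ca\otimes bc'$), and define
\begin{equation*}
  \app{d}{A}{\Omega^{1}_{\nc}(A)}, \qquad d(a) \deq 1\otimes a - a\otimes 1.
\end{equation*}
The first routine checks are that $d(a)$ indeed lies in $\ker\mu$, that $d$ is $\bb{K}$-linear, and that the Leibniz rule holds: expanding $a\,d(b)+d(a)\,b$ one finds the cross-terms cancel, leaving $1\otimes ab - ab\otimes 1 = d(ab)$.

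Next I would establish the universal property. Given an $A$-bimodule $M$ and a derivation $\theta\colon A\to M$, consider the $\bb{K}$-linear map
\begin{equation*}
  \app{\wind{f}}{A\otimes A}{M}, \qquad \wind{f}(a\otimes b) \deq a\,\theta(b),
\end{equation*}
and let $f_{\theta}$ be its restriction to $\Omega^{1}_{\nc}(A)$. Left $A$-linearity is immediate for $\wind{f}$ on the whole $A\otimes A$. Right $A$-linearity, however, \emph{fails} on $A\otimes A$ because $\theta$ is only a derivation, not a bimodule morphism: one computes
\begin{equation*}
  \wind{f}\p{(a\otimes b)\cdot c} - \wind{f}(a\otimes b)\cdot c = ab\,\theta(c).
\end{equation*}
The key observation is that this discrepancy vanishes \emph{after restriction to $\ker\mu$}: for $\omega=\sum_{i} a_{i}\otimes b_{i}\in\ker\mu$ one has $\sum_{i}a_{i}b_{i}=0$, so the obstruction $\sum_{i}a_{i}b_{i}\,\theta(c)$ is zero. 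Hence $f_{\theta}$ is a genuine $A$-bimodule morphism, and using $\theta(1)=0$ (a consequence of the Leibniz rule) one checks $f_{\theta}\p{d(a)} = \theta(a) - a\,\theta(1) = \theta(a)$.

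For uniqueness I would use the identity
\begin{equation*}
  \sum_{i} a_{i}\cdot d(b_{i}) = \sum_{i} a_{i}\otimes b_{i} - \Bigl(\sum_{i}a_{i}b_{i}\Bigr)\otimes 1,
\end{equation*}
which shows that every $\omega=\sum_{i}a_{i}\otimes b_{i}\in\ker\mu$ can be written as $\sum_{i}a_{i}\cdot d(b_{i})$; in particular $d(A)$ generates $\Omega^{1}_{\nc}(A)$ as a left $A$-module. Any bimodule morphism satisfying $f\circ d=\theta$ is therefore forced to send $\omega$ to $\sum_{i}a_{i}\,\theta(b_{i})=f_{\theta}(\omega)$.

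The one delicate step — and the only place where the argument differs in any substantive way from the commutative situation — is verifying right $A$-linearity of $f_{\theta}$; this is what forces us to take $\Omega^{1}_{\nc}(A)=\ker\mu$ rather than some larger quotient of $A\otimes A$. Everything else is a matter of unravelling definitions.
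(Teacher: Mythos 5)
Your proof is correct and follows exactly the construction the paper itself records immediately after the theorem statement: \(\Omega^{1}_{\nc}(A)=\ker\p{\app{\mu}{A\otimes A}{A}}\) with the outer bimodule structure, \(da = 1\otimes a - a\otimes 1\), and \(f_{\theta}\bigl(\sum_{i} a'_{i}\otimes a''_{i}\bigr) = \sum_{i} a'_{i}.\theta(a''_{i})\). The paper defers the verifications to Ginzburg's lectures, and you have supplied them correctly, including the two genuinely load-bearing points: that the right-linearity obstruction \(ab\,\theta(c)\) vanishes precisely on \(\ker\mu\), and that the identity \(\omega = \sum_{i} a_{i}.d(b_{i})\) for \(\omega\in\ker\mu\) gives both generation by \(d(A)\) and uniqueness.
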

The proof of this result is not difficult but would entail a long detour
through topics like Hochschild cohomology which will have no further use in
these notes. For this reason we omit it, referring the interested reader to
\cite[Section 10]{ginzlect}.

Concretely, if we denote by \(\app{\mu}{A\otimes A} {A}\) the multiplication
map of \(A\) we can take \(\Omega^{1}_{\nc}(A)\) to be the kernel of \(\mu\)
(seen as a sub-\(A\)-bimodule of \(A\otimes A\), the bimodule structure being
given by \(a.(a'\otimes a'') = aa'\otimes a''\), \((a'\otimes a'').b =
a'\otimes a''b\)) with the map \(d\) defined by
\begin{equation}
  \label{eq:14}
  da\deq 1\otimes a - a\otimes 1 \qquad\text{ for every } a\in A.
\end{equation}
By construction, every \(\alpha\in \Omega^{1}_{\nc}(A)\) may be written as a
finite sum of the form
\begin{equation}
  \label{eq:14a}
  \alpha = \sum_{i} a'_{i}\otimes a''_{i}
\end{equation}
for some \(a'_{i},a''_{i}\in A\) such that \(\sum_{i} a'_{i}a''_{i} = 0\).
Using the map \eqref{eq:14} we can write equivalently
\[ \alpha = \sum_{i} a'_{i}.da''_{i} = \sum_{i} a'_{i}.(1\otimes a''_{i} -
a''_{i}\otimes 1) = \sum_{i} a'_{i}\otimes a''_{i} - \sum_{i}
a'_{i}a''_{i}\otimes 1 = \sum_{i} a'_{i}\otimes a''_{i}. \]
Given a pair \(\p{M,\theta}\) as in the statement of theorem
\ref{th:omega1nc}, the map \(f_{\theta}\) is then defined by sending the
element \eqref{eq:14a} to \(\sum_{i} a'_{i}.\theta(a''_{i})\in M\).

It is useful to reformulate the universal property expressed by theorem
\ref{th:omega1nc} as the existence of a natural isomorphism
\begin{equation}
  \label{eq:14b}
  \Der(A,\mph)\isom \bimod{A}(\Omega^{1}_{\nc}(A),\mph).
\end{equation}
Again, this is exactly analogous to what happens in the commutative case,
with the category \(\bimod{A}\) replacing the category \(\lmod{A}\).

Particularly important is the case when \(M\) is the algebra \(A\) itself
seen as an \(A\)-bimodule in the obvious way, that is by defining
\[ a.x\deq ax \quad\text{ and }\quad x.b\deq xb \qquad\text{ for all }
a,b,x\in A. \]
When \(A\) is (commutative and) the algebra of regular functions on a smooth
affine manifold \(X\), the derivations \(A\to A\) are in 1-1 correspondence
with \emph{algebraic vector fields} globally defined on \(X\). As we shall see
in section \ref{s:rep}, the same interpretation makes sense also in the
associative context; we then take \(\Der(A)\deq \Der(A,A)\) to be the (linear)
space of vector fields on the ``associative variety'' determined by \(A\). The
natural isomorphism \eqref{eq:14b} then implies that for every \(\theta\in
\Der(A)\) there exists a unique \(A\)-bimodule map \(\app{i_{\theta}}
{\Omega^{1}_{\nc}(A)}{A}\) such that \(\theta = i_{\theta}\circ d\), that is
\begin{equation}
  \label{eq:31}
  i_{\theta}(da) = \theta(a) \quad\text{ for every } a\in A.
\end{equation}
Clearly this property specifies completely the action of \(i_{\theta}\) on
every element of \(\Omega^{1}_{\nc}(A)\).

Let us remark that the \(\bb{K}\)-linear space \(\Der(A)\) has a natural
structure of Lie algebra when equipped with the usual commutator bracket:
\[ [\theta_{1},\theta_{2}]\deq \theta_{1}\circ\theta_{2} - \theta_{2}\circ
\theta_{1}. \]
However, it cannot be equipped with the structure of a (left or right)
\(A\)-module as soon as \(A\) fails to be commutative. The best one can do is
to define an action of \(Z(A)\), the \emph{center} of the algebra \(A\), on
\(\Der(A)\) as follows: given \(k\in Z(A)\) and \(\theta\in \Der(A)\), we
take as \(k.\theta\) the map
\[ a\mapsto k\theta(a) \quad\text{ for every } a\in A. \]
This map is a derivation because
\[ (k.\theta)(ab) = k(\theta(a) b + a\theta(b)) = k\theta(a) b + ka\theta(b) \]
which coincides with
\[ (k.\theta)(a) b + a (k.\theta)(b) = k\theta(a)b + ak\theta(b) \]
precisely because \(ka = ak\) for every \(a\in A\). This makes \(\Der(A)\) a
symmetric \(Z(A)\)-bimodule.

\subsection{The complex $\Omega^{\bullet}(A)$}
\label{univ-dga}

From now on we are going to denote the bimodule of K\"ahler differentials for
a generic associative algebra \(A\) simply as \(\Omega^{1}(A)\). In order to
obtain a notion of \(n\)-form for every \(n>1\) we would like to build a
cochain complex \(\Omega^{\bullet}(A)\) whose lower degree part reduces to the
universal derivation \(\app{d}{A}{\Omega^{1}(A)}\) provided by theorem
\ref{th:omega1nc}. It turns out that the most convenient way to achieve this
goal involves another kind of universal construction.

Recall that a \(\bb{K}\)-algebra \(A\) is said to be \textbf{graded (over}
\(\bb{N}\)\textbf{)} if it comes equipped with a direct sum decomposition
\begin{equation}
  \label{eq:11}
  A = \bigoplus_{i\in \bb{N}} A_{i}
\end{equation}
such that \(A_{i}A_{j}\subseteq A_{i+j}\). It follows that \(A_{0}\) is a
subalgebra of \(A\) and each \(A_{i}\) is an \(A_{0}\)-bimodule (not
necessarily symmetric). A grading over \(\bb{N}\) automatically induces also a
\(\bb{Z}/2\) grading, namely a decomposition of \(A\) into an ``even'' and an
``odd'' part,
\begin{equation}
  \label{eq:15}
  A_{+}\deq \bigoplus_{i\text{ even}} A_{i} \quad\text{ and }\quad
  A_{-}\deq \bigoplus_{i\text{ odd}} A_{i},
\end{equation}
such that \(A_{\pm}A_{\pm}\subseteq A_{+}\), \(A_{\pm}A_{\mp}\subseteq
A_{-}\).

A \(\bb{K}\)-linear map \(\app{f}{A}{A}\) is said to be \textbf{of degree}
\(\ell\) if \(f(A_{i})\subseteq A_{i+\ell}\). A map \(\app{\delta}{A}{A}\) of
degree \(\ell\) which satisfies the \emph{graded Leibniz rule}
\begin{equation}
  \label{eq:13}
  \delta(ab) = \delta(a) b + (-1)^{k\ell} a \delta(b) \qquad\text{ for every }
  a\in A_{k},\, b\in A
\end{equation}
is called a \textbf{graded derivation of degree} \(\ell\). Ordinary
derivations of \(A\) are exactly the graded derivations of degree zero. We
shall speak of \emph{odd derivations} for graded derivations of odd degree,
and similarly for \emph{even derivations}.

The following result is easily proved using the graded Leibniz rule and
induction.
\begin{lem}
  \label{lem:der-gener}
  Let \(A\) be a graded algebra. If two derivations \(A\to A\) of a fixed
  degree coincide on a set of generators for \(A\) then they are equal.
\end{lem}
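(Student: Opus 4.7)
The plan is to argue by linearity plus induction on word length. Let $\delta_1, \delta_2$ be two graded derivations of degree $\ell$ that agree on a generating set $S \subseteq A$. Their difference $\delta \deq \delta_1 - \delta_2$ is again a graded derivation of degree $\ell$ (the Leibniz rule \eqref{eq:13} is $\bb{K}$-linear in $\delta$) which vanishes on $S$; it suffices to prove $\delta = 0$ everywhere.

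First I would replace $S$ by the (possibly larger) set $S'$ consisting of all homogeneous components of the elements of $S$. Because $\delta(s) \in \bigoplus_{k} A_{k+\ell}$ decomposes uniquely according to the grading, $\delta(s) = 0$ forces $\delta(s_k) = 0$ for each homogeneous component $s_k$ of $s$; and $S'$ still generates $A$ as a $\bb{K}$-algebra. So without loss of generality $\delta$ vanishes on a homogeneous generating set. Next I would dispense with the unit: applying \eqref{eq:13} with $a = b = 1 \in A_0$ gives $\delta(1) = \delta(1) + \delta(1)$, hence $\delta(1) = 0$.

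The core of the argument is induction on the length $n$ of a product $s_{1} s_{2}\cdots s_{n}$ of homogeneous elements of $S'$, since every element of $A$ is a $\bb{K}$-linear combination of such monomials and $\delta$ is $\bb{K}$-linear. The case $n = 0$ is $\delta(1) = 0$ and $n = 1$ is the hypothesis. For the inductive step, write the monomial as $a \cdot s_{n+1}$ with $a \deq s_{1}\cdots s_{n} \in A_{k}$ for $k \deq \deg(s_1) + \cdots + \deg(s_n)$, so that the graded Leibniz rule applies directly:
\[
  \delta(a s_{n+1}) = \delta(a)\, s_{n+1} + (-1)^{k\ell}\, a\, \delta(s_{n+1}) = 0,
\]
by the inductive hypothesis on $\delta(a)$ and the base case on $\delta(s_{n+1})$.

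I do not expect any serious obstacle: the only mildly subtle point is the reduction to a homogeneous generating set, which is needed so that the monomials built from $S'$ are themselves homogeneous and the signs $(-1)^{k\ell}$ in \eqref{eq:13} are unambiguous. Once that reduction is in place the induction is entirely mechanical.
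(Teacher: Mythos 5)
Your proof is correct and takes exactly the route the paper intends: the paper gives no written proof beyond remarking that the lemma ``is easily proved using the graded Leibniz rule and induction,'' which is precisely your argument (pass to the difference $\delta_{1}-\delta_{2}$ and induct on the length of monomials in the generators). Your added care---reducing to a homogeneous generating set so that the sign $(-1)^{k\ell}$ in \eqref{eq:13} is unambiguous, and checking $\delta(1)=0$ for the empty monomial---supplies exactly the details the paper leaves implicit, and both steps are valid as you state them.
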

A \textbf{morphism of graded algebras} from \(A\) to \(B\) is a morphism of
\(\bb{K}\)-algebras \(\app{f}{A}{B}\) which has degree zero
(\(f(A_{i})\subseteq B_{i}\) for every \(i\in \bb{N}\)).

A \textbf{differential graded algebra}, or \textbf{dg-algebra} for short, is a
pair \(\p{D,d}\) consisting of a graded algebra \(D\) and a derivation
\(\app{d}{D}{D}\) of degree 1 such that \(d\circ d = 0\). The map \(d\) is
called the \emph{differential} of the dg-algebra; we shall denote by
\(\app{d_{n}}{D_{n}}{D_{n+1}}\) the restriction of \(d\) to the homogeneous
component of degree \(n\) in \(D\).

A \textbf{morphism of dg-algebras} from \(\p{D,d}\) to \(\p{E,d'}\) is a
morphism of graded algebras \(\app{f}{D}{E}\) that intertwines the two
differentials, which means that the diagram
\begin{equation}
  \label{eq:16}
  \xymatrix{
    D_{n} \ar[r]^{d_{n}} \ar[d]_{\rist{f}{D_{n}}} & D_{n+1} \ar[d]^{\rist{f}{D_{n+1}}}\\
    E_{n} \ar[r]_{d'_{n}} & E_{n+1}}
\end{equation}
commutes for every \(n\in \bb{N}\). The category obtained by taking as object
the dg-algebras over \(\bb{K}\) and as arrows the dg-algebra morphisms will be
denoted by \(\dgalg{\bb{K}}\).

As we already noted, the degree zero part of a graded \(\bb{K}\)-algebra is
itself a \(\bb{K}\)-algebra. It follows that there exists a restriction
functor
\begin{equation}
  \label{eq:17}
  \app{(\mph)_{0}}{\dgalg{\bb{K}}}{\alg{\bb{K}}}
\end{equation}
which sends a generic dg-algebra \(\p{D,d}\) to its degree zero part \(D_{0}\)
and a morphism of dg-algebras \(\app{f}{(D,d)}{(E,d')}\) to the restriction
\(\rist{f}{D_{0}}\) (which can be seen as a map with codomain \(E_{0}\)
because \(f\) preserves the grading). It turns out that this functor possess a
left adjoint, which enables us to canonically construct a dg-algebra extending
any given \(\bb{K}\)-algebra of ``degree zero'' elements.
\begin{teo}
  For every \(\bb{K}\)-algebra \(A\) there exists a universal morphism from
  \(A\) to \((\mph)_{0}\), that is a pair \(\p{\mathcal{D}(A),i}\) consisting
  of a dg-algebra \(\mathcal{D}(A)\) and a \(\bb{K}\)-algebra morphism
  \(\app{i}{A}{\mathcal{D}(A)_{0}}\) such that for every other pair
  \(\p{\Gamma,\psi}\) of this kind there exists a unique morphism of
  dg-algebras \(\app{u_{\psi}}{\mathcal{D}(A)}{\Gamma}\) that makes the
  following diagram commute in \(\alg{\bb{K}}\):
  \begin{equation}
    \label{eq:18}
    \xymatrix{
      A\ar[r]^-{i} \ar[dr]_{\psi} & \mathcal{D}(A)_{0}\ar[d]^{u_{\psi}}\\
      & \Gamma_{0}}
  \end{equation}
\end{teo}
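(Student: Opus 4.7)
The plan is to construct $\mathcal{D}(A)$ explicitly as the tensor algebra over $A$ of the bimodule of noncommutative Kähler differentials, and to derive the universal property by combining Theorem \ref{th:omega1nc} with the universal property of the tensor algebra.

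First I would set
\[
\mathcal{D}(A) \deq T_{A}\bigl(\Omega^{1}_{\nc}(A)\bigr) = \bigoplus_{n\geq 0} \Omega^{1}_{\nc}(A)^{\otimes_{A} n},
\]
with $\Omega^{1}_{\nc}(A)^{\otimes_{A} 0} \deq A$. Concatenation of tensors over $A$ makes this into an $\bb{N}$-graded $\bb{K}$-algebra with $\mathcal{D}(A)_{0}=A$, and I would take $\app{i}{A}{\mathcal{D}(A)_{0}}$ to be the identity map. A typical homogeneous element of degree $n$ is a sum of expressions of the form $a_{0}\,da_{1}\otimes_{A}\cdots\otimes_{A}da_{n}$ (which I write $a_{0}\,da_{1}\cdots da_{n}$ for short), so $A$ together with the symbols $\{da\}_{a\in A}$ generate $\mathcal{D}(A)$ as a $\bb{K}$-algebra.

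Next I would define the differential by declaring $d(a)\deq da\in \Omega^{1}_{\nc}(A)$ for $a\in A$, $d(da)\deq 0$, and extending to all of $\mathcal{D}(A)$ by the graded Leibniz rule \eqref{eq:13}. Concretely,
\[
d(a_{0}\,da_{1}\cdots da_{n}) = da_{0}\,da_{1}\cdots da_{n}.
\]
To see that $d^{2}=0$ I would invoke Lemma \ref{lem:der-gener}: $d\circ d$ is a graded derivation of degree $2$ that vanishes on the generators $A\cup dA$, hence is identically zero.

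For the universal property, given a pair $\p{\Gamma,\psi}$ with $\Gamma$ a dg-algebra (differential $d'$) and $\app{\psi}{A}{\Gamma_{0}}$ an algebra morphism, the homogeneous piece $\Gamma_{1}$ is an $A$-bimodule via $\psi$, and the composite $d'\circ \psi\colon A\to \Gamma_{1}$ is a derivation in the sense of \eqref{eq:06}. By Theorem \ref{th:omega1nc} there is a unique $A$-bimodule morphism $\app{f_{\psi}}{\Omega^{1}_{\nc}(A)}{\Gamma_{1}}$ with $f_{\psi}\circ d = d'\circ \psi$. The pair $(\psi,f_{\psi})$ then extends, by the universal property of the tensor algebra over $A$, to a unique morphism $u_{\psi}$ of graded $\bb{K}$-algebras $\mathcal{D}(A)\to \Gamma$, explicitly given by
\[
u_{\psi}(a_{0}\,da_{1}\cdots da_{n}) = \psi(a_{0})\,d'\psi(a_{1})\cdots d'\psi(a_{n}).
\]

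The one step that requires care, and the main obstacle, is verifying that $u_{\psi}$ is actually a dg-algebra morphism, i.e.\ intertwines $d$ and $d'$ in the sense of diagram \eqref{eq:16}. Here I would apply Lemma \ref{lem:der-gener} once more: both $u_{\psi}\circ d$ and $d'\circ u_{\psi}$ are degree-$1$ graded derivations $\mathcal{D}(A)\to \Gamma$ (where $\Gamma$ is regarded as a $\mathcal{D}(A)$-bimodule via $u_{\psi}$), and they agree on the generating sets $A$ (by the defining equation $f_{\psi}\circ d=d'\circ \psi$ combined with $i=\id$) and $dA$ (where both give zero after one further application of $d'$, since $d'^{2}=0$). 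Hence they coincide on all of $\mathcal{D}(A)$. Uniqueness of $u_{\psi}$ is immediate from the same principle: any dg-algebra morphism extending $\psi$ must send $da\mapsto d'\psi(a)$, and hence is forced on every generator.
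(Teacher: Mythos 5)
Your graded algebra \(\bm{T}_{A}(\Omega^{1}_{\nc}(A))\), your derivation of the universal property from Theorem \ref{th:omega1nc} combined with the adjunction \eqref{eq:27}, and your uniqueness argument for \(u_{\psi}\) are all sound. The genuine gap is at the step you wave through as routine: the \emph{existence} of the differential. You cannot simply ``extend to all of \(\mathcal{D}(A)\) by the graded Leibniz rule,'' because \(\bm{T}_{A}(\Omega^{1}_{\nc}(A))\) is not free on the set \(A\cup dA\): its elements are subject to the relations internal to \(\Omega^{1}_{\nc}(A)\) (for instance \(d(ab) = da\,b + a\,db\)), to the \(A\)-bimodule structure, and to the middle \(A\)-linearity relations \(\alpha a\otimes_{A}\beta = \alpha\otimes_{A}a\beta\); one must verify that the prescribed values on generators are compatible with all of them before any map is defined. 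Lemma \ref{lem:der-gener} gives only \emph{uniqueness} of a derivation with given values on generators, never existence, so each of your three invocations of it (to define \(d\), to get \(d\circ d = 0\), and to prove the intertwining property) presupposes that the maps involved already exist as well-defined graded derivations --- which for \(d\) itself is exactly the point at issue. (It is also stated for derivations \(A\to A\), not for derivations valued in a bimodule along \(u_{\psi}\), though that extension is harmless.) Nor can you fall back on Lemma \ref{lem:ext-Alin}: the degree-one component of the would-be differential, \(\Omega^{1}(A)\to \Omega^{2}(A)\), is \emph{not} an \(A\)-bimodule morphism --- one has \(d(\alpha a) = d(\alpha)\,a - \alpha\otimes_{A} da\) --- so the hypothesis of that lemma fails. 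The paper explicitly flags this trap at the end of \S\ref{univ-dga}: defining \(\de\) directly on the tensor algebra and establishing the dg-structure there is called ``a non-trivial endeavor.''

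The paper's route deliberately avoids this: it takes \(\mathcal{D}(A)_{n} = A\otimes \bar{A}^{\otimes n}\) with the differential \eqref{eq:20}, which is manifestly well defined and visibly squares to zero (because \(\ol{1} = 0\) in \(\bar{A}\)); the remaining non-trivial verifications --- associativity of the product \eqref{eq:21}, the graded Leibniz rule for \(d\), and the universal property --- are cited from Cuntz and Quillen \cite{cq95}. Only afterwards does Theorem \ref{th:iso-Om-D} identify this model with \(\bm{T}_{A}(\Omega^{1}(A))\) and transport the differential back. To repair your argument you would have to carry out the compatibility checks by hand: define \(d\) on \(\Omega^{1}(A) = \ker\mu\) via the \(\bb{K}\)-bilinear map \((a,b)\mapsto da\otimes_{A}db\), prove the two one-sided degree-one Leibniz identities, and then check that the alternating-sum formula in degree \(n\) kills the balancing relations of \(\otimes_{A}\) --- at which point you are essentially reproving Theorem \ref{th:iso-Om-D}. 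Once the dg-structure is secured by either means, your universal-property argument is a perfectly good substitute for that part of the Cuntz--Quillen proof.
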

This result was first proved in a seminal paper by Cuntz and Quillen
\cite{cq95}. The dg-algebra \(\mathcal{D}(A)\) is called the \textbf{universal
  differential envelope} of the \(\bb{K}\)-algebra \(A\) and admits a very
explicit description that we illustrate next.

Let us set \(\bar{A}\deq A/\bb{K}\), as a quotient of vector spaces over
\(\bb{K}\); for every \(a\in A\) we shall denote by \(\ol{a}\) its image along
the canonical projection \(A\to \bar{A}\). Now define\footnote{Recall our
  convention about tensor products: a \(\otimes\) sign means by default
  \(\otimes_{\bb{K}}\).}
\begin{equation}
  \label{eq:19}
  \mathcal{D}(A)_{n}\deq A\otimes
  \underbrace{\bar{A}\otimes\dots\otimes\bar{A}}_{n\text{ times}}
\end{equation}
and let \(\mathcal{D}(A)\deq \bigoplus_{n\in \bb{N}} \mathcal{D}(A)_{n}\).
Then define \(\app{d_{n}}{\mathcal{D}(A)_{n}}{\mathcal{D}(A)_{n+1}}\) as
follows:
\begin{equation}
  \label{eq:20}
  d_{n}(a_{0}\otimes \ol{a_{1}}\otimes \dots\otimes \ol{a_{n}})\deq
  1\otimes \ol{a_{0}}\otimes \dots\otimes \ol{a_{n}}.
\end{equation}
Finally we must give a product on \(\mathcal{D}(A)\) compatible with the
grading. We do this by defining a map \(\mathcal{D}(A)_{n}\times
\mathcal{D}(A)_{m-1-n}\to \mathcal{D}(A)_{m-1}\) via the following prescription:
\begin{multline}
  \label{eq:21}
  (a_{0}\otimes \ol{a_{1}}\otimes \dots\otimes \ol{a_{n}}) (a_{n+1}\otimes
  \ol{a_{n+2}}\otimes \dots\otimes \ol{a_{m}})\deq \\
  (-1)^{n} a_{0}a_{1}\otimes \ol{a_{2}} \otimes\dots\otimes \ol{a_{m}} +
  \sum_{i=1}^{n} (-1)^{n-i} a_{0}\otimes \ol{a_{1}}\otimes \dots \otimes
  \ol{a_{i}a_{i+1}}\otimes \dots\otimes \ol{a_{m}},
\end{multline}
where the \(a_{1}\) that figures in the first term on the right-hand side is
any representative for \(\ol{a_{1}}\) (it is easy to check that the result
does not depend on the particular representative chosen).

Notice in particular the non-trivial action of an element of
\(\mathcal{D}(A)_{0} = A\) on \(\mathcal{D}(A)_{n}\) from the right (the
action from the left is the obvious one):
\begin{multline*}
  \label{eq:42}
  (a_{0}\otimes \ol{a_{1}}\otimes\dots\otimes \ol{a_{n}}) a_{n+1} = 
  (-1)^{n} a_{0}a_{1}\otimes \ol{a_{2}} \otimes\dots\otimes \ol{a_{n+1}} + {}\\
  {} + (-1)^{n-1} a_{0}\otimes \ol{a_{1}a_{2}} \otimes\dots\otimes \ol{a_{n+1}} +
  \dots + a_{0}\otimes \ol{a_{1}}\otimes \dots \otimes \ol{a_{n}a_{n+1}}.
\end{multline*}
For instance when \(n=1\) we have
\begin{equation*}
  a(a_{0}\otimes \ol{a_{1}}) = aa_{0}\otimes \ol{a_{1}} \quad\text{ but }\quad
  (a_{0}\otimes \ol{a_{1}})b = -a_{0}a_{1}\otimes \ol{b} + a_{0}\otimes
  \ol{a_{1}b}.
\end{equation*}
In \cite{cq95} the authors prove (1) that the pair \(\p{\mathcal{D}(A),d}\) is
a dg-algebra, and (2) that the pair \(\p{\mathcal{D}(A),i}\), where \(i\) is
the identity map \(A\to \mathcal{D}(A)_{0} = A\), solves the universal problem
\eqref{eq:18}.

Let us clarify the relationship between \(\mathcal{D}(A)\) and the K\"ahler
differentials of \(A\). Consider the degree zero component of the map
\eqref{eq:20}; as \(\mathcal{D}(A)_{0} = A\), it is a derivation from \(A\) to
the \(A\)-bimodule \(\mathcal{D}(A)_{1}\). It can be shown (see \cite[Theorem
10.7.1]{ginzlect}) that the pair \(\p{\mathcal{D}(A)_{1}, d_{0}}\) is
\emph{also} a universal element for the functor \(\Der(A,\mph)\). The
universal property of the pair \(\p{\Omega^{1}(A),d}\) then implies that there
exists a unique isomorphism of \(A\)-bimodules
\[ \app{\psi}{\Omega^{1}(A)}{\mathcal{D}(A)_{1}} \]
such that \(\psi\circ d = d_{0}\); that is, \(\psi\) sends \(da = 1\otimes a -
a\otimes 1\in \Omega^{1}(A)\) to \(1\otimes \ol{a}\in A\otimes \bar{A} =
\mathcal{D}(A)_{1}\). Its inverse is given by
\[ a_{0}\otimes \ol{a_{1}}\mapsto a_{0}\otimes a_{1} - a_{0}a_{1}\otimes 1 \]
(it is easy to verify that the right-hand side does not depend on the
particular representative chosen for \(\ol{a_{1}}\)). We conclude that the
\(A\)-bimodule of degree 1 elements in \(\mathcal{D}(A)\) gives another
realization of the K\"ahler differentials for \(A\).

To understand the nature of the complex \(\mathcal{D}(A)\) in higher degrees
let us briefly review the notion of \emph{tensor algebra of a bimodule}. Given
a \(\bb{K}\)-algebra \(A\) and an \(A\)-bimodule \(M\), the \textbf{tensor
  algebra} of \(M\) is defined to be the \(\bb{K}\)-vector space
\begin{equation}
  \label{eq:24}
  \bm{T}_{A}(M)\deq \bigoplus_{i\in \bb{N}} \bm{T}_{A}^{i}(M)
\end{equation}
where \(\bm{T}_{A}^{0}(M)\deq A\), \(\bm{T}_{A}^{1}(M)\deq M\) and, for every
\(i>1\),
\[ \bm{T}_{A}^{i}(M)\deq \underbrace{M\otimes_{A} \dots \otimes_{A}
  M}_{i\text{ times}}. \]
Clearly \(\bm{T}_{A}(M)\) is a \(\bb{K}\)-algebra graded over \(\bb{N}\), by
taking \(\bm{T}_{A}^{i}(M)\) as the homogeneous component of degree \(i\). It
will be useful to interpret the tensor algebra construction as an adjoint
functor; to explain this, however, a little aside is necessary.

In commutative algebra, by an \emph{algebra over a commutative ring} \(R\) it
is usually meant a pair \(\p{A,\eta}\) where \(A\) is a (not necessarily
commutative) ring and \(\app{\eta}{R}{A}\) is a morphism of rings \emph{whose
  image is contained in the center of} \(A\) (see e.g. \cite[p.
121]{Lang2002}). This gives to \(A\) the structure of a left \(R\)-module
(or right \(R\)-module, or symmetric \(R\)-bimodule) by defining \(r.a\deq
\eta(r)a\) and/or \(a.r\deq a\eta(r)\); the two expressions always coincide
precisely because \(\eta(r)\) belongs to the center of \(A\).

When \(R\) is no longer assumed to be commutative it is natural (and in fact
necessary, if we want non-symmetric bimodules) to drop the constraint on the
image of \(\eta\); hence for us an \textbf{algebra over the ring} \(R\) will
be a pair \(\p{A,\eta}\) consisting of a ring \(A\) and a morphism of rings
\(\app{\eta}{R}{A}\). Again, this means that \(A\) is automatically equipped
with the structure of a (not necessarily symmetric) \(R\)-bimodule defined by
\begin{equation}
  \label{eq:25}
  r.a.r' = \eta(r)\, a\, \eta(r') \quad\text{ for every } a\in A,\, r,r'\in R.
\end{equation}
By an unfortunate mismatch in terminology, the category of algebras over \(R\)
(in this ``non-central'' sense) is exactly what category theorists call the
\emph{category of objects under} \(R\) \emph{in} \(\mathbf{Rng}\) (see e.g.
\cite[p. 45]{MacLane1998}); we shall denote it by \(R\downarrow \mathbf{Rng}\)%
\footnote{Another popular notation is \(R/\mathbf{Rng}\). Notice that we keep
  using the notation \(\alg{R}\) for the category of ``usual'' algebras over a
  commutative ring (or field) \(R\).}.

With this confusing point understood, let us return to the tensor algebra of
an \(A\)-bimodule \(M\). Since the map
\begin{equation}
  \label{eq:29}
  \app{\eta}{A}{\bm{T}_{A}(M)}
\end{equation}
defined by sending each \(a\in A\) to the corresponding degree zero tensor is
an (injective) morphism of \(\bb{K}\)-algebras, the tensor algebra
\(\bm{T}_{A}(M)\) is naturally an algebra over \(A\), that is an object of the
category \(A\downarrow \alg{\bb{K}}\). It follows that we can define a functor
\begin{equation}
  \label{eq:26}
  \app{\bm{T}_{A}}{\bimod{A}}{A\downarrow \alg{\bb{K}}}
\end{equation}
by sending each \(A\)-bimodule \(M\) to its tensor algebra and each morphism
of \(A\)-bimodules \(\app{f}{M}{N}\) to the map
\(\app{\bm{T}_{A}(f)}{\bm{T}_{A}(M)}{\bm{T}_{A}(N)}\) defined on decomposable
tensors of degree \(i\) as
\begin{equation}
  \label{eq:28}
  m_{1}\otimes_{A} \dots \otimes_{A} m_{i}\mapsto f(m_{1})\otimes_{A} \dots
  \otimes_{A} f(m_{i})
\end{equation}
(as is well known, tensors of this form generate the whole of
\(\bm{T}_{A}^{i}(M)\)). In the other direction we have the ``partially
forgetful'' functor \(\app{U}{A\downarrow \alg{\bb{K}}}{\bimod{A}}\) that,
given an algebra \(B\) over \(A\), forgets the product in \(B\) but keeps the
\(A\)-bimodule structure. It is easy to check that, for every \(A\)-bimodule
\(M\) and every \(A\)-algebra \(B\), there is a natural isomorphism
\begin{equation}
  \label{eq:27}
  A\downarrow \alg{\bb{K}}(\bm{T}_{A}(M),B) \isom \bimod{A}(M,U(B))
\end{equation}
and this means exactly that the functor \(\bm{T}_{A}\) is left adjoint to
\(U\). Hence the tensor algebra construction \eqref{eq:24} can be seen as the
universal way to ``enhance'' the structure of an \(A\)-bimodule to that of a
full algebra over \(A\). Again, this is exactly analogous to what happens
in the corresponding commutative situation, where the tensor algebra of a
module gives a functor \(\lmod{R}\to \alg{R}\) which is left adjoint to the
forgetful functor \(\alg{R}\to \lmod{R}\).
\begin{lem}
  \label{lem:ext-Alin}
  Let \(M\) be an \(A\)-bimodule and \(B\) be an algebra over \(A\). Every
  morphism of \(A\)-bimodules \(\app{f}{M}{B}\) can be extended in a unique
  manner to a (even or odd) derivation \(\app{\delta_{f}}{\bm{T}_{A}(M)}{B}\)
  such that:
  \begin{enumerate}
  \item the restriction of \(\delta_{f}\) to \(\bm{T}_{A}^{0}(M)\isom A\) is a
    specified map \(A\to B\), and
  \item the restriction of \(\delta_{f}\) to \(\bm{T}_{A}^{1}(M)\isom M\)
    coincides with \(f\).
  \end{enumerate}
\end{lem}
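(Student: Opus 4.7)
The plan is to split the argument into uniqueness and existence, leveraging the fact that the tensor algebra $\bm{T}_A(M)$ is generated as a $\bb{K}$-algebra by the union of its degree-zero and degree-one components, namely $A\cup M$.

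For uniqueness I would invoke (a mild generalization of) Lemma~\ref{lem:der-gener}: the same induction on the number of tensor factors, together with the graded Leibniz rule, shows that two graded derivations of the same degree agreeing on $A$ and on $M$ must coincide on the whole of $\bm{T}_A(M)$, regardless of whether the target is the algebra itself or an arbitrary $A$-algebra $B$. Thus conditions~(1) and~(2) pin down $\delta_f$ at most uniquely.

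For existence I would proceed by induction on the tensor degree. The values on $\bm{T}_A^{0}(M) = A$ and $\bm{T}_A^{1}(M) = M$ are prescribed by~(1) and~(2); on $\bm{T}_A^{n}(M)$ with $n\geq 2$, iterating the graded Leibniz rule on a decomposition $m_1\cdot(m_2\otimes_A\cdots\otimes_A m_n)$ compels the closed formula
\[
\delta_f(m_1\otimes_A\cdots\otimes_A m_n) \deq \sum_{i=1}^{n} (-1)^{(i-1)\ell}\, m_1\cdots m_{i-1}\cdot f(m_i)\cdot m_{i+1}\cdots m_n,
\]
where $\ell$ is the degree of the derivation, the sign is present only in the odd case, and the products on the right are performed in $B$ after mapping each remaining tensor factor through $f$ and the specified map on $A$ (equivalently, via the algebra morphism $\bm{T}_A(M)\to B$ furnished by the adjunction \eqref{eq:27}).

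The single nontrivial step — which I expect to be the main obstacle — is well-definedness: one must check that the prescription above is compatible with the balancing over $A$, namely that substituting $m_i\otimes_A(a.m_{i+1})$ with $(m_i.a)\otimes_A m_{i+1}$ for $a\in A$ leaves the right-hand side unchanged. This reduces to the $A$-bimodule linearity of $f$ and to the compatibility of the specified degree-zero map with the $A$-bimodule structure of $B$, together with the pairwise cancellation of two adjacent summands (those corresponding to indices $i$ and $i+1$ in which $a$ migrates across the derivative). Once well-definedness is settled, verifying the graded Leibniz rule on an arbitrary product of two decomposable tensors is a routine double induction on their degrees, in which the sign $(-1)^{n\ell}$ correctly accounts for moving $\delta_f$ past $n$ tensor factors.
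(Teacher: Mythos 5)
Your skeleton coincides with the paper's: the paper disposes of this lemma with the single remark that a straightforward induction, using the graded Leibniz rule to lower by one the degree of the tensor on which \(\delta_{f}\) operates, yields the extension, and your uniqueness-on-generators argument (the target being a bimodule rather than the algebra itself changes nothing in the induction of lemma \ref{lem:der-gener}) together with the iterated-Leibniz closed formula, sign \((-1)^{(i-1)\ell}\) included, is precisely that induction spelled out.

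However, your gloss on the closed formula contains a genuine error. The spectator factors \(m_{1},\dots,m_{i-1},m_{i+1},\dots,m_{n}\) must be multiplied in \(B\) via the \emph{structural} algebra morphism \(\bm{T}_{A}(M)\to B\) — the one with respect to which the graded Leibniz rule is imposed (in the paper's application \(B = \bm{T}_{A}(M) = \Omega^{\bullet}(A)\) itself and this morphism is the identity) — and the degree-zero coefficients via the \(A\)-algebra structure map \(A\to B\). You instead route the spectators ``through \(f\)'', equivalently through the algebra morphism furnished by the adjunction \eqref{eq:27} applied to \(f\), and the coefficients through ``the specified map'' of condition (1). With that reading all \(n\) summands coincide up to sign: for an odd derivation a degree-two tensor yields \(f(m_{1})f(m_{2}) - f(m_{1})f(m_{2}) = 0\), contradicting the interior product formulas \eqref{eq:43} and \eqref{eq:33}, which are the very maps the lemma is invoked to construct; and since the specified degree-zero map is \emph{zero} in the case of \(i_{\theta}\), routing coefficients through it would annihilate everything. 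Relatedly, your anticipated ``main obstacle'' — pairwise cancellation between the summands \(i\) and \(i+1\) in the well-definedness check — does not occur under the correct reading: because \(f\) and the structure maps are \(A\)-bimodule morphisms, each summand is \emph{individually} invariant when \(a\) migrates across \(\otimes_{A}\), so balancedness is verified term by term, and no such cancellation is available to rescue the formula as you glossed it. Once the bimodule structure is fixed correctly, the rest of your argument (uniqueness via generators, the double induction for the graded Leibniz rule) goes through as the paper intends.
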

The proof boils down to a straightforward induction, where the inductive step
uses the (possibly graded) Leibniz rule to reduce by one the degree of the
tensor on which \(\delta_{f}\) operates.

Consider now the tensor algebra determined by the bimodule of K\"ahler
differentials of \(A\),
\begin{equation}
  \label{eq:30}
  \Omega^{\bullet}(A)\deq \bm{T}_{A}(\Omega^{1}(A)).
\end{equation}
A tensor of degree \(k\) in \(\Omega^{\bullet}(A)\) may be written as a linear
combination of terms of the form
\[ a_{1}.db_{1} \otimes_{A} a_{2}.db_{2} \otimes_{A} \dots \otimes_{A}
a_{k}.db_{k}.a_{k+1} \]
where each \(a_{i}\), \(2\leq i\leq k\) may be freely moved across the tensor
product sign. Clearly this looks quite different from a typical element of
\(\mathcal{D}(A)_{k}\). This notwithstanding, we have the following:
\begin{teo}
  \label{th:iso-Om-D}
  The graded algebras \(\Omega^{\bullet}(A)\) and \(\mathcal{D}(A)\) are
  isomorphic.
\end{teo}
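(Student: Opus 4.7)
The plan is to produce a canonical graded algebra map $\Phi\colon \Omega^{\bullet}(A)\to \mathcal{D}(A)$ by the universal property of the tensor algebra, then to write down an explicit inverse $\Psi$ degree by degree and verify it is multiplicative.

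First I would package the isomorphism $\psi\colon \Omega^{1}(A)\isom \mathcal{D}(A)_{1}$ already obtained in the excerpt (sending $da$ to $1\otimes \ol{a}$) as an $A$-bimodule morphism $\Omega^{1}(A)\to U(\mathcal{D}(A))$, where $\mathcal{D}(A)$ is regarded as an algebra over $A$ through the inclusion $A = \mathcal{D}(A)_{0}\hookrightarrow \mathcal{D}(A)$. The adjunction \eqref{eq:27} then produces a unique morphism of $A$-algebras
\[
\app{\Phi}{\Omega^{\bullet}(A) = \bm{T}_{A}(\Omega^{1}(A))}{\mathcal{D}(A)}
\]
which on decomposable tensors reads $\Phi(a_{0}.da_{1}\otimes_{A}\cdots \otimes_{A} da_{n}) = a_{0}\cdot (1\otimes \ol{a_{1}})\cdots (1\otimes \ol{a_{n}})$, the product being taken inside $\mathcal{D}(A)$. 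Since $\Phi$ is the identity on $A$ in degree zero and sends $\Omega^{1}(A)$ into $\mathcal{D}(A)_{1}$, it preserves degrees and is a morphism of graded algebras.

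Next I would construct a candidate inverse by defining, for each $n$,
\[
\Psi_{n}(a_{0}\otimes \ol{a_{1}}\otimes \cdots \otimes \ol{a_{n}})\deq
a_{0}.da_{1}\otimes_{A} da_{2}\otimes_{A}\cdots \otimes_{A} da_{n}
\in \Omega^{n}(A).
\]
Well-definedness with respect to the representatives $a_{i}$ of the classes $\ol{a_{i}}$ is immediate because $d$ annihilates $\bb{K}\subseteq A$, so $dc = 0$ for $c\in \bb{K}$; the map is also $\bb{K}$-multilinear in the $\ol{a_{i}}$, so it descends to $\mathcal{D}(A)_{n} = A\otimes \bar{A}^{\otimes n}$. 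Summing over $n$ gives a graded $\bb{K}$-linear map $\Psi\colon \mathcal{D}(A)\to \Omega^{\bullet}(A)$.

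The central step is to show that $\Psi$ respects the products \eqref{eq:21}. By $A$-multilinearity it suffices to verify that, in $\Omega^{\bullet}(A)$,
\[
(da_{1}\otimes_{A}\cdots \otimes_{A} da_{n})\cdot a_{n+1} =
(-1)^{n}a_{1}a_{n+1}\otimes_{A}\cdots + \sum_{i=1}^{n}(-1)^{n-i}
da_{1}\otimes_{A}\cdots \otimes_{A} d(a_{i}a_{i+1})\otimes_{A}\cdots,
\]
for a fresh variable $a_{n+1}$, which is obtained by iterating $n$ times the bimodule identity $da\cdot a' = d(aa') - a\cdot da'$ (the Leibniz rule for $d$), the alternating signs accumulating exactly as in \eqref{eq:21}. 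Once this identity is established, the general product formula of $\mathcal{D}(A)$ is matched on the nose by the multiplication in $\bm{T}_{A}(\Omega^{1}(A))$, so $\Psi$ is an algebra morphism.

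Finally, $\Phi\circ \Psi = \id$ and $\Psi\circ \Phi = \id$ can be checked on the generators: both compositions fix $A$ in degree zero, and in degree one they reduce to $\psi$ and $\psi^{-1}$ respectively, as already recorded in the excerpt; by lemma \ref{lem:ext-Alin} (or equivalently by the uniqueness part of the universal property \eqref{eq:27}) this forces the equalities in all degrees. The main obstacle is the sign bookkeeping in the Leibniz computation of the previous paragraph; once that is handled the rest is formal.
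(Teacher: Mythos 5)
Your proposal is correct and takes essentially the same route as the paper: the forward map is the morphism of algebras over \(A\) obtained from \(\psi\) via the adjunction \eqref{eq:27}, and your explicit inverse \(\Psi_{n}\) is literally the paper's inverse, since \(\psi^{-1}(a_{0}\otimes \ol{a_{1}}) = a_{0}.da_{1}\) and \(\psi^{-1}(1\otimes \ol{a_{i}}) = da_{i}\). The only difference is that you spell out the verification (well-definedness of the inverse and its multiplicativity via the iterated Leibniz identity \(da\cdot a' = d(aa') - a.da'\)) that the paper compresses into ``it is clear that the map so defined is an inverse for \(\Psi\)''.
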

\begin{proof}
  Let \(\app{\psi}{\Omega^{1}(A)}{\mathcal{D}(A)_{1}}\) be the \(A\)-bimodule
  isomorphism defined above. By composing \(\psi\) with the embedding
  \(\mathcal{D}(A)_{1}\hookrightarrow \mathcal{D}(A)\) we obtain an
  \(A\)-bimodule map \(\Omega^{1}(A)\to \mathcal{D}(A)\). Using the natural
  isomorphism \eqref{eq:27}, this corresponds to a morphism
  \(\app{\Psi}{\bm{T}_{A}(\Omega^{1}(A))}{\mathcal{D}(A)}\) of algebras over
  \(A\) which is defined on decomposable tensors by
  \[ \Psi(\alpha_{1}\otimes_{A} \dots \otimes_{A} \alpha_{n}) =
  \psi(\alpha_{1})\dots \psi(\alpha_{n}) \]
  To conclude it is sufficient to show that \(\Psi\) is invertible. Its
  inverse can be defined as follows: given a decomposable element
  \(a_{0}\otimes \ol{a_{1}}\otimes \dots \otimes \ol{a_{n}}\) in
  \(\mathcal{D}(A)_{n}\), we write it as the product
  \[ (a_{0}\otimes \ol{a_{1}})(1\otimes \ol{a_{2}}) \dots (1\otimes \ol{a_{n}}) \]
  and send it to
  \[ \psi^{-1}(a_{0}\otimes \ol{a_{1}})\otimes_{A} \psi^{-1}(1\otimes
  \ol{a_{2}})\otimes_{A} \dots \otimes_{A} \psi^{-1}(1\otimes \ol{a_{n}}) \]
  It is clear that the map so defined is an inverse for \(\Psi\).
\end{proof}
We can use the isomorphism \(\Psi\) to transfer the differential \eqref{eq:20}
naturally defined on \(\mathcal{D}(A)\) on the tensor algebra
\(\Omega^{\bullet}(A)\), as the map \(\de\deq \Psi^{-1}\circ d\circ \Psi\).
For instance, a generic element \(\alpha = \sum_{i} a_{i}.db_{i}\) in
\(\Omega^{1}(A)\) corresponds via the map \(\psi\) to \(\sum_{i} a_{i}\otimes
\ol{b_{i}}\) in \(\mathcal{D}(A)_{1}\). Its differential in
\(\mathcal{D}(A)_{2}\) is
\[ \sum_{i} 1\otimes \ol{a_{i}}\otimes \ol{b_{i}} = \sum_{i} (1\otimes
\ol{a_{i}})(1\otimes \ol{b_{i}}) \]
and \(\Psi^{-1}\) maps this element back to
\begin{equation}
  \label{eq:01}
  \de\alpha = \sum_{i} da_{i}\otimes_{A} db_{i}\in \Omega^{1}(A)\otimes_{A}\Omega^{1}(A)
\end{equation}
Notice that we could certainly use lemma \ref{lem:ext-Alin} to directly define
a map \(\app{\de}{\Omega^{\bullet}(A)}{\Omega^{\bullet}(A)}\) which restricts
to the universal derivation \eqref{eq:14} on tensors of degree zero and acts
as in equation \eqref{eq:01} on tensors of degree 1. However, it is then a
non-trivial endeavor to show that \(\de\circ \de = 0\). On the contrary, the
proof of this fact is immediate in the universal differential envelope. Thus,
thanks to theorem \ref{th:iso-Om-D} we can have the best of both worlds.

\subsection{The Karoubi-de Rham complex}
\label{assoc-df}

At this point it would seem natural to interpret the pair
\(\p{\Omega^{\bullet}(A),\de}\) as the complex of differential forms on the
``associative variety'' determined by the algebra \(A\). There are two
problems with this idea. The first one is that the cohomology of this complex
turns out to be (rather trivial and) entirely independent from \(A\), as the
following result shows.
\begin{teo}
  \label{th:coh-Om}
  The cohomology of the complex \(\p{\Omega^{\bullet}(A),\de}\) is given by
  \begin{equation}
    \label{eq:38}
    H^{k}(\Omega^{\bullet}(A)) =
    \begin{cases}
      \bb{K} &\text{ if } k=0\\
      0 &\text{ otherwise.}
    \end{cases}
  \end{equation}
\end{teo}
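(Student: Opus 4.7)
The plan is to use the isomorphism of theorem \ref{th:iso-Om-D} to transfer the problem onto the universal differential envelope, and then exhibit an explicit contracting homotopy for $\p{\mathcal{D}(A),d}$ in positive degree. The differential \eqref{eq:20} has a particularly transparent combinatorial form which makes such a homotopy easy to write down.

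In degree zero, $d_{0}\colon A\to A\otimes\bar A$ sends $a$ to $1\otimes\ol{a}$, so its kernel is $\{a\in A\mid\ol{a}=0\} = \bb{K}\cdot 1$. This gives $H^{0}(\Omega^{\bullet}(A))\isom\bb{K}$ at once.

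For $n\geq 1$ I would proceed as follows. Fix a $\bb{K}$-linear splitting $\sigma\colon\bar A\to A$ of the canonical projection $A\to\bar A$ (such splittings always exist, as $\bar A$ is free as a $\bb{K}$-module) and the complementary projection $\epsilon\colon A\to\bb{K}$, characterised by the identity $a = \epsilon(a)\cdot 1 + \sigma(\ol{a})$. Using this auxiliary data, define
\[
h_{n}\colon\mathcal{D}(A)_{n}\to\mathcal{D}(A)_{n-1},\qquad a_{0}\otimes\ol{a_{1}}\otimes\cdots\otimes\ol{a_{n}}\;\longmapsto\;\epsilon(a_{0})\,\sigma(\ol{a_{1}})\otimes\ol{a_{2}}\otimes\cdots\otimes\ol{a_{n}},
\]
extended $\bb{K}$-linearly. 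Applying \eqref{eq:20} together with the tautological identity $\overline{\sigma(\ol{a})} = \ol{a}$ yields
\[
d_{n-1}h_{n}(a_{0}\otimes\ol{a_{1}}\otimes\cdots\otimes\ol{a_{n}}) = \epsilon(a_{0})\cdot 1\otimes\ol{a_{1}}\otimes\cdots\otimes\ol{a_{n}},
\]
and using $\sigma(\ol{a_{0}}) = a_{0} - \epsilon(a_{0})\cdot 1$ we obtain
\[
h_{n+1}d_{n}(a_{0}\otimes\ol{a_{1}}\otimes\cdots\otimes\ol{a_{n}}) = (a_{0}-\epsilon(a_{0})\cdot 1)\otimes\ol{a_{1}}\otimes\cdots\otimes\ol{a_{n}}.
\]
The sum of the two is $a_{0}\otimes\ol{a_{1}}\otimes\cdots\otimes\ol{a_{n}}$, so $dh+hd = \id$ on $\mathcal{D}(A)_{n}$ for every $n\geq 1$ and the complex is acyclic in positive degree.

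I do not anticipate any genuine obstacle. The homotopy $h$ depends on the non-canonical choice of $\sigma$, but for the purpose of computing cohomology any single contracting homotopy suffices, so this non-naturality is harmless. The only mildly tedious aspect is keeping the index shifts straight in the two displayed computations above, and checking that the calculation of $d_{n-1}h_{n}$ is independent of the ambiguity in the representative $a_{1}$ for $\ol{a_{1}}$ picked inside $\sigma(\ol{a_{1}})$---both points being immediate from the construction.
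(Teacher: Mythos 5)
Your proposal is correct, but it takes a genuinely different route from the paper's in positive degrees. Both arguments begin the same way: transfer the problem to \(\p{\mathcal{D}(A),d}\) via theorem \ref{th:iso-Om-D} and read off \(H^{0} = \ker d_{0} = \bb{K}1\) from \eqref{eq:20}. For \(k>0\), however, the paper argues canonically and without auxiliary choices: it observes that \(d_{n}\) factors through, and surjects onto, the subspace \(\bb{K}1\otimes \bar{A}^{\otimes(n+1)}\subseteq A\otimes\bar{A}^{\otimes(n+1)}\), while \(\ker d_{n+1}\) is \emph{exactly} that subspace (an element is killed by \(d\) precisely when its leftmost tensor factor is a scalar), so the image of \(d_{n}\) equals the kernel of \(d_{n+1}\) on the nose. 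You instead construct an explicit contracting homotopy from a \(\bb{K}\)-linear splitting \(\sigma\) of \(A\to\bar{A}\) --- essentially the standard ``extra degeneracy'' homotopy for bar-type complexes. Your computations check out: \(\overline{\sigma(\ol{a})}=\ol{a}\) gives the first displayed identity, \(\epsilon(1)=1\) (because \(\sigma(\ol{1})=\sigma(0)=0\)) gives the second, and \(dh+hd=\id\) on \(\mathcal{D}(A)_{n}\) for \(n\geq 1\) follows; moreover \(h_{n}\) is well defined simply because \(\epsilon\) and \(\sigma\) are linear, so the map is multilinear in \(\p{a_{0},\ol{a_{1}},\dots,\ol{a_{n}}}\) --- in particular your closing worry about representative ambiguity is vacuous, since \(\sigma(\ol{a_{1}})\) depends only on the class \(\ol{a_{1}}\), not on a representative. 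As for what each approach buys: the paper's kernel--image identification is shorter and choice-free, while your homotopy costs a non-canonical splitting but proves strictly more, namely that \(\mathcal{D}(A)\) is contractible in positive degrees as a complex of vector spaces, and it hands you an explicit primitive \(h_{n}(z)\) for any given cocycle \(z\), which the paper's argument only provides implicitly.
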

\begin{proof}
  It is convenient to work in \(\p{\mathcal{D}(A),d}\). From the expression
  \eqref{eq:20} it is clear that \(H^{0}(\Omega^{\bullet}(A)) = \ker d_{0} =
  \bb{K}1\). On the other hand, for every \(n>1\) the map \(d_{n}\) admits the
  following factorization:
  \begin{equation}
    \label{eq:39}
    \xymatrix{
      & \bb{K}1\otimes \bar{A}^{\otimes (n+1)} \ar@{^{(}->}[dr] &\\
      A\otimes \bar{A}^{\otimes n} \ar@{>>}[ur] \ar[rr]^{d_{n}} & & A\otimes
      \bar{A}^{\otimes (n+1)}}
  \end{equation}
  But \(\ker d_{n+1}\) is exactly \(\bb{K}1\otimes \bar{A}^{\otimes (n+1)}\),
  hence the triviality of \(H^{k}(\Omega^{\bullet}(A))\) for \(k>0\).
\end{proof}
The second drawback is that, even when the algebra \(A\) is commutative,
\(\Omega^{\bullet}(A)\) \emph{does not coincide} with the usual dg-algebra of
differential forms on the corresponding affine variety. Indeed, the algebra
\(\Omega^{\bullet}(A)\) is not graded-commutative: for instance \(\de a\, \de
b \neq -\de b\, \de a\) in general\footnote{Some applications of these
  ``non-standard'' differential forms on commutative algebras can be found in
  \cite{Mueller-Hoissen1997}.}. To recover this property we need to take a
quotient of \(\Omega^{\bullet}(A)\) in which such relations are imposed by
hand.

In order to do this let us recall that the \textbf{graded commutator} on a
graded algebra \(D\) is the map \(\app{\gc{\cdot}{\cdot}}{D\times D}{D}\)
defined on homogeneous elements \(a\in D_{i}\), \(b\in D_{j}\) by
\begin{equation}
  \label{eq:35}
  \gc{a}{b}\deq ab - (-1)^{ij} ba
\end{equation}
and then extended by linearity on the whole of \(D\). Notice that
\(\gc{a}{b}\) coincides with the ordinary commutator \([a,b]\) as soon as at
least one of \(a\) and \(b\) has even degree, whereas for two elements of odd
degree we have \(\gc{a}{b} = ab + ba\) instead. In the sequel the following
compatibility property between derivations and graded commutators, which is
immediate to check, will be rather useful.
\begin{lem}
  \label{lem:grc-der}
  Let \(D\) be a graded algebra and \(\app{\delta}{D}{D}\) be a graded
  derivation. Then \(\delta\) maps a graded commutator in a linear combination
  of graded commutators.
\end{lem}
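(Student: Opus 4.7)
The plan is to reduce to the case of homogeneous elements by bilinearity of the graded commutator and of $\delta$, and then verify the statement by a direct computation that amounts to a graded Leibniz rule at the level of graded commutators.

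Concretely, I would fix homogeneous elements $a\in D_{i}$ and $b\in D_{j}$, and assume $\delta$ has some degree $\ell$. Applying $\delta$ to $\gc{a}{b} = ab - (-1)^{ij}ba$ and using the graded Leibniz rule \eqref{eq:13} twice yields
\begin{equation*}
  \delta(\gc{a}{b}) = \delta(a)\,b + (-1)^{i\ell} a\,\delta(b) - (-1)^{ij}\delta(b)\,a - (-1)^{ij+j\ell}b\,\delta(a).
\end{equation*}
The key observation is then that the two terms not involving $\delta(b)$ assemble into a single graded commutator. Indeed, since $\delta(a)\in D_{i+\ell}$ and $b\in D_{j}$, we have
\begin{equation*}
  \gc{\delta(a)}{b} = \delta(a)\,b - (-1)^{(i+\ell)j} b\,\delta(a) = \delta(a)\,b - (-1)^{ij+j\ell}b\,\delta(a),
\end{equation*}
which is exactly the first and last term in the expansion above. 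Analogously, since $a\in D_{i}$ and $\delta(b)\in D_{j+\ell}$, one checks that
\begin{equation*}
  (-1)^{i\ell}\gc{a}{\delta(b)} = (-1)^{i\ell}a\,\delta(b) - (-1)^{i\ell+i(j+\ell)}\delta(b)\,a = (-1)^{i\ell}a\,\delta(b) - (-1)^{ij}\delta(b)\,a,
\end{equation*}
since $2i\ell$ is even. Combining these two identities produces the graded Leibniz rule for the graded commutator,
\begin{equation*}
  \delta(\gc{a}{b}) = \gc{\delta(a)}{b} + (-1)^{i\ell}\gc{a}{\delta(b)},
\end{equation*}
which in particular proves the claim on homogeneous elements. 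Extending both sides by $\bb{K}$-linearity in $a$ and $b$ completes the proof.

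The computation is purely mechanical; the only potential pitfall is the sign bookkeeping, and in particular the check that $(-1)^{i\ell+i(j+\ell)} = (-1)^{ij}$, which relies on $2i\ell$ being even regardless of parity. I would therefore frame the proof as a one-line display of the Leibniz identity above, with the sign verification presented as a short parenthetical remark.
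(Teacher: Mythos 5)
Your computation is correct and is exactly the ``immediate check'' that the paper alludes to without writing out (it offers no proof of this lemma): on homogeneous elements you establish the graded Leibniz identity \(\delta(\gc{a}{b}) = \gc{\delta(a)}{b} + (-1)^{i\ell}\gc{a}{\delta(b)}\), with all signs verified, and bilinearity of \(\gc{\cdot}{\cdot}\) and linearity of \(\delta\) reduce the general case to this one. No gaps; this is the standard and intended argument.
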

Now let \(A\) be any associative algebra. The \textbf{Karoubi-de Rham complex}
of \(A\) \cite{kar86} is the graded vector space over \(\bb{K}\) given by the
quotient
\begin{equation}
  \label{eq:40}
  \DR^{\bullet}(A)\deq \frac{\Omega^{\bullet}(A)}
  {\gc{\Omega^{\bullet}(A)}{\Omega^{\bullet}(A)}}
\end{equation}
where \(\gc{\Omega^{\bullet}(A)}{\Omega^{\bullet}(A)}\) denotes the linear
subspace in \(\Omega^{\bullet}(A)\) generated by all the elements of the form
\(\gc{a}{b}\) for \(a,b\in \Omega^{\bullet}(A)\). This is indeed a graded
subspace, so that the quotient \eqref{eq:40} makes sense in the category of
graded vector spaces over \(\bb{K}\) (it does not make sense in the category
graded \(\bb{K}\)-algebras, since \(\gc{\Omega^{\bullet}(A)}
{\Omega^{\bullet}(A)}\) is not an ideal). We shall take the elements of
\(\DR^{\bullet}(A)\) as the associative-geometric counterpart of differential
forms.

In general it is not easy to explicitly describe these objects. In degree
zero, however, we have obviously
\begin{equation}
  \label{eq:50}
  \DR^{0}(A) = \frac{A}{\comm{A}{A}}
\end{equation}
where \(\comm{A}{A}\) is the linear subspace spanned by commutators in \(A\).
Classically a 0-form is just a function, so it is natural to regard
\(\DR^{0}(A)\) as the linear space of ``regular functions'' on the associative
variety determined by \(A\).

The homogeneous component \(\DR^{1}(A)\) of the quotient \eqref{eq:40} is
also easy to describe: as the only degree 1 relations in
\(\gc{\Omega^{\bullet}(A)} {\Omega^{\bullet}(A)}\) are of the form \(a\beta -
\beta a\) for some \(a\in A\) and \(\beta\in \Omega^{1}(A)\), we have that%
\footnote{\emph{Cognoscenti} will recognize \(\DR^{0}(A)\) and \(\DR^{1}(A)\)
  as the degree zero Hochschild homology of the \(A\)-bimodules \(A\) and
  \(\Omega^{1}(A)\), respectively.}
\begin{equation}
  \label{eq:51}
  \DR^{1}(A) = \frac{\Omega^{1}(A)}{\comm{A}{\Omega^{1}(A)}}.
\end{equation}
As soon as \(n\geq 2\) things get more complicated: for example \(\DR^{2}(A)\)
is defined by relations coming from both the subspaces
\[ \comm{A}{\Omega^{2}(A)} = \spa_{\bb{K}} \{a\omega - \omega a\}_{a\in
  A,\, \omega\in \Omega^{2}(A)} \]
and
\[ \gc{\Omega^{1}(A)}{\Omega^{1}(A)} = \spa_{\bb{K}}
\{\alpha\beta + \beta\alpha\}_{\alpha,\beta\in \Omega^{1}(A)}. \]
Clearly, without further information on the algebra \(A\) there is little hope
for an explicit description of these quotients.

It follows from lemma \ref{lem:grc-der} that the differential of the complex
\(\Omega^{\bullet}(A)\) maps a graded commutator to a linear combination of
graded commutators, and so descends to a map
\begin{equation}
  \label{eq:de-DR}
  \app{\de}{\DR^{\bullet}(A)}{\DR^{\bullet}(A)}.
\end{equation}
Moreover, this map still obeys the fundamental relation \(\de\circ \de = 0\).
This means that the pair \(\p{\DR^{\bullet}(A), \de}\) qualifies as a
\emph{differential graded vector space}, that is a graded vector space
equipped with a map increasing the degree by one and whose square vanishes. On
the other hand, it does \emph{not} qualify as a dg-algebra, since elements of
\(\DR^{\bullet}(A)\) cannot be meaningfully multiplied. (For the same reason,
the map \eqref{eq:de-DR} is not itself a derivation.)

However, the apparent lack of an ``exterior product'' operation between
associative differential forms is not as serious a problem as it may seem. The
reason is that many constructions involving such products can be performed at
the level of the dg-algebra \(\Omega^{\bullet}(A)\) and then pushed down to
its quotient \(\DR^{\bullet}(A)\). Let us show how this works in practice by
setting up a ``differential calculus'' for associative differential forms,
following \cite[Section 11]{ginzlect}.

As anticipated in \S\ref{K-diff}, the role of \emph{vector fields} will be
played by derivations \(A\to A\). For every vector field \(\theta\in \Der(A)\)
we have the \(A\)-bimodule map \(\app{i_{\theta}}{\Omega^{1}(A)}{A}\) defined
by the equality \eqref{eq:31}. By composing with the embedding
\(A\hookrightarrow \Omega^{\bullet}(A)\) we can see \(i_{\theta}\) as a
morphism of \(A\)-bimodules from \(\Omega^{1}(A)\) to \(\Omega^{\bullet}(A)\).
Then we can use lemma \ref{lem:ext-Alin} to extend this map to a derivation of
degree \(-1\) on \(\Omega^{\bullet}(A)\) (that is, an odd derivation mapping
each \(\Omega^{n}(A)\) to \(\Omega^{n-1}(A)\)) which vanishes on tensors of
degree zero. The resulting map
\begin{equation}
  \label{eq:32}
  \app{i_{\theta}}{\Omega^{\bullet}(A)}{\Omega^{\bullet}(A)}
\end{equation}
will be called the \textbf{interior product} on \(\Omega^{\bullet}(A)\). For
instance its action on a generic degree two elements is
\begin{equation}
  \label{eq:43}
  i_{\theta}(a_{1}\de b_{1} a_{2}\de b_{2} a_{3}) = a_{1}\theta(b_{1}) a_{2}
  \de b_{2} a_{3} - a_{1} \de b_{1} a_{2} \theta(b_{2}) a_{3}
\end{equation}
and one can readily verify that \(i_{\theta}\circ i_{\theta} = 0\). More
generally, we have
\begin{equation}
  \label{eq:33}
  i_{\theta}(\de a_{1}\dots \de a_{n}) = \sum_{i=1}^{n} (-1)^{i-1} \de a_{1}
  \dots \theta(a_{i}) \dots \de a_{n}.
\end{equation}
It follows from lemma \ref{lem:grc-der} that each map \(i_{\theta}\) descends
to a map on the Karoubi-de Rham complex that we still denote in the same way,
\[ \app{i_{\theta}}{\DR^{\bullet}(A)}{\DR^{\bullet}(A)}. \]
We notice that the following equality holds for every pair of derivations
\(\theta,\eta\in \Der(A)\):
\begin{equation}
  \label{eq:33b}
  i_{\theta}\circ i_{\eta} = -i_{\eta}\circ i_{\theta}.
\end{equation}
In particular the natural ``pairing'' map \(\Omega^{1}(A)\times \Der(A)\to A\)
defined by \(\p{\alpha,\theta}\mapsto i_{\theta}(\alpha)\) descends to a
\(\bb{K}\)-linear map
\begin{equation}
  \label{eq:pair-DR}
  \app{\pair{\mph}{\mph}}{\DR^{1}(A)\times \Der(A)}{\DR^{0}(A)}
\end{equation}
representing the operation of contraction between a 1-form and a vector field,
resulting in a regular function.

Now that we have both an exterior differential and an interior product on
\(\Omega^{\bullet}(A)\) it is straightforward to define a companion ``Lie
derivative'' operator using Cartan's formula:
\begin{equation}
  \label{eq:34}
  \mathcal{L}_{\theta}\deq \de\circ i_{\theta} + i_{\theta}\circ \de
\end{equation}
By definition, \(\mathcal{L}_{\theta}\) is a degree \(0\) (hence even)
derivation. Explicitly, it acts as follows:
\begin{equation}
  \label{eq:36}
  \mathcal{L}_{\theta}(a_{0}\de a_{1}\dots \de a_{n}) = \theta(a_{0}) \de
  a_{1}\dots \de a_{n} + \sum_{i=1}^{n} a_{0} \de a_{1} \dots \de
  \theta(a_{i}) \dots \de a_{n}
\end{equation}
Moreover, using lemma \ref{lem:der-gener} one can verify by a direct
calculation on 1-forms (which generate \(\Omega^{\bullet}(A)\) as an algebra)
that the following familiar identities hold:
\begin{gather}
  \label{eq:37b}
  \mathcal{L}_{\theta}\circ i_{\eta} - i_{\eta}\circ \mathcal{L}_{\theta} =
  i_{\comm{\theta}{\eta}}\\
  \label{eq:37a}
  \mathcal{L}_{\theta}\circ \mathcal{L}_{\eta} - \mathcal{L}_{\eta}\circ
  \mathcal{L}_{\theta} = \mathcal{L}_{\comm{\theta}{\eta}}
\end{gather}
By lemma \ref{lem:grc-der} each map \(\mathcal{L}_{\theta}\) descends to the
complex \(\DR^{\bullet}(A)\), where all the previous identities continue to
hold. In particular, the identity \eqref{eq:37b} applied to a 1-form
\(\alpha\in \DR^{1}(A)\) can be interpreted as an analogue of the classical
fact that ``Lie derivatives distribute inside contractions'':
\[ \mathcal{L}_{\theta}(i_{\eta}(\alpha)) =
i_{\eta}(\mathcal{L}_{\theta}(\alpha)) + i_{\comm{\theta}{\eta}}(\alpha) \]
where the commutator \([\theta,\eta]\) is interpreted as the action of
\(\mathcal{L}_{\theta}\) on the derivation \(\eta\).

Contrary to what happens for \(\Omega^{\bullet}(A)\), computing the cohomology
of the complex \(\DR^{\bullet}(A)\) for a given associative algebra \(A\) is
usually a nontrivial problem. The next result is sometimes useful in this
connection. It states that, when the algebra \(A\) itself is graded \emph{in
  positive degrees only}, each cohomology group of \(\DR^{\bullet}(A)\)
depends only on the subalgebra of degree zero elements in \(A\).
\begin{teo}
  \label{th:poinc-lem}
  Suppose the algebra \(A\) is graded over \(\bb{N}\). For every \(k\geq 0\)
  there is an isomorphism
  \begin{equation}
    \label{eq:53}
    H^{k}(\DR^{\bullet}(A_{0}))\isom H^{k}(\DR^{\bullet}(A)).
  \end{equation}
\end{teo}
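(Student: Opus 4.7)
The plan is to exploit the grading on $A$ to define an ``Euler derivation'' on $A$, whose induced Lie derivative on $\DR^{\bullet}(A)$ acts as multiplication by a weight. A Cartan-formula argument will then kill all positive-weight cohomology, leaving only $\DR^{\bullet}(A_{0})$.

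First I would define $E\in \Der(A)$ by setting $E(a)\deq na$ for $a\in A_{n}$; that $E$ is a derivation is immediate from $A_{i}A_{j}\subseteq A_{i+j}$. Next I would observe that the grading of $A$ induces an extra $\bb{N}$-grading (the \emph{weight}) on $\Omega^{\bullet}(A)$ as a $\bb{K}$-vector space, by declaring the weight of $a_{0}\de a_{1}\dots \de a_{n}$ to be $k_{0}+k_{1}+\dots+k_{n}$ when $a_{i}\in A_{k_{i}}$. This weight is preserved by the product in $\Omega^{\bullet}(A)$, by the differential $\de$ (which adds no weight), and consequently by the graded commutator. Hence the subspace $\gc{\Omega^{\bullet}(A)}{\Omega^{\bullet}(A)}$ is weight-homogeneous and the weight grading descends to a direct sum decomposition
\[ \DR^{\bullet}(A) = \bigoplus_{w\in \bb{N}} \DR^{\bullet}(A)^{(w)} \]
of differential graded vector spaces.

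The crucial step is to compute $\mathcal{L}_{E}$ on each summand. Using the explicit formula \eqref{eq:36} one finds
\[ \mathcal{L}_{E}(a_{0}\de a_{1}\dots \de a_{n}) = (k_{0}+k_{1}+\dots+k_{n})\, a_{0}\de a_{1}\dots \de a_{n}, \]
so $\mathcal{L}_{E}$ acts as multiplication by $w$ on $\DR^{\bullet}(A)^{(w)}$. Applying Cartan's formula \eqref{eq:34}, for $w>0$ the operator $\tfrac{1}{w}\, i_{E}$ is a chain homotopy between the identity and zero on $\DR^{\bullet}(A)^{(w)}$, so $H^{k}(\DR^{\bullet}(A)^{(w)}) = 0$ for every $k\geq 0$ and every $w>0$.

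Finally, because the grading of $A$ is over $\bb{N}$, a monomial has weight zero if and only if every $a_{i}$ belongs to $A_{0}$; hence the weight-zero subcomplex of $\Omega^{\bullet}(A)$ coincides with $\Omega^{\bullet}(A_{0})$ sitting inside $\Omega^{\bullet}(A)$, and passing to the quotient by graded commutators yields $\DR^{\bullet}(A)^{(0)}\isom \DR^{\bullet}(A_{0})$. Combining the two observations gives the claimed isomorphism. The main subtlety I expect is the bookkeeping around the weight grading descending to $\DR^{\bullet}(A)$—one needs the graded commutator subspace to be weight-homogeneous, which follows from weight-additivity of the product but should be stated carefully, especially since $\DR^{\bullet}(A)$ is only a differential graded vector space and not a dg-algebra.
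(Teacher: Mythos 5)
Your proof is correct and is precisely the argument the paper points to: the text defers to Ginzburg's Theorem 11.4.7, whose proof is exactly this formal Poincar\'e-lemma homotopy via the Euler derivation, with $\mathcal{L}_{E}$ acting as multiplication by the weight and $\tfrac{1}{w}\,i_{E}$ (using characteristic zero) contracting the positive-weight summands, while the weight-zero part of $\DR^{\bullet}(A)$ is identified with $\DR^{\bullet}(A_{0})$. Your care about the weight-homogeneity of the graded commutator subspace, and hence the descent of the weight decomposition to the differential graded vector space $\DR^{\bullet}(A)$, is exactly the right bookkeeping.
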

The proof closely mimics the usual argument leading to the Poincaré lemma for
ordinary de Rham cohomology; the reader may find the details in
\cite{ginzlect} (Theorem 11.4.7).

\subsection{Associative affine space}
\label{nc-aff-sp}

As the first (and simplest) example of an associative variety we consider the
\emph{associative affine} \(n\)\emph{-dimensional space}, that is the
associative space which corresponds to the free associative algebra on \(n\)
generators:
\begin{equation}
  \label{eq:67}
  A = \falg{\bb{K}}{x_{1}, \dots, x_{n}}.
\end{equation}
To deal with this case it is useful to adopt the following ``coordinate-free''
approach. Let \(V\) be a \(n\)-dimensional vector space with basis \(\p{e_{1},
  \dots, e_{n}}\) and let \(\p{x_{1}, \dots, x_{n}}\) denote a basis of the
dual space \(V^{*}\). Then the tensor algebra of \(V^{*}\) (over \(\bb{K}\))
\[ \bm{T}(V^{*}) = \bb{K}\oplus V^{*}\oplus (V^{*}\otimes V^{*})\oplus
\dots \]
is isomorphic to \(A\), as the reader can easily check; the tensor product is
simply the concatenation of words in the letters \(x_{1}, \dots, x_{n}\). This
point of view is useful because the resulting formalism is automatically
invariant under every \emph{affine} automorphism of the algebra \(A\)%
\footnote{It goes without saying that free associative algebras have many more
  automorphisms other than affine ones; their description is a classic (and
  difficult) problem in associative algebra.}.

Now let \(M\) be an \(A\)-bimodule. Every \(\bb{K}\)-linear map \(V^{*}\to M\)
can be extended to a derivation \(A\to M\) using the Leibniz rule, and every
element of \(\Der(A,M)\) arises in this way (because the generators of \(A\)
belong to \(V^{*}\)). On the other hand, the duality theorem for
finite-dimensional vector spaces implies that the space of linear maps
\(V^{*}\to M\) is canonically isomorphic to \(M\otimes V\). We conclude that
\begin{equation}
  \label{eq:67a}
  \Der(A,M)\isom M\otimes V
\end{equation}
and \(\Omega^{1}(A)\) is just the free \(A\)-bimodule generated by \(V^{*}\),
\begin{equation}
  \label{eq:67b}
  \Omega^{1}(A)\isom A\otimes V^{*}\otimes A.
\end{equation}
The pairing map \(\Omega^{1}(A)\times \Der(A)\to A\) defined by
\(\p{\alpha,\theta}\mapsto i_{\theta}(\alpha)\) is then given by
\begin{equation}
  \label{eq:pair-falg}
  \pair{a\otimes \varphi\otimes b}{c\otimes v} = \pair{\varphi}{v}_{V}\, acb
\end{equation}
where \(\pair{\cdot}{\cdot}_{V}\) denotes the pairing between \(V\) and
\(V^{*}\).

Notice that \(\Omega^{1}(A)\) is indeed isomorphic to \(A\otimes \bar{A}\), as
per general results, since
\[ V^{*}\otimes A = V^{*}\otimes (\bb{K}\oplus V^{*}\oplus V^{*\otimes
  2}\oplus \dots) \isom \bigoplus_{i>0} V^{*\otimes i} \isom
\bm{T}(V^{*})/\bb{K} = \bar{A}. \]
It follows that, for every \(p\geq 1\),
\[ \Omega^{p}(A)\isom A\otimes \bar{A}^{\otimes p} \isom A\otimes
(V^{*}\otimes A)^{\otimes p}. \]
Now let us study the Karoubi-de Rham complex of \(A\) starting from its
component of degree zero,
\[ \DR^{0}(A) = \frac{A}{\comm{A}{A}}. \]
It is not difficult to prove that two words in \(A\) differ by a commutator if
and only if their letters are related by a cyclic permutation. It follows that
\(\DR^{0}(A)\) can be identified with the \(\bb{K}\)-linear space generated by
the \textbf{necklace words} in the letters \(x_{1}, \dots, x_{n}\), that is
ordinary words considered modulo cyclic permutations of their letters. These
are well known combinatorial objects (see e.g. \cite[Chapter 15]{cam94}).

In degree 1, quotienting the free bimodule \eqref{eq:67b} by the linear
subspace \(\comm{A}{\Omega^{1}(A)}\) implies that we can always move an
element of \(A\) acting from the right to the left, as
\[ a\otimes \varphi\otimes b = ba\otimes \varphi\otimes 1 + \comm{a\otimes
  \varphi\otimes 1}{b} \]
and the second term is killed by the projection onto \(\DR^{1}(A)\). It
follows that
\[ \DR^{1}(A)\isom A\otimes V^{*} \]
as a \(\bb{K}\)-linear space. In particular for every \(\varphi\in V^{*}\)
there is the 1-form \(\de\varphi = 1\otimes \varphi\in \DR^{1}(A)\). The
pairing \eqref{eq:pair-falg} becomes
\[ \pair{a\otimes \varphi}{c\otimes v} = \pair{\varphi}{v}_{V}\, ac \mod [A,A] \]
Unfortunately, even in this very special case there is no easy description of
a generic associative \(p\)-form for \(p\geq 2\). On the other hand, the
cohomology of the complex \(\DR^{\bullet}(A)\) is readily computed, as first
shown by Kontsevich in \cite{kont93}.
\begin{teo}
  \label{th:coh-sp-aff}
  Let \(A\) be a free algebra. Then
  \begin{equation}
    \label{eq:73}
    H^{k}(\DR^{\bullet}(A)) =
    \begin{cases}
      \bb{K} &\text{ for } k=0\\
      0 &\text{ for } k\geq 1
    \end{cases}
  \end{equation}
\end{teo}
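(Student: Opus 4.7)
The plan is to reduce immediately to Theorem \ref{th:poinc-lem}. The free associative algebra $A = \falg{\bb{K}}{x_{1},\dots,x_{n}}$ carries a natural $\bb{N}$-grading in which each generator $x_{i}$ sits in degree $1$, so that $A_{d}$ is spanned by the words of length $d$. In particular this grading has only non-negative degrees and $A_{0} = \bb{K}$, so the hypothesis of Theorem \ref{th:poinc-lem} is satisfied. Hence for every $k \geq 0$ there is a canonical isomorphism
\[ H^{k}(\DR^{\bullet}(A)) \isom H^{k}(\DR^{\bullet}(\bb{K})). \]

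The next step is to compute the right-hand side directly. First I would observe that any derivation $\delta\colon \bb{K} \to M$ into any $\bb{K}$-bimodule vanishes: applying \eqref{eq:06} to $1 = 1\cdot 1$ gives $\delta(1) = 2\delta(1)$, hence $\delta(1) = 0$, and then $\delta(\lambda) = \lambda\,\delta(1) = 0$ for every $\lambda \in \bb{K}$ by $\bb{K}$-linearity. By the universal property \eqref{eq:14b} this forces $\Omega^{1}(\bb{K}) = 0$, and consequently $\Omega^{p}(\bb{K}) = \bm{T}_{\bb{K}}^{p}(\Omega^{1}(\bb{K})) = 0$ for all $p \geq 1$; alternatively one reads this off immediately from the Cuntz-Quillen description \eqref{eq:19}, since $\bar{\bb{K}} = 0$. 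Passing to the quotient \eqref{eq:40} we obtain $\DR^{p}(\bb{K}) = 0$ for $p \geq 1$, while $\DR^{0}(\bb{K}) = \bb{K}/\comm{\bb{K}}{\bb{K}} = \bb{K}$ since $\bb{K}$ is commutative. The differential is necessarily zero, so
\[ H^{0}(\DR^{\bullet}(\bb{K})) = \bb{K}, \qquad H^{k}(\DR^{\bullet}(\bb{K})) = 0 \text{ for } k \geq 1, \]
and combining this with the isomorphism above yields the claim.

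I do not expect any genuine obstacle: the entire content of the theorem has been absorbed into Theorem \ref{th:poinc-lem}, whose proof (mimicking the usual Poincaré lemma via a chain homotopy built from the Euler-type operator associated with the grading) is the only serious work. The one point worth double-checking in a final write-up is that the standard word-length grading on $\falg{\bb{K}}{x_{1},\dots,x_{n}}$ is indeed the kind of $\bb{N}$-grading required (only non-negative degrees, with $A_{0} = \bb{K}$); this is clear from the definition of the free algebra as $\bm{T}(V^{*})$ with the tensor-degree grading.
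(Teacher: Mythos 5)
Your proposal is correct and takes essentially the same route as the paper: the paper likewise deduces the theorem immediately from Theorem \ref{th:poinc-lem}, observing that \(A = \bm{T}(V^{*})\) is positively graded with degree zero part \(\bb{K}\), so that \(H^{k}(\DR^{\bullet}(A))\isom H^{k}(\DR^{\bullet}(\bb{K}))\), which is \(\bb{K}\) for \(k=0\) and vanishes for \(k>0\). Your explicit check that \(\Omega^{1}(\bb{K})=0\) (hence \(\DR^{p}(\bb{K})=0\) for \(p\geq 1\)) merely spells out a step the paper leaves implicit, and it is accurate.
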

This is an immediate consequence of theorem \ref{th:poinc-lem}. In fact \(A =
\bm{T}(V^{*})\) is exactly a positively graded algebra whose degree zero part
is \(\bb{K}\), hence there is an isomorphism 
\[ H^{k}(\DR^{\bullet}(A))\isom H^{k}(\DR^{\bullet}(\bb{K})) \]
and the right-hand side is trivial for \(k>0\) and equal to \(\bb{K}\) for
\(k=0\).

We also have a notion of ``partial derivative'' of a regular function along a
direction in the dual vector space \(V\). Indeed, for every \(v\in V\) we can
define a map \(\app{\partial_{v}}{\DR^{0}(A)}{A}\) by
\begin{equation}
  \label{eq:75}
  \partial_{v}f\deq \pair{\de f}{1\otimes v}.
\end{equation}
In particular when \(v = e_{j}\) is an element of the basis \(\p{e_{1}, \dots,
  e_{n}}\) the resulting map is called the \textbf{necklace derivative} with
respect to the generator \(x_{j}\). It is natural to denote this map by
\(\frac{\partial}{\partial x_{j}}\), as
\[ \frac{\partial}{\partial x_{j}} x_{i} = \pair{1\otimes x_{i}}{1\otimes e_{j}} =
\delta_{ij} \]
where \(\delta_{ij}\) is the Kronecker symbol (\(\delta_{ij}=1\) when \(i=j\),
\(0\) otherwise).

More generally, given a necklace word \(f\in \DR^{0}(A)\) represented by the
(ordinary) word \(x_{i_{1}}\dots x_{i_{\ell}}\) for a suitable set of indices
\(i_{1}, \dots, i_{\ell}\in \{1,\dots,n\}\), we have that
\[ \de f = \sum_{k=1}^{\ell} x_{i_{1}}\dots x_{i_{k-1}} \de x_{i_{k}}
x_{i_{k+1}}\dots x_{i_{\ell}} = \sum_{k=1}^{\ell} x_{i_{k+1}}\dots
x_{i_{\ell}} x_{i_{1}}\dots x_{i_{k-1}} \de x_{i_{k}} \]
where the second equality holds in \(\DR^{1}(A)\). It follows that
\begin{equation}
  \label{eq:77}
  \frac{\partial}{\partial x_{j}} f = \pair{\sum_{k=1}^{\ell}
    x_{i_{k+1}}\dots x_{i_{\ell}} x_{i_{1}}\dots x_{i_{k-1}} \otimes
    x_{i_{k}}}{1\otimes e_{j}} = \sum_{k=1}^{\ell} \delta_{i_{k}j}
  x_{i_{k+1}}\dots x_{i_{\ell}} x_{1}\dots x_{i_{k-1}}.
\end{equation}
It is easy to check that this result does not depend on the particular
representative chosen for \(f\).

Finally let us derive an analogue of the usual formula for the differential of
a function in terms of partial derivatives. Given \(f\in \DR^{0}(A)\) we have
\(\de f = 1\otimes \bar{f}\) and since \(\bar{f}\in \bar{A}\isom V^{*}\otimes
A\) we can write
\[ \bar{f} = \sum_{i=1}^{n} x_{i}\otimes a_{i} \qquad\text{ for some } a_{1},
\dots, a_{n}\in A. \]
Then \(\de f  = \sum_{i=1}^{n} 1\otimes x_{i}\otimes a_{i}\in \Omega^{1}(A)\),
which projects down to
\begin{equation}
  \label{eq:70}
  \de f = \sum_{i=1}^{n} a_{i}\otimes x_{i}
\end{equation}
in \(\DR^{1}(A)\). Substituting into \eqref{eq:75} with \(v=e_{i}\) we see
that the coefficients \(a_{i}\) are exactly the necklace derivatives
\(\frac{\partial f}{\partial x_{i}}\), so that
\begin{equation}
  \label{eq:71}
  \de f = \sum_{i=1}^{n} \frac{\partial f}{\partial x_{i}}\otimes x_{i}
  \qquad \text{ for every } f\in \DR^{0}(A).
\end{equation}

\subsection{The Quillen complex}
\label{quillen}

Let us consider the map \(\Omega^{1}(A)\to [A,A]\) given by
\[ a_{0}\de a_{1}\mapsto [a_{0},a_{1}]. \]
This is well-defined because if \(a_{1}' = a_{1} + \lambda\) for some
\(\lambda\in \bb{K}\) then \([a_{0},a_{1}'] = [a_{0},a_{1}]\) (as \(\bb{K}\)
is contained in the center of \(A\)). Moreover, given \(a\in A\) and \(\beta =
b_{0}\de b_{1}\in \Omega^{1}(A)\) we have that the commutator
\[ a\beta - \beta a = ab_{0}\de b_{1} - (b_{0} \de b_{1})a = ab_{0}\de b_{1} +
b_{0}b_{1}\de a - b_{0}\de (b_{1}a) \]
is sent to
\[ [ab_{0},b_{1}] + [b_{0}b_{1},a] - [b_{0},b_{1}a] = ab_{0}b_{1} -
b_{1}ab_{0} + b_{0}b_{1}a - ab_{0}b_{1} - b_{0}b_{1}a + b_{1}ab_{0} = 0. \]
We conclude that there is a well-defined map
\begin{equation}
  \label{eq:65}
  \app{b}{\DR^{1}(A)}{[A,A]}
\end{equation}
given by \(u\de v\mapsto [u,v]\). It is easy to check that
\begin{equation}
  \label{eq:db-zero}
  b\circ \de = \de \circ b = 0.
\end{equation}
If we define
\[ \ol{\DR}^{0}(A)\deq \frac{A}{\bb{K} + [A,A]}\isom \frac{\bar{A}}{[A,A]} \]
we can consider the sequence
\begin{equation}
  \label{eq:69}
  \xymatrix{
    0\ar[r] & \ol{\DR}^{0}(A)\ar[r]^-{\de} & \DR^{1}(A) \ar[r]^-{b} &
    [A,A]\ar[r] & 0}.
\end{equation}
By virtue of \eqref{eq:db-zero} this sequence is also a complex; it is called
the \textbf{Quillen complex}.
\begin{lem}
  \label{lem:Afree-Qex}
  Suppose the algebra \(A\) is free. Then the complex \eqref{eq:69} is exact.
\end{lem}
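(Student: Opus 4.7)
The plan is to dispatch the easy exactness statements (injectivity of \(\de\), surjectivity of \(b\), and \(\operatorname{im}\de\subseteq \ker b\)) with a few lines, and then tackle the only genuinely non-trivial assertion \(\ker b\subseteq \operatorname{im}\de\) by a graded dimension count. Surjectivity of \(b\) is immediate: every generator \([u,v]\) of \([A,A]\) equals \(b(u.\de v)\). For injectivity of \(\de\) on \(\ol{\DR}^{0}(A)\) I would invoke theorem \ref{th:coh-sp-aff}: the kernel of \(\de\colon \DR^{0}(A)\to \DR^{1}(A)\) is exactly \(\bb{K}\cdot 1\), and \(\ol{\DR}^{0}(A)\) is by construction \(\DR^{0}(A)/\bb{K}\). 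The containment \(\operatorname{im}\de\subseteq \ker b\) is just \eqref{eq:db-zero}.

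For \(\ker b\subseteq \operatorname{im}\de\) the idea is to exploit the \(\bb{N}\)-grading of \(A = \bm{T}(V^{*})\) by word length, which descends to a grading on each term of \eqref{eq:69}. Both \(\de\) and \(b\) are homogeneous of degree zero (the former because \(\de x_{i}\) has degree one, the latter because \([u,v]\) lies in \(A_{|u|+|v|}\)), and crucially each homogeneous component is finite-dimensional. Setting \(c_{k}\deq \dim \DR^{0}_{k}(A)\), the defining short exact sequence \(0\to [A,A]_{k}\to A_{k}\to \DR^{0}_{k}\to 0\) gives \(\dim [A,A]_{k} = n^{k}-c_{k}\); the identification \(\DR^{1}(A)\isom A\otimes V^{*}\) of \S\ref{nc-aff-sp} gives \(\dim \DR^{1}_{k} = n^{k}\) for \(k\geq 1\); and \(\dim \ol{\DR}^{0}_{k} = c_{k}\) for \(k\geq 1\) (everything vanishes in degree zero).

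Putting the pieces together, in every degree \(k\geq 1\) one finds
\[ \dim \ker b_{k} = \dim \DR^{1}_{k} - \dim [A,A]_{k} = n^{k} - (n^{k}-c_{k}) = c_{k} = \dim \operatorname{im}\de_{k}, \]
the last equality by injectivity of \(\de\). Combined with \(\operatorname{im}\de_{k}\subseteq \ker b_{k}\) this forces equality. I do not expect any serious obstacle along the way, since once the gradings are in place everything reduces to finite-dimensional linear algebra; the only place where freeness of \(A\) is used in a non-negotiable way is theorem \ref{th:coh-sp-aff}, and the rest is an Euler-characteristic bookkeeping that happens to close up exactly for \(\bm{T}(V^{*})\).
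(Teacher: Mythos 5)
Your proof is correct, but it is not the argument the paper relies on: the paper gives no proof at all and defers to Ginzburg's Lemma 11.5.3 in \cite{ginzlect}, where exactness is established by identifying the weight-\(m\) homogeneous piece of the Quillen complex with the cyclic-group sequence on \(V^{\otimes m}\) --- the differential \(\de\) becomes the norm operator \(N = 1 + \sigma + \dots + \sigma^{m-1}\) (summing over cyclic rotations of a necklace representative) and \(b\) becomes \(1-\sigma\), so that exactness at \(\DR^{1}\) reduces to the characteristic-zero fact that \(\operatorname{im} N = \ker(1-\sigma)\), since \(\frac{1}{m}N\) is the projector onto cyclic invariants. Your route replaces this concrete identification of the maps with an Euler-characteristic count, outsourcing the one nontrivial input (injectivity of \(\de\) on \(\ol{\DR}^{0}(A)\)) to theorem \ref{th:coh-sp-aff}; this is legitimate and non-circular, since that theorem rests on theorem \ref{th:poinc-lem}, whose homotopy-operator proof is independent of the Quillen complex. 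The bookkeeping itself is sound: \([A,A]\) and \(\comm{A}{\Omega^{1}(A)}\) are graded subspaces, \(\de\) and \(b\) are homogeneous of degree zero in your conventions, and degreewise surjectivity of \(b\) (needed for \(\dim\ker b_{k} = n^{k} - \dim[A,A]_{k}\)) follows from homogeneity of the commutators spanning \([A,A]_{k}\). What each approach buys: yours is shorter given the paper's stated toolkit, while Ginzburg's is self-contained, exhibits the kernel of \(b\) explicitly as cyclic invariants, and does not need finite-dimensional graded pieces. That last point is the one hypothesis you should flag: your count requires \(\dim V = n < \infty\), which matches the paper's setting in \S\ref{nc-aff-sp} (and the statement of theorem \ref{th:prim}), and in any case the general free algebra reduces to this one because each element involves only finitely many generators and all three terms of \eqref{eq:69} commute with the corresponding direct limits.
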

This result is proved for instance in \cite[Lemma 11.5.3]{ginzlect} and has
the following important consequence. Let \(\alpha\in \DR^{1}(A)\) be a 1-form
on an associative affine space, and write \(\alpha = \sum_{i=1}^{k} a_{i}\de
x_{i}\). Clearly
\[ b(\alpha) = \sum_{i=1}^{k} [a_{i},x_{i}] \]
and lemma \ref{lem:Afree-Qex} implies that \(\alpha\) is exact if and only if
the right-hand side vanishes. On the other hand, \(\alpha\) is exact if and
only if there exists a (non-constant) necklace word \(f\in \ol{\DR}^{0}(A)\)
such that \(\alpha = \de f\), in which case, as we saw above, \(a_{i}\) is
just the necklace derivative \(\frac{\partial f}{\partial x_{i}}\). Putting
all together, we obtain:
\begin{teo}
  \label{th:prim}
  Let \(A\) be a free algebra and \(\{g_{1}, \dots, g_{k}\}\) be a subset of a
  set of generators for \(A\). There exist words \(u_{1}, \dots, u_{k}\in A\)
  such that
  \[ \sum_{i=1}^{k} [u_{i},g_{i}] = 0 \]
  if and only if there exists \(f\in \ol{\DR}^{0}(A)\) such that \(u_{i}
  = \frac{\partial f}{\partial x_{i}}\) for every \(i\in \{1,\dots,k\}\).
\end{teo}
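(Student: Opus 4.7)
The strategy is to translate the statement into a question about the Quillen complex and then invoke its exactness (Lemma \ref{lem:Afree-Qex}), combined with the explicit formula \eqref{eq:71} that identifies the coefficients of $\de f$ with the necklace derivatives of $f$.

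First I would consider the associative 1-form $\alpha \deq \sum_{i=1}^{k} u_{i}\, \de g_{i} \in \DR^{1}(A)$ determined by the given data. By the very definition of the Quillen differential $b\colon \DR^{1}(A)\to [A,A]$ as $u\, \de v\mapsto [u,v]$, one has $b(\alpha) = \sum_{i=1}^{k} [u_{i},g_{i}]$. Hence the hypothesis $\sum_{i=1}^{k}[u_{i},g_{i}]=0$ is \emph{equivalent} to $\alpha\in \ker b$.

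For the forward implication I would then invoke Lemma \ref{lem:Afree-Qex}: since $A$ is free, the Quillen complex \eqref{eq:69} is exact, so $\ker b = \mathrm{image}(\de)$, and we can write $\alpha = \de f$ for some $f\in \ol{\DR}^{0}(A)$. To recognize $u_{i}$ as a necklace derivative of $f$, I would extend $\{g_{1},\dots,g_{k}\}$ to a full generating set $\{g_{1},\dots,g_{n}\}$, identify $\DR^{1}(A)\isom A\otimes V^{*}$ as in \S\ref{nc-aff-sp}, and apply the coordinate formula \eqref{eq:71}, which reads $\de f = \sum_{j=1}^{n} \frac{\partial f}{\partial g_{j}}\, \de g_{j}$. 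Uniqueness of the decomposition in the free $A$-bimodule $A\otimes V^{*}$ then forces $u_{i} = \frac{\partial f}{\partial g_{i}}$ for $i\leq k$ (with the bonus information that $\frac{\partial f}{\partial g_{j}}=0$ for $j>k$). The converse is immediate: if the $u_{i}$ arise as necklace derivatives of some $f$ in this way, then $\alpha = \de f$, and applying $b$ together with the identity $b\circ \de = 0$ from \eqref{eq:db-zero} gives $\sum_{i=1}^{k}[u_{i},g_{i}] = b(\de f) = 0$.

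The main substantive input here is the exactness of the Quillen complex for free algebras, which we are allowed to cite as Lemma \ref{lem:Afree-Qex}; once that is granted, everything else is a routine unwinding of definitions together with the explicit description \eqref{eq:71} of the exterior derivative on an associative affine space. The one minor subtlety to watch out for is the uniqueness of the decomposition $\DR^{1}(A)\isom A\otimes V^{*}$, which is what allows us to read off the necklace derivatives from the expression $\alpha = \sum_{i=1}^{k} u_{i}\, \de g_{i}$ without interference from the other generators.
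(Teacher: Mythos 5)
Your proof is correct and takes essentially the same route as the paper: the paper also forms the 1-form $\alpha = \sum_{i} u_{i}\,\de g_{i}$, notes that $b(\alpha) = \sum_{i}\comm{u_{i}}{g_{i}}$, invokes the exactness of the Quillen complex (lemma \ref{lem:Afree-Qex}) to conclude that $\alpha$ is exact precisely when this sum vanishes, and identifies the coefficients of $\de f$ with the necklace derivatives via \eqref{eq:71}. Your explicit appeal to the uniqueness of the decomposition in the free bimodule $A\otimes V^{*}$ merely spells out a point the paper leaves implicit.
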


\subsection{Relative differential forms}
\label{rel-df}

In order to show other interesting examples of associative varieties it is
necessary to generalize slightly the differential calculus set up in
\S\ref{assoc-df} by developing the notion of differential forms \emph{relative
  to a subalgebra}, which was also introduced by Cuntz and Quillen in
\cite{cq95}.

Let us assume that the associative algebra \(A\) contains a commutative
subalgebra \(B\) that we interpret as an enlarged subspace of ``scalars''.
Then it is natural to require for a derivation defined on \(A\) to vanish not
only on \(\bb{K}\) but on the whole of \(B\); that is, given an \(A\)-bimodule
\(M\) we should consider the set
\[ \Der_{B}(A,M)\deq \set{\theta\in \Der(A,M) | \theta(b)=0 \text{ for all }
  b\in B}. \]
This defines a subfunctor of the functor \eqref{eq:07},
\[ \app{\Der_{B}(A,\mph)}{\bimod{A}}{\mathbf{Set}}. \]
It turns out that this subfunctor also has a universal element: there exists a
pair \(\p{\Omega^{1}(A/B),d}\) consisting of an \(A\)-bimodule
\(\Omega^{1}(A/B)\), whose elements will be called the \textbf{K\"ahler
  differentials of} \(A\) \textbf{relative to} \(B\), and a derivation \(d\in
\Der_{B}(A,\Omega^{1}(A/B))\) vanishing on \(B\), such that for every other
pair \(\p{M,\theta}\) of this kind there exists a unique \(A\)-bimodule
morphism \(\app{f_{\theta}}{\Omega^{1}(A/B)}{M}\) such that \(f_{\theta}\circ
d = \theta\).

Let us consider the tensor algebra over \(A\) of this bimodule,
\[ \Omega^{\bullet}(A/B)\deq \bm{T}_{A}(\Omega^{1}(A/B)). \]
To equip this (graded) algebra with a differential we need again to identify
it with a suitably ``relativized'' version of the universal differential
envelope introduced in \S\ref{univ-dga}. Namely, we consider the category
\(\dgalg{B}\) having as objects the differential graded algebras over the
commutative algebra \(B\) and as arrows the dg-algebra morphisms
\(\app{f}{\p{D,d}}{\p{E,d'}}\) such that \(\rist{f}{B} = \id_{B}\). We have a
functor
\[ \app{(\mph)_{0}}{\dgalg{B}}{\alg{B}} \]
sending a dg-algebra \(\p{D,d}\) over \(B\) to its degree zero part, which is
an algebra over \(B\); the functor \eqref{eq:17} corresponds to the case \(B =
\bb{K}\). It turns out that also in this more general setting the functor
\((\mph)_{0}\) has a left adjoint: for every \(B\)-algebra \(A\) there exists
a pair \(\p{\mathcal{D}(A/B), i}\) consisting of a dg-algebra
\(\mathcal{D}(A/B)\) over \(B\) and a \(B\)-algebra morphism
\(\app{i}{A}{\mathcal{D}(A/B)_{0}}\) such that for every other pair
\(\p{\Gamma,\psi}\) of this kind there exists a unique morphism of dg-algebras
\(\app{u_{\psi}}{\mathcal{D}(A/B)}{\Gamma}\) that makes the diagram
\begin{equation}
  \label{eq:78}
  \xymatrix{
    A\ar[r]^-{i} \ar[dr]_{\psi} & \mathcal{D}(A/B)_{0}\ar[d]^{u_{\psi}}\\
    & \Gamma_{0}}
\end{equation}
commute in \(\alg{B}\). The algebra \(\mathcal{D}(A/B)\) can be defined in a
way that closely parallels the construction of \(\mathcal{D}(A)\),
\[ \mathcal{D}(A/B)\deq \bigoplus_{n\in \bb{N}} \mathcal{D}(A/B)_{n}
\quad\text{ with }\quad \mathcal{D}(A/B)_{n}\deq A\otimes_{B}
\underbrace{\bar{A}\otimes_{B} \dots \otimes_{B} \bar{A}}_{n\text{ times}}, \]
the only difference being that now the tensor products are over \(B\) and
\(\ol{A}\) is defined to be
\begin{equation}
  \label{eq:72}
  \ol{A}\deq A/B.
\end{equation}
In particular, \(\Omega^{1}(A/B)\) is isomorphic to \(A\otimes_{B} A/B\). The
differential is still given by the formula \eqref{eq:20}, and the product by
the formula \eqref{eq:21}. Naturally enough, the pair \(\p{\mathcal{D}(A/B),
  i}\) is called the \textbf{universal differential envelope of} \(A\)
\textbf{relative to} \(B\).

The crucial result is that theorem \ref{th:iso-Om-D} generalizes to the new
setting:
\begin{teo}
  The graded algebras \(\Omega^{\bullet}(A/B)\) and
  \(\mathcal{D}(A/B)\) are isomorphic.
\end{teo}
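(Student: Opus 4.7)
The plan is to mimic the proof of Theorem \ref{th:iso-Om-D} step by step, replacing $\bb{K}$ by $B$ wherever appropriate. The only substantive new ingredient is to show that $(\mathcal{D}(A/B)_{1}, d_{0})$ is a universal element for the relative derivation functor $\Der_{B}(A,\mph)$; once this is in hand, everything else follows formally as before.

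First I would verify this universality. The map $d_{0} \colon A \to A\otimes_{B} \bar{A}$, $a\mapsto 1\otimes_{B} \bar{a}$, is a derivation that vanishes on $B$ because $\bar{b}=0$ whenever $b\in B$. Given any $B$-linear derivation $\theta \colon A \to M$, the assignment $a_{0}\otimes_{B}\bar{a_{1}} \mapsto a_{0}.\theta(a_{1})$ provides a factoring $A$-bimodule map; well-definedness across both the quotient $A\twoheadrightarrow \bar{A}$ and the balanced tensor product over $B$ is exactly the condition $\theta|_{B}=0$. The universal property of $\Omega^{1}(A/B)$ then yields a canonical $A$-bimodule isomorphism $\psi \colon \Omega^{1}(A/B)\to \mathcal{D}(A/B)_{1}$ with $\psi\circ d = d_{0}$.

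Composing $\psi$ with the inclusion $\mathcal{D}(A/B)_{1}\hookrightarrow \mathcal{D}(A/B)$ produces an $A$-bimodule morphism into the underlying $A$-bimodule of the $A$-algebra $\mathcal{D}(A/B)$. Applying the tensor algebra adjunction between $\bimod{A}$ and $A\downarrow \alg{\bb{K}}$ (which is unaffected by the presence of $B$, as it only involves the $A$-bimodule structure) lifts this map uniquely to a morphism of $A$-algebras
\[
\Psi \colon \Omega^{\bullet}(A/B) = \bm{T}_{A}(\Omega^{1}(A/B)) \longrightarrow \mathcal{D}(A/B),
\]
whose restriction to degree $1$ is $\psi$ and to degree $0$ is the identity.

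To conclude I would construct an explicit inverse by the same recipe as in the absolute case: write any decomposable tensor $a_{0}\otimes_{B}\bar{a_{1}}\otimes_{B}\dots\otimes_{B}\bar{a_{n}}$ in $\mathcal{D}(A/B)_{n}$ as the product $(a_{0}\otimes_{B}\bar{a_{1}})(1\otimes_{B}\bar{a_{2}})\cdots (1\otimes_{B}\bar{a_{n}})$ via the multiplication \eqref{eq:21}, and send it to $\psi^{-1}(a_{0}\otimes_{B}\bar{a_{1}})\otimes_{A}\psi^{-1}(1\otimes_{B}\bar{a_{2}})\otimes_{A}\dots\otimes_{A}\psi^{-1}(1\otimes_{B}\bar{a_{n}})$ in $\Omega^{\bullet}(A/B)$. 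The main obstacle, and the only real point of departure from the proof of Theorem \ref{th:iso-Om-D}, is to verify that this inverse descends through each quotient $A\twoheadrightarrow \bar{A}=A/B$ and is compatible with the $B$-balanced tensor products (rather than merely $\bb{K}$-balanced ones). Once this compatibility is in place, the verification that $\Psi$ and the constructed map are mutually inverse reduces to the same formal computation as in the absolute setting.
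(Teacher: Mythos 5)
Your proposal is correct, but note that the paper itself offers no proof of this relative statement: it simply defers to Ginzburg's lectures (Theorem 10.7.1 there). So the honest comparison is with the paper's proof of the absolute case (theorem \ref{th:iso-Om-D}), which is exactly what you transplant, and the transplant works. Your verification that \(\p{\mathcal{D}(A/B)_{1}, d_{0}}\) is universal for \(\Der_{B}(A,\mph)\) is the right new ingredient, and you correctly identify that \emph{both} well-definedness issues --- independence of the representative in \(\bar{A}=A/B\) and \(B\)-balancedness of \(\otimes_{B}\) --- are discharged by the single hypothesis \(\theta|_{B}=0\) via the Leibniz rule. You also use, implicitly, that the factorization \(a_{0}\otimes_{B}\bar{a_{1}}\otimes_{B}\dots\otimes_{B}\bar{a_{n}} = (a_{0}\otimes_{B}\bar{a_{1}})(1\otimes_{B}\bar{a_{2}})\cdots(1\otimes_{B}\bar{a_{n}})\) survives the relativization; it does, because in the product formula \eqref{eq:21} the spurious term involves \(\bar{1}\), and \(1\in B\) forces \(\bar{1}=0\) in \(A/B\) just as \(1\in\bb{K}\) did in the absolute case.

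The one point you flag as ``the main obstacle'' --- \(B\)-balancedness of the candidate inverse --- you leave unverified, so let me close it to confirm there is no gap. The inverse sends \(a_{0}\otimes_{B}\bar{a_{1}}\otimes_{B}\dots\otimes_{B}\bar{a_{n}}\) to \(a_{0}.\de a_{1}\otimes_{A}\de a_{2}\otimes_{A}\dots\otimes_{A}\de a_{n}\). Independence of representatives holds because \(\de(a_{i}+b)=\de a_{i}\) for \(b\in B\) (the universal derivation into \(\Omega^{1}(A/B)\) kills \(B\)). For balancedness, the relation \(\overline{a_{i}b}\otimes_{B}\bar{a_{i+1}} = \bar{a_{i}}\otimes_{B}\overline{ba_{i+1}}\) is respected because \(\de(a_{i}b) = (\de a_{i}).b\) and \(\de(ba_{i+1}) = b.(\de a_{i+1})\) (again using \(\de b = 0\)), and the tensor product \(\otimes_{A}\) is in particular \(B\)-balanced since \(B\subseteq A\); the same computation handles the leftmost slot \(a_{0}b\otimes_{B}\bar{a_{1}}\). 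With this in place your map is well defined, and checking it is a two-sided inverse to \(\Psi\) is the same formal computation as in the absolute proof, as you say. One cosmetic remark: since the theorem only asserts an isomorphism of graded algebras, your use of the adjunction \eqref{eq:27} in \(A\downarrow\alg{\bb{K}}\) is fine as stated --- no relativized version of the adjunction is needed, precisely because, as you observe, only the \(A\)-bimodule structure enters.
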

The proof of this fact can be found in \cite{ginzlect} (Theorem 10.7.1). This
means that we have a differential
\[ \app{\de}{\Omega^{\bullet}(A/B)}{\Omega^{\bullet+1}(A/B)} \]
extending to every degree the universal derivation \(\app{d}{A}{\Omega^{1}(A/B)}\).
Theorem \ref{th:coh-Om} generalizes as follows:
\begin{equation}
  \label{eq:coh-Om-rel}
  H^{k}(\Omega^{\bullet}(A/B)) =
  \begin{cases}
    B &\text{ if } k=0\\
    0 &\text{ otherwise.}
  \end{cases}
\end{equation}
Let us define the \textbf{Karoubi-de Rham complex of} \(A\) \textbf{relative
  to} \(B\) as the graded vector space
\begin{equation}
  \DR^{\bullet}(A/B)\deq \frac{\Omega^{\bullet}(A/B)}
  {\comm{\Omega^{\bullet}(A/B)}{\Omega^{\bullet}(A/B)}}.
\end{equation}
The ``absolute'' Karoubi-de Rham complex of \(A\) then coincides with
\(\DR^{\bullet}(A/\bb{K})\). Also note that \(\DR^{0}(A/B) = A/[A,A]\) does
not actually depend on \(B\), so that the choice of scalars in \(A\) does not
affect the space of regular functions on the associative variety determined by
\(A\).

The differential calculus introduced in \S\ref{assoc-df} readily extends to the
relative case: for every derivation \(\theta\in \Der_{B}(A)\) relative to
\(B\) we have a degree \(-1\) ``interior product''
\[ \app{i_{\theta}}{\DR^{\bullet}(A/B)}{\DR^{\bullet}(A/B)} \]
and a degree \(0\) ``Lie derivative''
\[ \app{\mathcal{L}_{\theta}}{\DR^{\bullet}(A/B)}{\DR^{\bullet}(A/B)} \]
whose concrete expressions are still given by \eqref{eq:33} and \eqref{eq:36},
respectively.

The constructions in \S\ref{quillen} generalize as follows. The recipe
\(u\de v\mapsto [u,v]\) still defines a map \(\Omega^{1}(A/B)\to [A,A]\) that
descends to a map \(\app{b}{\DR^{1}(A/B)}{[A,A]}\). Moreover, if we define
\(\ol{\DR}^{0}(A/B)\) to be the linear space
\[ \ol{\DR}^{0}(A/B)\deq \frac{A}{B + \comm{A}{A}} \]
then we can again write a ``relative'' Quillen complex as follows:
\begin{equation}
  \xymatrix{
    0\ar[r] & \ol{\DR}^{0}(A/B)\ar[r]^-{\de} & \DR^{1}(A/B) \ar[r]^-{b} &
    [A,A]\ar[r] & 0}.
\end{equation}
However, this sequence may no longer be exact even when \(A\) is free.

\subsection{Quiver path algebras}
\label{qpalg}

Another important class of examples of associative varieties arises by
considering \emph{path algebras} of quivers; let us briefly review their
construction.

A \textbf{quiver} is a directed graph with no constraints on the kind
and the number of its edges; in particular it may have loops and/or multiple
edges between the same pair of vertices, as for instance in
\[ \xymatrix{
  \bullet\ar@(ul,dl)[] \ar@<0.2pc>[r] \ar@<-0.2pc>[r] & \bullet \ar[d]\\
  & \bullet \ar[ul]} \]
It is convenient to identify a quiver \(Q\) with the set of its edges; we
shall denote its set of vertices by \(I_{Q}\), or simply by \(I\) if the
particular quiver we are referring to is clear from the context. Given an edge
\(\xi\) of a quiver \(Q\) we shall denote by \(s(\xi)\) its starting vertex,
or \textbf{source}, and by \(t(\xi)\) its ending vertex, or \textbf{target}.

A \textbf{path} in a quiver \(Q\) is a finite sequence of continuous edges in
\(Q\), or equivalently a word of the form \(\xi_{1}\dots \xi_{\ell}\) for some
\(\ell\in \bb{N}\) where \(\xi_{1}, \dots, \xi_{\ell}\in Q\) and \(s(\xi_{i})
= t(\xi_{i+1})\) for every \(1\leq i\leq \ell-1\). (In keeping with standard
practice, the edges that make up a path are written down going from the right
to the left.) The maps \(s\) and \(t\) may be extended from edges to paths in
the obvious way: \(s(\xi_{1}\dots \xi_{\ell}) = s(\xi_{\ell})\) and
\(t(\xi_{1}\dots \xi_{\ell}) = t(\xi_{1})\).

The \textbf{path algebra} of a quiver \(Q\) over the field \(\bb{K}\), denoted
\(\bb{K}Q\), is the \(\bb{K}\)-linear space generated by all the paths in
\(Q\) equipped with the product defined as follows: given two paths \(p_{1}\)
and \(p_{2}\), their product \(p_{1}\cdot p_{2}\) is the concatenation of the
two words \(p_{1}\) and \(p_{2}\) if \(s(p_{1}) = t(p_{2})\) (that is,
\(p_{2}\) ends at the same vertex where \(p_{1}\) starts) and zero otherwise.
It is clear that \(\bb{K}Q\) is an associative algebra over \(\bb{K}\) which
is not commutative in general.

We shall be concerned only with quivers having a \emph{finite} vertex set, say
\(I = \{1,\dots,m\}\). For every \(i\in I\) we shall denote by \(e_{i}\) the
trivial (length zero) path at the vertex \(i\). Obviously, each \(e_{i}\) is
an idempotent element of \(\bb{K}Q\). Moreover, the set \(\p{e_{1}, \dots,
  e_{m}}\) is a \emph{complete set of mutually orthogonal idempotents} for
\(\bb{K}Q\), in the sense that
\[ e_{i}e_{j} = 0 \text{ when } i\neq j \quad\text{ and }\quad \sum_{i\in I}
e_{i} = 1 \]
where \(1\) is the unit of the path algebra. It follows that, as a vector
space, the path algebra decomposes as a direct sum of the form
\begin{equation}
  \label{eq:02}
  \bb{K}Q = \bigoplus_{i,j\in I} (\bb{K}Q)_{ji}
\end{equation}
where \((\bb{K}Q)_{ji}\deq e_{j}\bb{K}Qe_{i}\) is the linear subspace of
\(\bb{K}Q\) spanned by all the paths \(i\to j\) in \(Q\). 

The decomposition \eqref{eq:02} can be seen equivalently as follows. Denote by
\(B\) the subalgebra of \(\bb{K}Q\) generated by the idempotents
\(\p{e_{i}}_{i\in I}\). This algebra is isomorphic to \(\bb{K}^{m}\), seen as
a (commutative) \(\bb{K}\)-algebra with the product defined componentwise. The
embedding \(B\hookrightarrow \bb{K}Q\) then makes \(\bb{K}Q\) an algebra over
\(B\) (in the sense explained in \S\ref{univ-dga}), and in fact it is easy to
check that
\begin{equation}
  \label{eq:palg-talg}
  \bb{K}Q\isom \bm{T}_{B}(E_{Q})
\end{equation}
where \(E_{Q}\) is the \(B\)-bimodule spanned (as a \(\bb{K}\)-linear space)
by the arrows in \(Q\), with left and right actions defined by
\[ e_{i}\xi =
\begin{cases}
  \xi &\text{ if } t(\xi)=i\\
  0 &\text{ otherwise}
\end{cases} \qquad
\xi e_{i} =
\begin{cases}
  \xi &\text{ if } s(\xi)=i\\
  0 &\text{ otherwise}
\end{cases} \qquad \text{ for every } \xi\in Q,\, i\in I. \]
This \(B\)-bimodule structure is just a compact way to package all the
incidence relations described by the quiver \(Q\).

The associative geometry of quiver path algebras has been studied in
\cite{ginz01,blb02}; let us review the main results of these papers. Fix a
quiver \(Q\) and let \(A\deq \bb{K}Q\). In order to obtain a good theory one
has to work relatively to the subalgebra \(B\subseteq A\) introduced above,
using the relative differential calculus reviewed in \S\ref{rel-df}.
Intuitively, this means that derivations and differential forms defined on
\(A\) must keep the vertices of the quiver fixed.

Let us start, then, by considering the complex \(\Omega^{\bullet}(A/B)\), seen
as the universal differential envelope \(\mathcal{D}(A/B)\). We would like to
find a (linear) basis for the homogeneous component of degree \(n\),
\[ \mathcal{D}(A/B)_{n} = A\otimes_{B} A/B\otimes_{B} \dots \otimes_{B} A/B. \]
An element of this space may be written as
\begin{equation}
  \label{eq:74}
  p_{0}\otimes_{B} \de p_{1}\otimes_{B} \dots \otimes_{B} \de p_{n}
\end{equation}
where \(p_{0}, \dots, p_{n}\in A\) and \(p_{1}, \dots, p_{n}\) are paths of
nonzero length (so that their projection in \(A/B\) is nonzero). Suppose that
the path \(p_{k+1}\) ends at vertex \(i\) (that is, \(e_{i}p_{k+1} =
p_{k+1}\)) and the path \(p_{k}\) starts at vertex \(j\) (\(p_{k}e_{j} =
p_{k}\)); then
\[ \de p_{k}\otimes_{B} \de p_{k+1} = \de p_{k}.e_{j} \otimes_{B} e_{i}.\de
p_{k+1} = \de p_{k}.e_{j}e_{i}\otimes_{B} \de p_{k+1} \]
which is zero unless \(i=j\). Clearly these are the only possible relations
between elements of \(\mathcal{D}(A/B)_{n}\), so as a basis for this space we
can take the set of decomposable tensors of the form \eqref{eq:74} where
\(p_{1}, \dots, p_{n}\) are paths of length \(\geq 1\) and \(s(p_{k}) =
t(p_{k+1})\) for every \(0\leq k\leq n\).

Consider now the relative Karoubi-de Rham complex \(\DR^{\bullet}(A/B)\),
starting as usual from the component of degree zero. Let us call a path
\(\xi_{1}\dots \xi_{n}\in A\) an (oriented) \textbf{cycle} if
\[ t(\xi_{1}) = s(\xi_{n}) \]
that is, the path ends at the same vertex where it begins. A \textbf{necklace
  word} in the path algebra \(A\) is a cycle considered up to cyclic
permutations of its component arrows. As in the free algebra case, it is not
difficult to prove the following result (see \cite[Lemma 3.4]{blb02}):
\begin{lem}
  A basis for the linear space \(\DR^{0}(A)\) is given by the set of necklace
  words in \(A\).
\end{lem}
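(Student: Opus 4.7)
The plan is to exhibit a $\bb{K}$-linear isomorphism between $\DR^{0}(A) = A/[A,A]$ and the vector space $N$ formally spanned by the necklace words of $Q$. Since $A = \bb{K}Q$ has the set of all paths (including the trivial ones $e_i$) as a distinguished $\bb{K}$-basis, one can define a $\bb{K}$-linear map $\pi\colon A\to N$ by sending a cyclic path $p$ to its necklace class $[p]$ and every non-cyclic path (including each $e_i$, if we consider trivial paths as non-cyclic for this purpose — actually trivial paths are cycles of length $0$, and they should go to themselves, which is fine since they form their own necklace classes) to $0$. Dually, one has a candidate map $\iota\colon N\to A/[A,A]$ that picks any cyclic representative of a necklace class and projects it to $A/[A,A]$; the first substantive step of the plan is to check that $\iota$ is well defined.

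For the well-definedness of $\iota$ I would observe that, given a cycle $p = \xi_{1}\dots \xi_{\ell}$ with $t(\xi_{1}) = s(\xi_{\ell})$, the concatenation $(\xi_{2}\dots \xi_{\ell})\xi_{1}$ is a valid (nonzero) path, and the commutator
\[ [\xi_{1},\, \xi_{2}\dots \xi_{\ell}] = \xi_{1}\xi_{2}\dots \xi_{\ell} - \xi_{2}\dots \xi_{\ell}\xi_{1} \]
equals the difference of $p$ and its single-step cyclic rotation. Iterating, any two cyclic rotations of the same cycle represent the same class in $A/[A,A]$, so $\iota$ is well defined.

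Next I would show that $\pi$ factors through $A/[A,A]$, by analyzing $[p,q] = pq - qp$ for two basis paths $p,q$. There are three cases. If $pq = qp = 0$ (no concatenation possible on either side), the commutator is zero. If exactly one of $pq, qp$ is nonzero, say $pq\neq 0$ and $qp = 0$, then $t(q) = s(p)$ but $t(p)\neq s(q)$, and $pq$ is a path from $s(q)$ to $t(p)\neq s(q)$, hence non-cyclic, so $\pi(pq) = 0$. In the remaining case both $pq$ and $qp$ are nonzero, which forces $t(p) = s(q)$ and $t(q) = s(p)$; then both are cyclic and, as words in the arrows, $pq$ and $qp$ are cyclic rotations of one another, so $\pi(pq) = \pi(qp)$. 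In all cases $\pi([p,q]) = 0$, so $\pi$ descends to $\bar\pi\colon A/[A,A]\to N$.

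Finally I would check that $\bar\pi$ and $\iota$ are mutually inverse. The composition $\bar\pi\circ \iota$ is the identity of $N$ by definition. For $\iota\circ \bar\pi$, the only point to verify is that every non-cyclic path $p$ already lies in $[A,A]$; this follows from the computation $[\xi_{1},\xi_{2}\dots \xi_{\ell}] = p - (\xi_{2}\dots \xi_{\ell})\xi_{1}$, where the last product vanishes precisely because $t(\xi_{1})\neq s(\xi_{\ell})$ when $p$ is non-cyclic. Combined with the previous paragraph, this gives that $\iota\circ \bar\pi$ acts as the identity on classes of basis paths, hence on all of $A/[A,A]$. The main (mild) obstacle is the bookkeeping in the three-case analysis of $[p,q]$, where one must be careful not to confuse ``cyclic path'' with ``path equal to its own cyclic rotations'': the key observation is that concatenability in both orders is exactly the condition that makes $pq$ and $qp$ rotations of a common necklace.
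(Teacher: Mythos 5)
Your argument is correct and is essentially the standard proof that the paper itself omits, deferring instead to Bocklandt--Le Bruyn (Lemma 3.4 of \cite{blb02}): the necklace projection kills commutators of basis paths (your three-case analysis), cyclic rotations of a cycle differ by commutators, and non-cyclic paths are themselves commutators, so the two maps are mutually inverse. One small repair: for a non-cyclic path of length \(\ell = 1\), i.e.\ an arrow \(\xi\) with \(s(\xi)\neq t(\xi)\), your formula \([\xi_{1},\xi_{2}\dots\xi_{\ell}] = p - (\xi_{2}\dots\xi_{\ell})\xi_{1}\) has an empty second factor, but the same trick works with the trivial path in its place: \(\xi = \xi e_{s(\xi)} - e_{s(\xi)}\xi = [\xi, e_{s(\xi)}]\), where \(e_{s(\xi)}\xi = 0\) precisely because \(t(\xi)\neq s(\xi)\).
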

This gives a description for regular functions on the associative variety
determined by \(\bb{K}Q\) quite analogous to the one obtained in
\S\ref{nc-aff-sp} for the associative affine space.

Now let us consider 1-forms. As we saw above, a basis for \(\Omega^{1}(A/B)\)
is provided by expressions of the form \(p_{0}\de p_{1}\) with \(s(p_{0}) =
t(p_{1})\). If \(s(p_{1})\neq t(p_{0})\) then \(p_{0}\de p_{1} = [p_{0}, \de
p_{1}]\) which is killed by the projection onto \(\DR^{1}(A/B)\), so the paths
\(p_{0}\) and \(p_{1}\) must form a cycle in \(Q\). Then an induction argument
over the length of \(p_{1}\) (see \cite[Lemma 3.5]{blb02}) shows that it
suffices to consider the case where \(p_{1}\) has length 1, which means that
it is an edge of the quiver. Summing up, we have the linear isomorphism
\begin{equation}
  \label{eq:79}
  \DR^{1}(A/B)\isom \bigoplus_{\app{\xi}{i}{j}} A_{ij} \de \xi.
\end{equation}
An interesting consequence of these results is that the necklace derivative
operators introduced in \S\ref{nc-aff-sp} can be generalized to the path
algebra of any quiver. Indeed, using the isomorphism \eqref{eq:79} the
differential of any regular function \(f\in \DR^{0}(A)\) can be written in a
unique way as
\begin{equation}
  \label{eq:80}
  \de f = \sum_{\xi\in Q} p_{\xi} \de\xi
\end{equation}
where each \(p_{\xi}\) is a path which goes in opposite direction with respect
to \(\xi\). It follows that we can define a map
\[ \app{\frac{\partial }{\partial \xi}}{\DR^{0}(A)}{A_{ij}\hookrightarrow A} \]
by sending each \(f\in \DR^{0}(A)\) to the path \(p_{\xi}\). We can then
rewrite formula \eqref{eq:80} as
\[ \de f = \sum_{\xi\in Q} \frac{\partial f}{\partial \xi} \de \xi. \]
This expression neatly generalizes formula \eqref{eq:71}, to which it reduces
when \(Q\) is the quiver with a single vertex and \(n\) loops.

In practice, the action of a necklace derivative \(\frac{\partial }{\partial
  \xi}\) on a necklace word \(f\) is computed in the same manner as in the
free algebra case: for each occurrence of the arrow \(\xi\) in \(f\) we write
down the path obtained by removing that arrow from the necklace (starting from
the arrow immediately after it) and then take the sum of all the resulting
paths. Explicitly, \(\eta_{1}\dots \eta_{\ell}\in A\) is a representative for
\(f\) with \(\eta_{1}, \dots, \eta_{\ell}\in Q\) then
\[ \frac{\partial f}{\partial \xi} = \sum_{k=1}^{\ell} \delta_{\eta_{k}\xi}
\eta_{k+1}\dots \eta_{\ell}\eta_{1}\dots \eta_{k-1}. \]
The cohomology of the Karoubi-de Rham complex of \(A\) can also be explicitly
calculated, as first shown in \cite{ginz01,blb02}.
\begin{teo}
  \label{th:coh-qpalg}
  Let \(A\) be the path algebra of a quiver \(Q\). Then
  \begin{equation}
    \label{eq:24b}
    H^{k}(\DR^{\bullet}(A/B)) =
    \begin{cases}
      B &\text{ for } k=0\\
      0 &\text{ for } k\geq 1.
    \end{cases}
  \end{equation}
\end{teo}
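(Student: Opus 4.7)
The plan is to adapt the Poincaré-lemma argument of Theorem \ref{th:poinc-lem} to the relative setting. The path algebra $A = \bb{K}Q$ carries a natural $\bb{N}$-grading by path length, and its degree-zero component coincides with the subalgebra $B$. This grading induces a ``weight'' decomposition on $\Omega^{\bullet}(A/B)$ and, by lemma \ref{lem:grc-der}, descends to $\DR^{\bullet}(A/B)$: a decomposable form $a_{0}\de a_{1}\dots \de a_{k}$ with each $a_{i}$ of path length $d_{i}$ is assigned weight $w = d_{0}+\dots+d_{k}$, and the ideal of graded commutators is homogeneous with respect to this weight.

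The key tool is the Euler derivation $\app{E}{A}{A}$ that sends a path of length $n$ to $n$ times itself. Since $B$ consists of length-zero idempotents, $E$ vanishes on $B$, so $E\in \Der_{B}(A)$. It therefore gives rise to a contraction $i_{E}$ and a Lie derivative $\mathcal{L}_{E}$ on $\DR^{\bullet}(A/B)$, linked by Cartan's formula \eqref{eq:34}. A direct computation from formula \eqref{eq:36}, applied to homogeneous arguments, shows that $\mathcal{L}_{E}$ acts on any form of weight $w$ as multiplication by $w$. Consequently, on the weight-$w$ subcomplex with $w\geq 1$, the operator $h\deq \frac{1}{w}\, i_{E}$ is a contracting homotopy for $\de$, and so each such subcomplex is acyclic.

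It remains to analyze the weight-zero part. Because $\de b = 0$ for every $b\in B$ (the differential is relative to $B$), a decomposable form $a_{0}\de a_{1}\dots \de a_{k}$ can have weight zero only when $k=0$ and $a_{0}\in B$; furthermore the weight-zero component of $[A,A]$ vanishes because $B$ is commutative. Hence the weight-zero subcomplex of $\DR^{\bullet}(A/B)$ is $B$ concentrated in degree $0$, whose cohomology is $B$ in degree zero and zero in higher degrees. Assembling the weight pieces yields exactly \eqref{eq:24b}. The main technical obstacle is ensuring that the weight grading, the operator $\mathcal{L}_{E}$, and the homotopy $i_{E}$ all descend properly from $\Omega^{\bullet}(A/B)$ to the Karoubi-de Rham quotient; this is handled uniformly by lemma \ref{lem:grc-der} together with the observation that $E$ preserves the path-length grading.
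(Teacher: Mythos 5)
Your proposal is correct, and it is worth noting that the paper itself offers no proof of this theorem at all: it simply cites \cite{ginz01,blb02} and remarks on the analogy with the free-algebra case (theorem \ref{th:coh-sp-aff}), which is in turn deduced from theorem \ref{th:poinc-lem}, whose Poincar\'e-lemma-style proof is also delegated to Ginzburg's notes. What you have written is precisely the relative analogue of that missing argument: the path-length grading on \(A=\bb{K}Q\) has \(A_{0}=B\), the Euler derivation \(E\) lies in \(\Der_{B}(A)\), and the identity \(\mathcal{L}_{E}=\de\circ i_{E}+i_{E}\circ\de\) together with \(\mathcal{L}_{E}=w\cdot\id\) on the weight-\(w\) part makes \(\frac{1}{w}i_{E}\) a contracting homotopy for \(w\geq 1\) (here characteristic zero is genuinely used, to invert \(w\)). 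Your treatment of the weight-zero part is also sound, and in fact slightly more careful than the analogy alone would suggest: the relative setting kills the degree-zero part of \(\bar{A}=A/B\), so every factor \(\de a_{i}\) has weight \(\geq 1\), forcing the weight-zero subcomplex into form-degree zero; and your observation that \(([A,A])_{0}=[A_{0},A_{0}]=[B,B]=0\) both identifies this subcomplex with \(B\) and shows that \(B\) injects into \(\DR^{0}(A/B)=A/[A,A]\), so \(H^{0}=B\) on the nose. The points you flag about descent to the quotient are all covered by lemma \ref{lem:grc-der} and the weight-homogeneity of the graded-commutator subspace, exactly as you say. In short, where the paper outsources the proof to the literature, your argument supplies a self-contained relative version of theorem \ref{th:poinc-lem} specialized to path algebras, which is indeed how the cited sources proceed.
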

This result shows that if the ``right'' choice for the subalgebra of scalar
functions is made then the associative variety determined by a quiver has the
same cohomology as a contractible space, exactly as it happens for associative
affine spaces (theorem \ref{th:coh-sp-aff}).

\section{Representation spaces}
\label{s:rep}

In this section we review the connection between geometric objects defined on
an associative algebra \(A\) and the corresponding objects defined on the
representation spaces of \(A\), thereby making contact between the associative
and the commutative worlds. Our main references for this part are
\cite{ginzlect}, \cite{lebr08} and \cite{ginz09}.

\subsection{Representation spaces and their quotients}
\label{def-Rep}

From now on we suppose that the associative algebra \(A\) is \emph{finitely
  generated}, that is there exists a natural number \(n\in \bb{N}\) such that
\(A\) may be presented as a quotient
\begin{equation}
  \label{eq:76}
  A\isom \falg{\bb{K}}{x_{1}, \dots, x_{n}}/I
\end{equation}
of the free algebra on \(n\) generators by a two-sided ideal \(I\). This
implies that the dg-algebra \(\Omega^{\bullet}(A)\) is also finitely
generated, and that each homogeneous component \(\Omega^{k}(A)\) is finitely
generated as an \(A\)-bimodule.

For every \(d\in \bb{N}\) a \(d\)\textbf{-dimensional representation} of \(A\)
is a morphism of \(\bb{K}\)-algebras
\[ \app{\rho}{A}{\Mat_{d,d}(\bb{K})}, \]
where \(\Mat_{d,d}(\bb{K})\) denotes the algebra of \(d\times d\) matrices
with entries in \(\bb{K}\). It is natural to interpret such matrices as linear
endomorphisms of \(\bb{K}^{d}\) expressed in the canonical basis; then each
representation \(\rho\) defines a left \(A\)-module structure on
\(\bb{K}^{d}\) by putting
\begin{equation}
  \label{eq:corr-rep-mod}
  a.v\deq \rho(a)v \quad\text{ for every } a\in A,\, v\in \bb{K}^{d}.
\end{equation}
Conversely, suppose \(V\) is a \(d\)-dimensional vector space over \(\bb{K}\)
equipped with the structure of a left \(A\)-module. Then the choice of a basis
\(E\) for \(V\) determines, by the same rule \eqref{eq:corr-rep-mod}, a map
\(\app{\rho}{A}{\Mat_{d,d}(\bb{K})}\). Moreover, for every pair \(a,b\in A\)
one has that
\[ \rho(ab)v = ab.v = a.(b.v) = a.\rho(b)v = \rho(a)\rho(b)v \qquad\text{ for
  every } v\in \bb{K}^{d}. \]
It follows that \(\rho(ab) = \rho(a)\rho(b)\) for every \(a,b\in A\), that is
the map \(\rho\) is a morphism of \(\bb{K}\)-algebras, hence a
\(d\)-dimensional representation of \(A\) in the original sense.

Notice that there is a certain amount of arbitrariness in this correspondence
between representations of \(A\) and left \(A\)-modules, which is given by the
choice of the basis \(E\). Choosing a different basis \(E'\) amounts to the
choice of a different isomorphism \(V\simeq \bb{K}^{d}\), leading to a
different map \(\app{\rho'}{A}{\Mat_{d,d}(\bb{K})}\). The two maps \(\rho\)
and \(\rho'\) are then related by the equality
\begin{equation}
  \label{eq:conj-rep}
  \rho'(a) = g\rho(a)g^{-1} \qquad\text{ for every } a\in A,
\end{equation}
where \(g\in \GL_{d}(\bb{K})\) is the invertible \(d\times d\) matrix which
realizes the change of basis from \(E\) to \(E'\). We say in this case that
the representations \(\rho\) and \(\rho'\) are \textbf{equivalent}. One of the
basic goals in the representation theory of associative algebras is to
classify the equivalence classes of finite-dimensional representations of
\(A\), or (equivalently) the \(A\)-modules of finite dimension\footnote{It is
  important to note that there exist associative algebras having no
  finite-dimensional representations. For such algebras one must necessarily
  consider representations of a more general kind, for example as linear
  operators on an infinite-dimensional Hilbert space.}.

In order to attack this problem from a geometric perspective let us consider
the space of all \(d\)-dimensional representation of the algebra \(A\),
\begin{equation}
  \label{eq:def-rep-var}
  \Rep_{d}^{A}\deq \alg{\bb{K}}(A,\Mat_{d,d}(\bb{K})).
\end{equation}
We shall show, following \cite[Chapter 2]{lebr08}, how this space can be seen
in a natural way as an \emph{affine scheme}. It is convenient to start from
the special case in which the algebra \(A\) is free, say on the \(n\)
generators \(x_{1}, \dots, x_{n}\). Then any \(d\)-dimensional representation
of \(A\) can be specified simply by picking a \(n\)-tuple of \(d\times d\)
matrices \(\p{X_{1}, \dots, X_{n}}\) and declaring that \(X_{i}\) is the image
of the generator \(x_{i}\) for every \(i=1\dots n\). It follows that
\begin{equation}
  \label{eq:Rep-free}
  \Rep_{d}^{A} = \underbrace{\Mat_{d,d}(\bb{K})\oplus \dots \oplus
    \Mat_{d,d}(\bb{K})}_{n\text{ times}} = \Mat_{d,d}(\bb{K})^{\oplus n}
\end{equation}
This is clearly an affine algebraic variety (hence, in particular, a scheme);
the corresponding coordinate algebra, that we are going to denote
\(\mathcal{A}_{n,d}\), is isomorphic to the polynomial ring over \(\bb{K}\)
generated by the \(nd^{2}\) indeterminates \((x_{i,jk})_{i=1\dots n,\,
  j,k=1\dots d}\) representing each entry in a generic \(n\)-tuple of
\(d\times d\) matrices:
\begin{equation}
  \label{eq:gen-tup}
  X_{1} =
  \begin{pmatrix}
    x_{1,11} & \hdots & x_{1,1d}\\
    \vdots & \ddots & \vdots\\
    x_{1,d1} & \hdots & x_{1,dd}
  \end{pmatrix}
  \qquad \dots \qquad
  X_{n} =
  \begin{pmatrix}
    x_{n,11} & \hdots & x_{n,1d}\\
    \vdots & \ddots & \vdots\\
    x_{n,d1} & \hdots & x_{n,dd}
  \end{pmatrix}
\end{equation}
Now let us return to the general case of a finitely generated algebra \(A\),
presented as in \eqref{eq:76}. Then a \(d\)-dimensional representation of
\(A\) may again be specified by a \(n\)-tuple of \(d\times d\) matrices
\(\p{X_{1}, \dots, X_{n}}\) such that the map \(\falg{\bb{K}}{x_{1}, \dots,
  x_{n}}\to \Mat_{d,d}(\bb{K})\) defined by \(x_{i}\mapsto X_{i}\) descends to
the quotient \eqref{eq:76}. But this happens if and only if the matrices
\(\p{X_{1}, \dots, X_{n}}\) satisfy every relation determined by the ideal
\(I\), that is the equation
\begin{equation}
  \label{eq:rel-rep}
  p(X_{1}, \dots, X_{n}) =
  \begin{pmatrix}
    0 & \hdots & 0\\
    \vdots & \ddots & \vdots\\
    0 & \hdots & 0
  \end{pmatrix}
\end{equation}
holds for every \(p\in I\) (seen as a noncommutative polynomial in \(n\)
indeterminates).

On the other hand, for each \(p\in I\) we can interpret the left-hand side of
equation \eqref{eq:rel-rep} as the evaluation of the noncommutative polynomial
\(p\) on the generic \(n\)-tuple of matrices \eqref{eq:gen-tup}. The \(d^{2}\)
entries of this matrix are then polynomials in the indeterminates \(x_{i,jk}\)
that generate the algebra \(\mathcal{A}_{n,d}\). Let us denote by \(J_{A}\)
the ideal of \(\mathcal{A}_{n,d}\) generated by all these polynomials as \(p\)
varies in \(I\). The above argument then shows that the set of
\(d\)-dimensional representations of \(A\) coincides with the zero locus of
the ideal \(J_{A}\) in \(\Mat_{d,d}(\bb{K})^{\oplus n}\); it is thus an affine
scheme (of finite type over \(\bb{K}\)), as claimed.

Notice that, since the ideal \(J_{A}\) defined above is not necessarily
radical, the scheme \(\Rep_{d}^{A}\) is not a variety in general. However,
this will always be the case for the particular examples we shall be
interested in (namely, free algebras and quiver path algebras). For this
reason, in what follows we shall usually avoid the more sophisticated
scheme-theoretic point of view and regard \(\Rep_{d}^{A}\) simply as an affine
algebraic variety.

To translate the notion of equivalent representations in geometric terms we
note that the correspondence \eqref{eq:conj-rep} is naturally interpreted as
the definition of a (left) action of the group \(\GL_{d}(\bb{K})\) on
\(\Rep_{d}^{A}\): given \(g\in \GL_{d}(\bb{K})\) and \(\rho\in \Rep_{d}^{A}\),
the representation \(g.\rho\) is defined by
\begin{equation}
  \label{eq:act-G}
  (g.\rho)(a)\deq g\rho(a) g^{-1} \quad\text{ for every } a\in A.
\end{equation}
In fact the center of \(\GL_{d}(\bb{K})\) acts trivially, so that strictly
speaking the group acting is rather its quotient \(G_{d}\deq
\PGL_{d}(\bb{K})\). Clearly, two representations are equivalent if and only if
they are related by the action of an element \(g\in G_{d}\). We conclude that
the equivalence classes of \(d\)-dimensional representations of \(A\) are in
one to one correspondence with the \emph{orbits} of the group action
\eqref{eq:act-G} on \(\Rep_{d}^{A}\). The fundamental goal becomes then to
describe those orbits.

The modern approach to the study of group actions on affine algebraic
varieties goes under the name of \emph{geometric invariant theory}. It is
obviously impossible for us to do justice to this huge topic here. We direct
the reader to the standard reference \cite{mfk94} for a comprehensive
treatment; see also \cite{brio10} for a more concise introduction. We shall
content ourselves with briefly summarizing some results which shed some light
on the above-mentioned problem.

First of all, we remind the reader about the standard notion of quotient in
the algebro-geometric context. Let \(G\) be an algebraic group acting on an
affine variety \(X\). A \textbf{categorical quotient} for this action (in the
category of affine algebraic varieties) is an affine variety \(\cq{X}{G}\)
together with a morphism \(\app{\pi}{X}{\cq{X}{G}}\) such that:
\begin{enumerate}
\item \(\pi\) is \(G\)-invariant: \(\pi(g.x) = \pi(x)\) for every \(g\in G\)
  and \(x\in X\);
\item \(\pi\) is universal among such morphisms: for every \(G\)-invariant
  morphism \(\app{f}{X}{Z}\) there exists a unique morphism
  \(\app{k}{\cq{X}{G}}{Z}\) such that \(f = k\circ\pi\).
\end{enumerate}
As for any universal construction, if a categorical quotient exists then it is
unique up to a unique isomorphism.

One of the main results of geometric invariant theory is that when \(G\) is a
\emph{reductive} algebraic group\footnote{A linear algebraic group is called
  reductive if it does not contain any closed normal unipotent subgroup. Many
  commonly used groups are reductive, including all semisimple groups and
  general linear groups.} acting on an affine variety \(X\) the categorical
quotient \(\cq{X}{G}\) always exists. This follows from a basic result known
as the \emph{Nagata-Hilbert theorem}:
\begin{teo}
  Let \(G\) be a reductive algebraic group acting on the affine algebraic
  variety \(X\). Then the subalgebra \(\bb{K}[X]^{G}\subseteq \bb{K}[X]\)
  consisting of \(G\)-invariant regular functions on \(X\) is finitely
  generated.
\end{teo}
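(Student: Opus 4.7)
The strategy is the classical Hilbert-style argument, leveraging the Reynolds operator available for reductive groups in characteristic zero.

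First I would reduce to the case where $X$ is a finite-dimensional rational $G$-module. Since $X$ is affine of finite type, $\bb{K}[X]$ is finitely generated, and the $G$-action on $\bb{K}[X]$ is rational; consequently any finite set of algebra generators is contained in some finite-dimensional $G$-stable subspace $W\subseteq \bb{K}[X]$. By the universal property of the symmetric algebra the inclusion $W\hookrightarrow \bb{K}[X]$ extends to a $G$-equivariant surjection $\mathrm{Sym}(W)\twoheadrightarrow \bb{K}[X]$, which corresponds dually to a closed $G$-equivariant embedding $X\hookrightarrow V$, where $V\deq W^{*}$. Thus $\bb{K}[V]=\mathrm{Sym}(W)$ is a graded polynomial ring carrying a rational $G$-action and $\bb{K}[X]$ is a quotient by a $G$-stable ideal $I(X)$.

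Next I would invoke the fact that in characteristic zero a reductive group is linearly reductive: every rational $G$-representation decomposes as a direct sum of irreducibles. Projection onto the trivial isotypic component defines a $G$-equivariant $\bb{K}$-linear retraction $\app{R}{\bb{K}[V]}{\bb{K}[V]^{G}}$, the \emph{Reynolds operator}; it is moreover $\bb{K}[V]^{G}$-linear because multiplication by an invariant preserves isotypic decompositions. Applying $R$ to the short exact sequence $0\to I(X)\to \bb{K}[V]\to \bb{K}[X]\to 0$ and using that taking $G$-invariants is exact on rational $G$-modules yields a surjection $\bb{K}[V]^{G}\twoheadrightarrow \bb{K}[X]^{G}$. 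Hence finite generation of $\bb{K}[X]^{G}$ is inherited from that of $\bb{K}[V]^{G}$, and it suffices to treat the case $X=V$.

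The remaining step is the classical ``Hilbert ideal'' argument. The polynomial ring $\bb{K}[V]$ is graded and Noetherian; let $\bb{K}[V]^{G}_{+}$ denote the invariants of positive degree and consider the ideal $J\deq \bb{K}[V]^{G}_{+}\cdot \bb{K}[V]$. By Noetherianity $J$ has a finite set of generators which we can take to be homogeneous elements $f_{1},\dots,f_{r}\in \bb{K}[V]^{G}_{+}$. I claim that $\bb{K}[V]^{G}=\bb{K}[f_{1},\dots,f_{r}]$, which is proved by induction on degree: any homogeneous $f\in \bb{K}[V]^{G}$ of positive degree may be written as $f=\sum_{i}h_{i}f_{i}$ with each $h_{i}\in \bb{K}[V]$ of strictly smaller degree, and applying $R$ yields $f=R(f)=\sum_{i}R(h_{i})f_{i}$ with each $R(h_{i})\in \bb{K}[V]^{G}$ of strictly smaller degree, so the inductive hypothesis places all $R(h_{i})$, and therefore $f$, in $\bb{K}[f_{1},\dots,f_{r}]$.

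The main obstacle is the existence and $\bb{K}[V]^{G}$-linearity of the Reynolds operator; once this input is granted, the remainder is essentially formal. In the setting of the paper this rests on the nontrivial structural theorem that a reductive algebraic group in characteristic zero is linearly reductive, so that all rational representations are completely reducible.
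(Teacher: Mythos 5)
Your proposal is correct, but there is nothing in the paper to compare it against: the paper states this result as the Nagata--Hilbert theorem and gives no proof, directing the reader to the standard references \cite{mfk94} and \cite{brio10}. What you have written is the classical Hilbert--Weyl argument, and it is sound in the paper's setting, where \(\bb{K}\) has characteristic zero throughout. The reduction to a linear action (embedding \(X\) equivariantly and closedly into \(V = W^{*}\) via a finite-dimensional \(G\)-stable subspace \(W\) containing algebra generators), the construction of the Reynolds operator from complete reducibility together with its \(\bb{K}[V]^{G}\)-linearity (multiplication by an invariant is a \(G\)-module endomorphism, hence preserves isotypic components), the surjectivity of \(\bb{K}[V]^{G}\to \bb{K}[X]^{G}\) via exactness of invariants, and the graded induction on the Hilbert ideal \(J = \bb{K}[V]^{G}_{+}\cdot \bb{K}[V]\) are all standard and correctly executed. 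Two small points you use implicitly and might state: since \(G\) acts \emph{linearly} on \(V\), the grading of \(\bb{K}[V]\) is \(G\)-stable, so \(\bb{K}[V]^{G}\) is a graded subalgebra (needed to choose the \(f_{i}\) homogeneous) and \(R\) preserves degrees (needed so that \(R(h_{i})\) has strictly smaller degree in the induction).

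One caveat is worth flagging. The theorem as stated, for a reductive group over an arbitrary field, is Nagata's theorem; in positive characteristic reductive groups are only \emph{geometrically} reductive (Haboush's theorem), the Reynolds operator is unavailable, and the proof is substantially harder --- this is the generality treated in the references the paper cites. Your argument genuinely uses linear reductivity and therefore proves the statement only in characteristic zero. Since the paper fixes a field of characteristic zero from the outset, this restriction is harmless for the purpose at hand, but you should be aware that your proof does not cover the statement in the generality in which GIT texts assert it.
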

The construction of the quotient then proceeds in the following way. We choose
a set \(\p{f_{1}, \dots, f_{m}}\) of generators for \(\bb{K}[X]^{G}\) and
consider the morphism \(X\to \bb{K}^{m}\) defined by
\[ x\mapsto \p{f_{1}(x), \dots, f_{m}(x)}. \]
The image \(Y\subseteq \bb{K}^{m}\) of this morphism is closed and independent
of the chosen generating set. The induced surjective morphism \(\app{\pi}{X}
{Y}\) is clearly \(G\)-invariant, and it can be shown that it is also
universal (in the sense of point 2 above). Thus the variety \(Y\) is
isomorphic to the categorical quotient \(\cq{X}{G}\) for the given action%
\footnote{For scheme-theoretically inclined readers it is perhaps easier to
  think about the categorical quotient as the spectrum of the ring of
  invariants \(\bb{K}[X]^{G}\); the morphism \(\pi\) is then obtained from the
  algebra embedding \(\bb{K}[X]^{G}\hookrightarrow \bb{K}[X]\) by duality.}.

It is important to note that the fibers of the quotient map \(\pi\) will
\emph{not} consist, in general, of single orbits. However it can be proved
that each fiber of \(\pi\) contains a unique \emph{closed} orbit, so that the
variety \(\cq{X}{G}\) may be seen as a moduli space for closed \(G\)-orbits.

Let us return now to the specific setting of representation spaces. As the
group \(G_{d}\) is reductive and the variety \(\Rep_{d}^{A}\) is affine, the
previous results can be applied in a straightforward way. Moreover, closed
orbits in \(\Rep_{d}^{A}\) are characterized by the following fundamental
result, due to M. Artin \cite{art69}.
\begin{teo}
  The orbit of a representation \(\rho\) is closed in \(\Rep_{d}^{A}\) if and
  only if the corresponding \(A\)-module is semisimple.
\end{teo}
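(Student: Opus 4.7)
The plan is to prove both implications by reducing orbit closures in $\Rep_d^A$ to one-parameter subgroup limits, and then identifying such limits with the associated graded modules of $M \deq \bb{K}^d$ (viewed as an $A$-module via $\rho$) with respect to suitable filtrations. The key technical tool is the Hilbert--Mumford criterion: for a reductive group $G$ acting on an affine variety $X$, a point $y$ lies in the closure of the orbit $G \cdot x$ if and only if there exists a one-parameter subgroup $\app{\lambda}{\bb{K}^{*}}{G}$ such that $\lim_{t \to 0} \lambda(t).x$ exists and lies in $G \cdot y$. Granting this, every boundary orbit of $G_{d} \cdot \rho$ is accessible through a cocharacter of $G_{d}$.

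For the forward direction I would argue by contrapositive. Suppose $M$ is not semisimple; pick a proper submodule $N \subset M$ that does not admit a complement, and extend $0 \subset N \subset M$ to a composition series. Choose a basis of $M$ compatible with this filtration and let $\app{\lambda}{\bb{K}^{*}}{\GL_{d}(\bb{K})}$ act by multiplication by $t^{k}$ on the $k$-th graded piece. A direct computation shows that
\[
\rho_{0}(a) \deq \lim_{t \to 0} \lambda(t) \rho(a) \lambda(t)^{-1}
\]
exists for every $a \in A$ and equals the block-diagonal representation corresponding to the associated graded module $\mathrm{gr}(M) \isom N \oplus M/N \oplus \dots$. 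Since $N$ is not complemented in $M$, we have $M \not\isom \mathrm{gr}(M)$ as $A$-modules, so $\rho_{0}$ is not equivalent to $\rho$; thus $G_{d} \cdot \rho_{0}$ is a distinct orbit contained in $\overline{G_{d} \cdot \rho} \setminus G_{d} \cdot \rho$, contradicting closedness.

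For the converse, assume $M$ is semisimple and let $\rho'$ lie in the closure of $G_{d} \cdot \rho$. By Hilbert--Mumford we may realize $\rho'$ (up to conjugation) as $\lim_{t \to 0} \lambda(t) \rho \lambda(t)^{-1}$ for some cocharacter $\lambda$. Decomposing $\bb{K}^{d}$ into the weight spaces of $\lambda$ and letting $F^{\geq k}$ be the sum of weight spaces of weight $\geq k$ produces an $A$-stable filtration of $M$ (stability follows because the limit exists), and the limit representation realizes precisely the associated graded $\mathrm{gr}_{F}(M)$. For a semisimple module the associated graded of any filtration is canonically isomorphic to $M$ itself (every short exact sequence splits, so each $F^{\geq k}/F^{\geq k+1}$ is a direct summand). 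Hence $\rho' $ is equivalent to $\rho$, i.e.\ $\rho' \in G_{d} \cdot \rho$, proving closedness.

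The main obstacle is importing the Hilbert--Mumford criterion and cleanly implementing the cocharacter/filtration dictionary, which are the nontrivial GIT inputs. One must also take care that the argument goes through with $G_{d} = \PGL_{d}(\bb{K})$ rather than $\GL_{d}(\bb{K})$, but since the scalar subgroup acts trivially the two groups define the same orbits, so cocharacters of $\GL_{d}$ can be used freely. The remaining steps---checking that existence of the matrix limit forces $A$-stability of the weight filtration, and that a semisimple module is canonically isomorphic to the associated graded of every filtration---are straightforward linear algebra once the framework is in place.
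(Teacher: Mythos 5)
The paper offers no proof of this statement: it is quoted as a theorem of Artin with a citation to \cite{art69}, so your attempt can only be measured against the standard GIT argument, which is indeed the one you are reconstructing. Your forward direction is essentially correct: conjugating by the one-parameter subgroup attached to a basis adapted to a composition series makes the limit exist and equal the block-diagonal representation of the associated graded module. One wrinkle in your justification, though: you argue \(M\not\isom \mathrm{gr}(M)\) ``since \(N\) is not complemented''. For the two-step filtration \(0\subset N\subset M\) the implication ``\(M\isom N\oplus M/N\) as abstract modules \(\Rightarrow\) the extension splits'' is a genuinely nontrivial fact (Miyata's theorem), not linear algebra. It is also unnecessary: once you refine to a full composition series, \(\mathrm{gr}(M)\) is a direct sum of simples, hence semisimple, and \(M\) is not semisimple by hypothesis, so \(M\not\isom\mathrm{gr}(M)\) immediately and \(\rho_{0}\) lies in \(\overline{G_{d}\mph\rho}\setminus G_{d}\mph\rho\).

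The genuine gap is in the converse. The version of the Hilbert--Mumford criterion you invoke---that \emph{every} point \(\rho'\) of the orbit closure is reachable, up to conjugation, as \(\lim_{t\to 0}\lambda(t)\rho\lambda(t)^{-1}\) for a cocharacter \(\lambda\)---is false in general. The correct statement (Hilbert--Mumford--Birkes, strengthened by Kempf, over an algebraically closed field) only guarantees a one-parameter subgroup whose limit lands in a prescribed \emph{closed} \(G\)-stable subset of \(\overline{G\mph x}\), in particular in the unique closed orbit of the closure; non-closed boundary orbits need not be reachable. In the module-variety setting this is exactly the known discrepancy between the degeneration order and the extension order on finite-dimensional modules (Riedtmann, Zwara): one-parameter limits of \(\rho\) realize precisely the associated gradeds of \(A\)-stable filtrations of \(M\), and there exist degenerations not of that form. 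Fortunately the repair is standard and preserves your computation: suppose \(G_{d}\mph\rho\) is not closed; its boundary is a nonempty closed \(G_{d}\)-stable set and hence contains the unique closed orbit \(C\) of \(\overline{G_{d}\mph\rho}\), with \(C\neq G_{d}\mph\rho\); apply the Birkes--Kempf theorem to reach \(C\) by a cocharacter limit; your weight-filtration argument (the limit exists, so the filtration \(F^{\geq k}\) is \(A\)-stable, and the limit represents \(\mathrm{gr}_{F}(M)\isom M\) by semisimplicity) then shows \(C = G_{d}\mph\rho\), a contradiction. Two further points to flag: Hilbert--Mumford requires \(\bb{K}\) algebraically closed, whereas the paper's \(\bb{K}\) is an arbitrary field of characteristic zero, so strictly speaking one should pass to the algebraic closure; and your reduction from \(G_{d}=\PGL_{d}(\bb{K})\) to cocharacters of \(\GL_{d}(\bb{K})\) is fine as stated, since the scalars act trivially.
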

Putting all together, it follows that for each \(d\in \bb{N}\) there exists an
affine algebraic variety (or scheme)
\begin{equation}
  \label{eq:def-R}
  \mathcal{R}_{d}^{A}\deq \cq{\Rep_{d}^{A}}{G_{d}},
\end{equation}
equipped with a surjective morphism \(\app{\pi}{\Rep_{d}^{A}}
{\mathcal{R}_{d}^{A}}\), that parametrizes equivalence classes of
\emph{semisimple} \(d\)-dimensional representations of \(A\).

Let us remark that, depending on the situation at hand, the categorical
quotient \eqref{eq:def-R} may not be the best choice as a quotient space for
\(\Rep_{d}^{A}\); for instance it can be too small, or too big, or too
singular. It is possible, and often useful, to define a more general class of
quotients by taking a \(G_{d}\)-invariant open subset in \(\Rep_{d}^{A}\)
defined by suitable (semi)stability conditions and looking for a variety which
parametrizes those \(G_{d}\)-orbits which are closed in this subset (see again
\cite{mfk94} for the general theory of these ``GIT quotients''). In the next
section we shall see a particular case of this approach, which takes advantage
of some additional structure on \(\Rep_{d}^{A}\) (namely a symplectic form) in
order to construct smaller and more tractable quotient spaces.

\subsection{The correspondence between the associative and the commutative worlds}
\label{a2c-map}

We shall now explain, following \cite[Section 12]{ginzlect}, how each
associative-geometric object defined on the algebra \(A\) induces a
corresponding \(G_{d}\)-invariant object on the space of \(d\)-dimensional
representations of \(A\), and consequently a geometric object on its quotient
spaces.

Let us start from regular functions. By definition, the space of
representations of \(A\) comes equipped with an \emph{evaluation map}
\[ \Rep_{d}^{A}\times A\to \Mat_{d,d}(\bb{K}) \]
given by \(\p{\rho,a}\mapsto \rho(a)\). Keeping the second argument of this
map fixed we see that every \(a\in A\) determines a \emph{matrix-valued
  function} on \(\Rep_{d}^{A}\), that is a map \(\app{\ind{a}}{\Rep_{d}^{A}}
{\Mat_{d,d}(\bb{K})}\) given by \(\rho\mapsto \rho(a)\) for every \(\rho\in
\Rep_{d}^{A}\). Taking the trace of the resulting matrix we obtain a genuine
function \(\app{\tind{a}}{\Rep_{d}^{A}}{\bb{K}}\). Explicitly,
\[ \tind{a}(\rho) = \tr \ind{a}(\rho) = \tr \rho(a). \]
Since \(a\) can be expressed as a polynomial in some generating set \(\{x_{1},
\dots, x_{n}\}\) for \(A\) we see that \(\tind{a}(\rho)\) can be expressed as a
polynomial in the entries of the matrices \(\rho(x_{1}), \dots, \rho(x_{n})\).
But these entries generate the coordinate algebra of \(\Rep_{d}^{A}\), hence
\(\tind{a}\) is a regular function on \(\Rep_{d}^{A}\). It follows that the
correspondence \(a\mapsto \tind{a}\) defines a map
\begin{equation}
  \label{eq:phi}
  \app{\phi}{A}{\bb{K}[\Rep_{d}^{A}]}.
\end{equation}
Observe that this map is \(\bb{K}\)-linear, since for every \(a,b\in A\) one
has
\[ \wtind{a+b}(\rho) = \tr \rho(a+b) = \tr (\rho(a) + \rho(b)) = \tr
\rho(a) + \tr \rho(b) = \tind{a}(\rho) + \tind{b}(\rho), \]
whence \(\phi(a+b) = \phi(a)+\phi(b)\), and similarly for every \(\lambda\in
\bb{K}\) and \(a\in A\) one has
\[ \wtind{\lambda a}(\rho) = \tr \rho(\lambda a) = \tr (\lambda\rho(a)) =
\lambda \tr \rho(a) = \lambda \tind{a}(\rho), \]
whence \(\phi(\lambda a) = \lambda\phi(a)\).

Moreover, the map \(\phi\) vanishes on the linear subspace \([A,A]\subseteq
A\). To check this it is sufficient to show that \(\tind{c}=0\) for every
\(c\in A\) that can be written as a commutator, say \(c = [a,b]\). But then
\[ \tind{c}(\rho) = \tr \rho(c) = \tr (\rho(a)\rho(b) - \rho(b)\rho(a)) = 0 \]
by the cyclicity of the trace. It follows that the map \eqref{eq:phi} descends
from \(A\) to \(\DR^{0}(A)\).

Finally, we claim that the image of \(\phi\) is contained in the subalgebra of
\(\bb{K}[\Rep_{d}^{A}]\) consisting of \(G_{d}\)-invariant functions. To see
this let us start by noting that \(\DR^{0}(A)\) is generated, as a linear
space, by the necklace words in \(A\). Then it suffices to show that for every
necklace word \(a = a_{1}\dots a_{\ell}\) the regular function \(\tind{a}\) is
constant along \(G_{d}\)-orbits. Let us take \(\rho\in \Rep_{d}^{A}\), \(g\in
G_{d}\) and define \(\rho'\deq g.\rho\); then
\[ \tind{a}(\rho') = \tr \rho'(a_{1})\dots \rho'(a_{\ell}) = \tr
g\rho(a_{1})g^{-1}g\rho(a_{2})g^{-1}\dots g\rho(a_{\ell})g^{-1} = \tr
\rho(a_{1})\dots \rho(a_{\ell}) = \tind{a}(\rho) \]
as claimed. We conclude that there is a well defined linear map
\begin{equation}
  \label{eq:tr-DR0-inv}
  \app{\phi}{\DR^{0}(A)}{\bb{K}[\Rep_{d}^{A}]^{G_{d}}}
\end{equation}
which sends a necklace word \(a_{1}\dots a_{\ell}\) to the \(G_{d}\)-invariant
function
\[ \rho\mapsto \tr \rho(a_{1})\dots \rho(a_{\ell}) \]
on \(\Rep_{d}^{A}\). In this sense each regular function on the associative
variety determined by \(A\) induces a corresponding regular function on each
quotient space \(\mathcal{R}_{d}^{A}\).

It should be stressed that the map \eqref{eq:tr-DR0-inv} is far from being
surjective in general. However, it follows from the \emph{first fundamental
  theorem of matrix invariants} (see e.g. \cite[Theorem 1.6]{lebr08}) that the
image of \(\phi\) generates \(\bb{K}[\Rep_{d}^{A}]^{G_{d}}\) as an algebra.

We can now clear up the mystery regarding the lack of a product between
regular functions on associative varieties. The point is, of course, that
the map \eqref{eq:phi} is not a morphism of algebras: given \(a,b\in A\) we
have that
\[ \wtind{ab}(\rho) = \tr \rho(ab) = \tr \rho(a)\rho(b), \]
which is different from the regular function on \(\Rep_{d}^{A}\) given by
\[ \tind{a}(\rho)\cdot \tind{b}(\rho) = \tr \rho(a)\cdot \tr \rho(b) \]
as the trace of a product is not the product of the traces. Obviously there is
a well-defined product between invariant functions on each representation
space \(\Rep_{d}^{A}\), but this product does not come from the multiplication
map on the algebra \(A\) (nor it is easily expressible in terms of the
latter).

To further elaborate on this point let us consider in detail the case of
associative affine space, \(A = \falg{\bb{K}}{x_{1}, \dots, x_{n}}\). As
explained in the previous subsection, \(\Rep_{d}^{A}\) can be identified with
the linear space \(\Mat_{d,d}(\bb{K})^{\oplus n}\). The map
\eqref{eq:tr-DR0-inv} then sends a generic necklace word \(f = x_{i_{1}}\dots
x_{i_{\ell}}\) (with \(i_{1}, \dots, i_{\ell}\in \{1,\dots,n\}\)) to the
\(G_{d}\)-invariant function
\[ \tind{f}(X_{1}, \dots, X_{n}) = \tr X_{i_{1}}\dots X_{i_{\ell}}. \]
Not every invariant function on \(\Mat_{d,d}(\bb{K})^{\oplus n}\) is of this
form; for instance there is no hope of getting the function \(\tr X_{1}\cdot
\tr X_{2}\) from an element of \(\DR^{0}(A)\). Moreover, even the functions in
the image of \(\phi\) are subject to a certain set of relations depending on
\(d\) (see \cite[Chapter 1]{lebr08}). For instance when \(d=2\) one has the
relation
\begin{multline*}
  \tr X_{1}X_{2}X_{3} + \tr X_{2}X_{1}X_{3} - \tr X_{1} \tr X_{2}X_{3} - \tr
  X_{2} \tr X_{1}X_{3} + {}\\
  {} + \tr X_{1}\tr X_{2} \tr X_{3}- \tr X_{1}X_{2} \tr X_{3} = 0.
\end{multline*}
These relations are also invisible at the level of the linear space
\(\DR^{0}(A)\).

Now let us turn our attention to the vector fields. In order to establish a
correspondence between associative vector fields on \(A\) (that is,
derivations \(A\to A\)) and ordinary (algebraic) vector fields on
\(\Rep_{d}^{A}\) we need a description for the tangent space to a point
\(\rho\in \Rep_{d}^{A}\). In fact it is not difficult to prove (see \cite[\S
12.4]{ginzlect}) that there is an isomorphism
\[ T_{\rho}\Rep_{d}^{A}\isom \Der(A,\Mat_{d,d}(\bb{K})) \]
where the \(A\)-bimodule structure on \(\Mat_{d,d}(\bb{K})\) is given by the
left and right actions defined, respectively, by
\[ a.M\deq \rho(a) M \quad\text{ and }\quad M.b\deq M \rho(b) \]
for every \(a,b\in A\) and \(M\in \Mat_{d,d}(\bb{K})\). More explicitly, a
tangent vector to \(\Rep_{d}^{A}\) at \(\rho\) is specified by a
\(\bb{K}\)-linear map \(\app{\varphi}{A}{\Mat_{d,d}(\bb{K})}\) such that
\[ \varphi(ab) = \varphi(a)\rho(b) + \rho(a)\varphi(b) \qquad\text{ for every
} a,b\in A. \]
Now let \(\theta\) be a derivation \(A\to A\); we wish to define a
corresponding global vector field \(\tind{\theta}\) on \(\Rep_{d}^{A}\).
This means that for every \(\rho\in \Rep_{d}^{A}\) we need to specify a
derivation \(\tind{\theta}_{\rho}\) from \(A\) to \(\Mat_{d,d}(\bb{K})\) (with
the \(A\)-bimodule structure induced by \(\rho\)). We set
\[ \tind{\theta}_{\rho}(a)\deq \rho(\theta(a)) = \wind{\theta(a)}(\rho). \]
The derivation property is easy to check:
\[ \tind{\theta}_{\rho}(ab) = \rho(\theta(ab)) = \rho(\theta(a)b + a\theta(b))
= \rho(\theta(a))\rho(b) + \rho(a)\rho(\theta(b)) =
\tind{\theta}_{\rho}(a)\rho(b) + \rho(a)\tind{\theta}_{\rho}(b). \]
We also have the following important result, whose (non-trivial) proof can be
found in \cite[Proposition 12.4.4]{ginzlect}.
\begin{teo}
  For any finitely generated associative algebra \(A\) the map \(\Der(A)\to
  \Gamma(\mathcal{T}\Rep_{d}^{A})\) defined by \(\theta\mapsto \tind{\theta}\) is a
  morphism of Lie algebras.
\end{teo}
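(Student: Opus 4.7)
The plan is to verify that $\theta\mapsto \tind{\theta}$ preserves brackets by testing both sides on a convenient generating set for the coordinate algebra $\bb{K}[\Rep_{d}^{A}]$ and invoking the general principle that an algebraic vector field on an affine variety is determined by its action on algebra generators. The $\bb{K}$-linearity of the map is immediate from the defining formula $\tind{\theta}_{\rho}(a)=\rho(\theta(a))$, so the real content is the identity
\[ \wind{\comm{\theta_{1}}{\theta_{2}}} = \comm{\tind{\theta_{1}}}{\tind{\theta_{2}}} \qquad\text{ for every } \theta_{1},\theta_{2}\in \Der(A). \]

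Fixing a presentation $A\isom \falg{\bb{K}}{x_{1},\dots,x_{n}}/I$, the coordinate algebra $\bb{K}[\Rep_{d}^{A}]$ is generated by the matrix-entry functions $\app{\ind{x_{i}}_{jk}}{\Rep_{d}^{A}}{\bb{K}}$ defined by $\ind{a}_{jk}(\rho)\deq (\rho(a))_{jk}$ for $a\in A$. Using the identification $T_{\rho}\Rep_{d}^{A}\isom \Der(A,\Mat_{d,d}(\bb{K}))$ recalled just before the theorem, the action of any vector field $\xi$ on such a coordinate function is given by applying $\xi_{\rho}\in \Der(A,\Mat_{d,d}(\bb{K}))$ to $a$ and reading off the $(j,k)$ entry. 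The first key step is therefore to observe the identity
\[ \tind{\theta}(\ind{a}_{jk})(\rho) = \p{\tind{\theta}_{\rho}(a)}_{jk} = \p{\rho(\theta(a))}_{jk} = \ind{\theta(a)}_{jk}(\rho), \]
so that $\tind{\theta}(\ind{a}_{jk}) = \ind{\theta(a)}_{jk}$ as regular functions on $\Rep_{d}^{A}$, for every $a\in A$.

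Iterating this identity twice yields the second key computation:
\begin{align*}
  \comm{\tind{\theta_{1}}}{\tind{\theta_{2}}}(\ind{a}_{jk})
  &= \tind{\theta_{1}}\p{\ind{\theta_{2}(a)}_{jk}} - \tind{\theta_{2}}\p{\ind{\theta_{1}(a)}_{jk}}\\
  &= \wind{\theta_{1}(\theta_{2}(a))}_{jk} - \wind{\theta_{2}(\theta_{1}(a))}_{jk}
  = \wind{\comm{\theta_{1}}{\theta_{2}}(a)}_{jk}
  = \wind{\comm{\theta_{1}}{\theta_{2}}}(\ind{a}_{jk}).
\end{align*}
Taking $a=x_{i}$ as $i$ and $j,k$ range over their index sets, the two vector fields $\comm{\tind{\theta_{1}}}{\tind{\theta_{2}}}$ and $\wind{\comm{\theta_{1}}{\theta_{2}}}$ agree on a generating set of $\bb{K}[\Rep_{d}^{A}]$, hence agree globally.

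The main subtlety, and the step where one has to be a little careful, is the very first identity $\tind{\theta}(\ind{a}_{jk})=\ind{\theta(a)}_{jk}$: one must justify that the action of the vector field $\tind{\theta}$ on the function $\ind{a}_{jk}$ is obtained by evaluating the tangent vector $\tind{\theta}_{\rho}$, viewed as a derivation $A\to \Mat_{d,d}(\bb{K})$, on the element $a\in A$ and reading the $(j,k)$ entry. This amounts to verifying that under the isomorphism $T_{\rho}\Rep_{d}^{A}\isom \Der(A,\Mat_{d,d}(\bb{K}))$ the derivation associated to a tangent vector $v$ sends $a\in A$ to the column of differentials $\p{v(\ind{a}_{jk})}_{j,k}$; this is essentially the content of the cited \cite[\S 12.4]{ginzlect}. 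Once this compatibility is in hand the rest of the argument is a pure bookkeeping exercise, and the need to pass through matrix-entry functions (rather than the trace functions $\tind{a}$, which do \emph{not} generate $\bb{K}[\Rep_{d}^{A}]$ as an algebra) is the only real conceptual point.
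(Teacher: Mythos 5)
Your argument is sound, but be aware that the paper itself contains no proof to compare against: it explicitly defers the ``(non-trivial)'' proof to \cite[Proposition 12.4.4]{ginzlect}. Judged on its own merits, your route is the standard one and it works: the matrix-entry functions \(\ind{x_{i}}_{jk}\) (unlike the trace functions \(\tind{a}\), as you rightly stress) generate \(\bb{K}[\Rep_{d}^{A}]\), a derivation of a commutative algebra is determined by its values on generators, and your two displayed identities then force \(\comm{\tind{\theta_{1}}}{\tind{\theta_{2}}} = \wind{\comm{\theta_{1}}{\theta_{2}}}\). You have also correctly isolated where the genuine content sits: namely in the compatibility between the pointwise prescription \(\tind{\theta}_{\rho} = \rho\circ\theta\) and the action of \(\tind{\theta}\) on coordinate functions under the identification \(T_{\rho}\Rep_{d}^{A}\isom \Der(A,\Mat_{d,d}(\bb{K}))\), together with the fact that \(\tind{\theta}\) is a globally defined \emph{algebraic} vector field at all --- equivalently, that the induced derivation of the ambient polynomial ring \(\mathcal{A}_{n,d}\) preserves the ideal \(J_{A}\). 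These are exactly the points the paper's citation is carrying, so invoking \cite[\S 12.4]{ginzlect} for them is legitimate.

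One refinement would make your proof self-contained and more robust. Instead of establishing \(\tind{\theta}(\ind{a}_{jk}) = \ind{\theta(a)}_{jk}\) by pointwise evaluation of tangent vectors, define the derivation \(D_{\theta}\) of \(\bb{K}[\Rep_{d}^{A}]\) directly on generators by \(D_{\theta}(x_{i,jk})\deq \ind{\theta(x_{i})}_{jk}\), extend by the Leibniz rule, and prove the identity by induction on word length using
\[ \ind{(bc)}_{jk} = \sum_{l=1}^{d} \ind{b}_{jl}\,\ind{c}_{lk}, \]
which gives \(D_{\theta}(\ind{(bc)}_{jk}) = \ind{(\theta(b)c + b\,\theta(c))}_{jk} = \ind{\theta(bc)}_{jk}\). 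This purely algebraic version matters in exactly one situation your argument glosses over: when \(\Rep_{d}^{A}\) is taken scheme-theoretically and \(J_{A}\) is not radical, regular functions are \emph{not} determined by their values at points, so identities ``checked at every \(\rho\)'' do not by themselves identify elements of the coordinate ring. Since the paper announces it will treat \(\Rep_{d}^{A}\) as a reduced affine variety, your pointwise formulation is acceptable in context, but the Leibniz induction costs nothing extra and covers both cases.
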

Here \(\mathcal{T}\Rep_{d}^{A}\) denotes the tangent sheaf to
\(\Rep_{d}^{A}\) and \(\Gamma(\mathcal{T}\Rep_{d}^{A})\) its space of
global sections.

It can be worthwhile to see explicitly how this correspondence works in the
case \(A = \falg{\bb{K}}{x_{1}, \dots, x_{n}}\). A derivation on a free
algebra is completely specified by sending each generator \(x_{i}\) to any
chosen element \(f_{i}\in A\). Let us write the derivation defined in this way
as
\[ \theta(x_{1}, \dots, x_{n}) = \p{f_{1}, \dots, f_{n}}. \]
The corresponding vector field on \(\Rep_{d}^{A} = \Mat_{d,d}(\bb{K})^{\oplus
  n}\), then, is simply
\[ \tind{\theta}_{(X_{1}, \dots, X_{n})} = \left( f_{1}(X_{1}, \dots, X_{n}),
  \dots, f_{n}(X_{1}, \dots, X_{n})\right) \]
where we have used the fact that the tangent space to a linear space is
canonically isomorphic to the linear space itself.

Finally let us consider the correspondence between associative \(p\)-forms on
\(A\) (that is, elements of the Karoubi-de Rham complex \(\DR^{p}(A)\)) and
ordinary differential forms on \(\Rep_{d}^{A}\). We shall denote by
\(\Omega^{p}\Rep_{d}^{A}\) the sheaf of (ordinary) differential \(p\)-forms on
the affine variety \(\Rep_{d}^{A}\). Notice that
\(\Omega^{\bullet}\Rep_{d}^{A}\) is a dg-algebra when equipped with the
ordinary exterior differential and exterior product.

We start from K\"ahler differentials. Given \(\alpha\in \Omega^{1}(A)\), say
\(\alpha = a_{0}\de a_{1}\), we consider the \emph{matrix-valued differential
  form} on \(\Rep_{d}^{A}\) whose value at \(\rho\in \Rep_{d}^{A}\) is given
by the matrix product
\[ \ind{\alpha}_{\rho}\deq \ind{a}_{0}(\rho)\cdot \de\ind{a}_{1}(\rho) \]
where \(\de\ind{a}_{1}\) is the differential of the matrix-valued function
\(\app{\ind{a}_{1}}{\Rep_{d}^{A}}{\Mat_{d,d}(\bb{K})}\). (In other words, the
\((i,j)\) entry of the matrix \(\de\ind{a}_{1}(\rho)\) is the differential of
\(\ind{a}_{1}(\rho)_{ij}\), seen as a function of \(\rho\).) Clearly
\(\ind{\alpha}\) is an element of \(\Gamma(\Omega^{1}\Rep_{d}^{A})\otimes
\Mat_{d,d}(\bb{K})\), that is a matrix-valued global section of the sheaf of
1-forms on \(\Rep_{d}^{A}\). Exactly as we did above for regular functions, we
can turn \(\ind{\alpha}\) into a scalar-valued differential form by taking
traces. We denote by \(\tind{\alpha}\) the corresponding global section of
\(\Omega^{1}\Rep_{d}^{A}\):
\[ \tind{\alpha}_{\rho} = \tr \ind{a}_{0}(\rho) \de\ind{a}_{1}(\rho). \]
It is again immediate to check that the correspondence \(\alpha\mapsto
\tind{\alpha}\) defines a linear map \(\Omega^{1}(A)\to
\Gamma(\Omega^{1}\Rep_{d}^{A})\) which vanishes on the linear subspace
\([A,\Omega^{1}(A)]\subseteq \Omega^{1}(A)\) and is constant along
\(G_{d}\)-orbits. This correspondence then induces a map
\[ \DR^{1}(A)\to \Gamma(\Omega^{1}\Rep_{d}^{A})^{G_{d}} \]
which realizes the correspondence between associative 1-forms and
\(G_{d}\)-invariant differential forms on \(\Rep_{d}^{A}\).

A similar procedure works for differential forms of degree \(p>1\): given a
\(p\)-form \(\omega\in \Omega^{p}(A)\), say \(\omega = a_{0}\de a_{1}\dots \de
a_{p}\), there is a corresponding matrix-valued differential \(p\)-form on
\(\Rep^{A}_{d}\) (that is, an element of \(\Gamma(\Omega^{p}\Rep_{d}^{A})
\otimes \Mat_{d,d}(\bb{K})\)) whose value at \(\rho\) is
\[ \ind{\omega}_{\rho}\deq \ind{a}_{0}(\rho)\cdot \de\ind{a}_{1}(\rho)
\wedge \dots \wedge \de\ind{a}_{p}(\rho) \]
where the exterior product \(\wedge\) is extended from 1-forms to \(d\times
d\) matrices of 1-forms in the obvious way\footnote{Namely, \((A\wedge B)_{ij}
  = \sum_{k=1}^{d} A_{ik}\wedge B_{kj}\). Notice that the resulting product is
  no longer skew-symmetric.}.

Taking the trace of the resulting matrix we get the scalar-valued \(p\)-form
\begin{equation}
  \label{eq:corr-df}
  \tind{\omega}_{\rho} = \tr \ind{a}_{0}(\rho)\cdot \de\ind{a}_{1}(\rho)
  \wedge \dots \wedge \de\ind{a}_{p}(\rho).
\end{equation}
The map \(\omega\mapsto \tind{\omega}\) vanishes on the subspace of
\(\Omega^{p}(A)\) spanned by graded commutators and is constant along the
orbits of the group \(G_{d}\). Thus we obtain a map
\[ \DR^{\bullet}(A)\to \Gamma(\Omega^{\bullet}\Rep_{d}^{A})^{G_{d}} \]
sending associative differential forms to \(G_{d}\)-invariant differential
forms on \(\Rep_{d}^{A}\).

For example when \(A = \falg{\bb{K}}{x_{1}, x_{2}}\) the associative 1-form
\(\alpha = x_{2}\de x_{1}\) corresponds to the differential form
\[ \tind{\alpha}_{(X_{1}, X_{2})} = \tr X_{2}\de X_{1} \]
on \(\Mat_{d,d}(\bb{K})\oplus \Mat_{d,d}(\bb{K})\), where \(\de X_{1}\) is the
matrix of differentials \((\de x_{1,ij})_{i,j=1\dots d}\). When \(d=2\) the
corresponding coordinate expression for \(\tind{\alpha}\) is
\[ \tr
\begin{pmatrix}
  x_{2,11} & x_{2,12}\\
  x_{2,21} & x_{2,22}
\end{pmatrix}
\begin{pmatrix}
  \de x_{1,11} & \de x_{1,12}\\
  \de x_{1,21} & \de x_{1,22}
\end{pmatrix}
= x_{2,11}\de x_{1,11} + x_{2,12}\de x_{1,21} + x_{2,21}\de x_{1,12} +
x_{2,22} \de x_{1,22}. \]
Similarly, given the 2-form \(\omega = x_{1}\de x_{2} x_{1}\de x_{1}\) (or
rather its equivalence class in \(\DR^{2}(A)\)) the corresponding 2-form on
\(\Mat_{d,d}(\bb{K})\oplus \Mat_{d,d}(\bb{K})\) reads
\[ \tind{\omega}_{(X_{1}, X_{2})} = \tr (X_{1}\de X_{2}\wedge X_{1}\de X_{1}). \]
When \(d=2\) the corresponding coordinate expression in the basis consisting
of the 2-forms \(\de x_{i,jk}\wedge \de x_{\ell,pq}\) involves 16 terms, and
the count goes up very quickly as \(d\) increases.
% \[ x_{1,11}^{2}\de x_{2,11} \wedge \de x_{1,11}
% + x_{1,11}x_{1,12}\de x_{2,11} \wedge \de x_{1,21}
% + x_{1,11}x_{1,21}\de x_{2,11} \wedge \de x_{1,12}
% + x_{1,11}x_{1,22}\de x_{2,11} \wedge \de x_{1,22} \] 
% \[+ x_{1,12}x_{1,11}\de x_{2,21} \wedge \de x_{1,11}
% + x_{1,12}x_{1,12}\de x_{2,21} \wedge \de x_{1,21}
% + x_{1,12}x_{1,21}\de x_{2,21} \wedge \de x_{1,12}
% + x_{1,12}x_{1,22}\de x_{2,21} \wedge \de x_{1,22}\]
% \[+ x_{1,21}x_{1,11}\de x_{2,12} \wedge \de x_{1,11}
% + x_{1,21}x_{1,12}\de x_{2,12} \wedge \de x_{1,21}
% + x_{1,21}x_{1,21}\de x_{2,12} \wedge \de x_{1,12}
% + x_{1,21}x_{1,22}\de x_{2,12} \wedge \de x_{1,22}\]
% \[+ x_{1,22}x_{1,11} \de x_{2,22} \wedge \de x_{1,11}
% + x_{1,22}x_{1,12} \de x_{2,22} \wedge \de x_{1,21}
% + x_{1,22}x_{1,21} \de x_{2,22} \wedge \de x_{1,12}
% + x_{1,22}x_{1,22} \de x_{2,22} \wedge \de x_{1,22} \]
Already from these simple examples the convenience in dealing with associative
forms compared to ordinary ones is rather evident.

\subsection{Quiver representation spaces}
\label{qrep}

We now consider in particular the case when \(A\) is the path algebra of a
quiver. In this connection let us note that quivers are a fundamental tool in
the representation theory of associative algebras; we refer the interested
reader to the textbook \cite{ass06} for more information about this topic.

Let \(A = \bb{K}Q\) be the path algebra of a quiver \(Q\) with vertex set \(I
= \{1,\dots,m\}\). Recall the important role played by the
(finite-dimensional, semisimple, commutative) subalgebra \(B\subseteq A\)
spanned by the complete set of idempotents \((e_{i})_{i\in I}\) corresponding
to trivial paths in \(Q\). As we saw in \S\ref{qpalg}, the path algebra \(A\)
can be seen as a tensor algebra over \(B\); it is then natural to consider
only those representations of \(A\) which keep track of this structure.

Observe now that \(B\)-algebra structures on \(\Mat_{d,d}(\bb{K})\) are in one
to one correspondence with direct sum decompositions of the linear space
\(V\deq \bb{K}^{d}\),
\begin{equation}
  \label{eq:dec-Kd}
  V = \bigoplus_{i\in I} V_{i},
\end{equation}
such that \(\sum_{i} \dim V_{i} = d\). Explicitly, one defines a morphism
\(B\to \Mat_{d,d}(\bb{K})\) by sending the idempotent \(e_{i}\) to the matrix
representing the map \(\app{j_{i}\circ \pi_{i}}{V}{V}\), where
\(\app{\pi_{i}}{V}{V_{i}}\) is the canonical projection and
\(\app{j_{i}}{V_{i}}{V}\) is the canonical immersion of the \(i\)-th factor.
As the only invariants of the decomposition \eqref{eq:dec-Kd} are the
dimensions of the subspaces \(V_{i}\), we conclude that each \(B\)-algebra
structure on \(\Mat_{d,d}(\bb{K})\) is completely specified by a vector
\[ \bm{d} = \p{d_{1}, \dots, d_{m}}\in \bb{N}^{m} \]
such that \(d_{1} + \dots + d_{m} = d\). This \(m\)-tuple of natural numbers
is called the \textbf{dimension vector} of the representation.

% We also have a \(B\)-bimodule morphism, called the \emph{trace map},
% \(\app{\tr_{B}}{\Mat_{d,d}(\bb{K})}{B}\) defined by \(Z\mapsto \p{\tr Z_{1},
%   \dots, \tr Z_{m}}\), where \(Z_{i}\) is the projection of \(Z\) on \(V_{i}\).

Now we would like to characterize the space of representations of \(A\) with a
fixed dimension vector \(\bm{d}\). To this end recall the natural isomorphism
\eqref{eq:27} given by the universal property of the tensor algebra. In the
present situation, it can be used to obtain a bijection
\[ B\downarrow \alg{\bb{K}}(\bm{T}_{B}(E_{Q}),\Mat_{d,d}(\bb{K})) \isom
\bimod{B}(E_{Q},U(\Mat_{d,d}(\bb{K}))) \]
between the set of \(B\)-algebra morphisms from \(\bm{T}_{B}(E_{Q})\isom A\)
to \(\Mat_{d,d}(\bb{K})\) (that is, representations of the path algebra which
respect the \(B\)-algebra structure) and the set of \(B\)-bimodule morphisms
\(E_{Q}\to \Mat_{d,d}(\bb{K})\). Such a morphism is completely determined by
sending each arrow \(\app{\xi}{i}{j}\) in \(Q\) to a linear map
\[ \app{\rho(\xi)}{V_{i}}{V_{j}} \]
between the subspaces corresponding to the source and target vertices of
\(\xi\). It follows that the space of representations of \(A\) with dimension
vector \(\bm{d} = \p{d_{1}, \dots, d_{m}}\) coincides with the linear space
\begin{equation}
  \label{eq:Rep-Q}
  \Rep(Q,\bm{d})\deq \bigoplus_{i,j\in I} \bigoplus_{\app{\xi}{i}{j}}
  \Mat_{d_{j},d_{i}}(\bb{K}).
\end{equation}
The notion of equivalence between representations must also be slightly
adjusted, in order to preserve the chosen \(B\)-algebra structure on
\(\Mat_{d,d}(\bb{K})\). Namely, we consider the subgroup of \(\GL(V) =
\GL_{d}(\bb{K})\) consisting of the endomorphisms of \(V\) which preserve the
direct sum decomposition \eqref{eq:dec-Kd}. This means acting on each subspace
\(V_{i}\) with a copy of the general linear group \(\GL(V_{i}) =
\GL_{d_{i}}(\bb{K})\), hence the subgroup in question is
\begin{equation}
  \label{eq:pr-gl}
  \prod_{i\in I} \GL_{d_{i}}(\bb{K}).
\end{equation}
Explicitly, the action of an \(m\)-tuple \(g = \p{g_{1}, \dots, g_{m}}\) on a
point \(\rho\in \Rep(Q,\bm{d})\) is
\[ \rho(\xi)\mapsto g_{j}\rho(\xi)g_{i}^{-1} \quad\text{ for every }
\app{\xi}{i}{j} \text{ in } Q. \]
It is immediate to note that the subgroup \(H\) consisting of
\(m\)-tuples of the form \((\lambda I_{d_{1}}, \dots, \lambda I_{d_{m}})\) for
some \(\lambda\in \bb{K}^{*}\) acts trivially on \(\Rep(Q,\bm{d})\), so that
we can just as well consider the group
\begin{equation}
  \label{eq:gl-d}
  G_{\bm{d}}\deq (\prod_{i\in I} \GL_{d_{i}}(\bb{K}))/H.
\end{equation}
We are now in the same situation already considered in \S\ref{def-Rep}:
namely, we have the action of the linear reductive group \eqref{eq:gl-d} on
the affine algebraic variety \eqref{eq:Rep-Q}. We can thus consider the
corresponding categorical quotient,
\begin{equation}
  \label{eq:naive-quot}
  \cq{\Rep(Q,\bm{d})}{G_{\bm{d}}},
\end{equation}
whose points correspond to closed \(G_{\bm{d}}\)-orbits in \(\Rep(Q,\bm{d})\),
that is equivalence classes of semisimple representations of \(Q\) with
dimension vector \(\bm{d}\). These spaces have been extensively studied in the
literature, starting from the seminal paper \cite{lp90}. 

As remarked at the end of \S\ref{def-Rep}, one can also consider more general
GIT quotients of \(\Rep(Q,\bm{d})\) obtained by imposing suitable
(semi)stability conditions. The reader can hardly do better than consult
\cite{ginz09} for a detailed review of the moduli problem for quiver
representations.

The correspondence between associative and commutative objects described in
\S\ref{a2c-map} generalizes to the above setting as soon as we consider
derivations and differential forms on \(A\) relative to the subalgebra \(B\).
For instance, the condition that a derivation \(\app{\theta}{A}{A}\) vanishes
on \(B\) is exactly what is needed in order to insure that the induced vector
field \(\tind{\theta}\) on \(\Rep(Q,\bm{d})\) preserves the chosen
\(B\)-bimodule structure on \(\Mat_{d,d}(\bb{K})\). As regards differential
forms, the recipe \eqref{eq:corr-df} defines a map
\[ \DR^{\bullet}(A/B)\to \Gamma(\Omega^{\bullet}\Rep(Q,\bm{d}))^{G_{\bm{d}}} \]
relating relative differential forms with ordinary differential forms on
\(\Rep(Q,\bm{d})\) which are invariant with respect to the \(B\)-bimodule
preserving group \(G_{\bm{d}}\subseteq G_{d}\).

\section{Associative symplectic geometry and applications}
\label{s:appl}

In this section we review the idea, introduced by Kontsevich \cite{kont93} and
developed by Ginzburg \cite{ginz01}, of considering the associative analogue
of \emph{symplectic structures}, which play a fundamental role in the
Hamiltonian approach to dynamical systems. Using the differential calculus for
associative algebras reviewed in section \ref{s:ncdf}, every proof from
standard symplectic geometry can be translated verbatim to the new context (at
least insofar it only uses the dg-algebraic properties of the de Rham
complex).

In the second part of the section we briefly survey some applications of the
resulting formalism to the theory of finite-dimensional integrable systems%
\footnote{It must be mentioned that the idea of relating finite-dimensional
  integrable systems to integrable equations on associative algebras goes back
  at least to the pioneering works \cite{os98,ms00}.}. In particular we shall
recover the solution of some models of Calogero-Moser type by the classical
projection method of Olshanetsky and Perelomov.

\subsection{Symplectic structures on associative varieties}
\label{as-symp}

We shall follow the very clear exposition in \cite[Section 14]{ginzlect}. Let
\(A\) be an associative algebra over the field \(\bb{K}\) of characteristic
zero. Given a 2-form \(\omega\in \DR^{2}(A)\) we can define a
\(\bb{K}\)-linear map
\begin{equation}
  \label{eq:57}
  \app{\omega^{\flat}}{\Der(A)}{\DR^{1}(A)}
\end{equation}
by \(\theta\mapsto i_{\theta}(\omega)\). The 2-form \(\omega\) is said to be
\textbf{nondegenerate} if this map is a bijection, in which case we denote its
inverse by \(\app{\omega^{\sharp}}{\DR^{1}(A)}{\Der(A)}\). By definition,
\(\omega^{\sharp}\) maps a 1-form \(\alpha\) to the unique derivation such
that \(i_{\omega^{\sharp}(\alpha)}(\omega) = \alpha\).

An \textbf{associative symplectic variety} is a pair \(\p{A,\omega}\)
consisting of an associative algebra \(A\) and an associative 2-form
\(\omega\) which is closed (\(\de\omega = 0\in \DR^{3}(A)\)) and nondegenerate
in the above sense.

Let \(\p{A,\omega}\) be an associative symplectic variety. A derivation
\(\theta\in \Der(A)\) is called \textbf{symplectic} if
\(\mathcal{L}_{\theta}(\omega) = 0\). We shall denote by \(\Der^{\omega}(A)\)
the linear subspace of \(\Der(A)\) consisting of symplectic derivations. This
is a Lie subalgebra of \(\Der(A)\) since, given two symplectic derivations
\(\theta\) and \(\eta\), we have by \eqref{eq:37a} that
\begin{equation}
  \label{eq:58}
  \mathcal{L}_{\comm{\theta}{\eta}}(\omega) = \comm{\mathcal{L}_{\theta}(\omega)}
  {\mathcal{L}_{\eta}(\omega)} = 0.
\end{equation}
\begin{lem}
  \label{th:symp-der}
  A derivation \(\theta\) is symplectic if and only if \(i_{\theta}(\omega)\)
  is closed in \(\DR^{1}(A)\).
\end{lem}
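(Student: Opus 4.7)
The plan is to derive this directly from Cartan's formula \eqref{eq:34} together with the fact that $\omega$ is closed by hypothesis. The entire argument fits on one line, but I would present it carefully because the statement is made at the level of the Karoubi--de Rham complex $\DR^{\bullet}(A)$, where the definitions of $\de$, $i_\theta$, and $\mathcal{L}_\theta$ all make sense (as we noted just after \eqref{eq:de-DR} and in the discussion following \eqref{eq:34}), so there is no subtlety in applying the identity at this level.

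Concretely, I would proceed as follows. By Cartan's formula we have
\[ \mathcal{L}_\theta(\omega) = \de(i_\theta(\omega)) + i_\theta(\de\omega) \]
as an equality in $\DR^{2}(A)$. Since $\omega$ is symplectic, $\de\omega = 0 \in \DR^{3}(A)$, so the second summand vanishes and we obtain
\[ \mathcal{L}_\theta(\omega) = \de(i_\theta(\omega)) \in \DR^{2}(A). \]
The statement $\mathcal{L}_\theta(\omega) = 0$ is then equivalent to $\de(i_\theta(\omega)) = 0$, which by definition is the statement that the 1-form $i_\theta(\omega) \in \DR^{1}(A)$ is closed.

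There is essentially no obstacle here: the only thing one must take care of is that Cartan's formula, initially defined on $\Omega^{\bullet}(A)$, descends to $\DR^{\bullet}(A)$. But this was already established in the paragraph following equation \eqref{eq:37a}, where it is noted that $\mathcal{L}_\theta$, $i_\theta$ and $\de$ all pass to the quotient by graded commutators, so the identity \eqref{eq:34} continues to hold in $\DR^{\bullet}(A)$. Hence the argument is complete.
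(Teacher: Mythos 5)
Your proof is correct and is exactly the argument the paper has in mind: it disposes of the lemma by remarking that ``the standard proof via Cartan's formula \eqref{eq:34} goes through in the obvious way,'' which is precisely the computation $\mathcal{L}_{\theta}(\omega) = \de(i_{\theta}(\omega)) + i_{\theta}(\de\omega) = \de(i_{\theta}(\omega))$ you spell out. Your added care about why the identity descends from $\Omega^{\bullet}(A)$ to $\DR^{\bullet}(A)$ is a welcome (if implicit in the paper) clarification, not a deviation.
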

The standard proof via Cartan's formula \eqref{eq:34} goes through in the
obvious way. It follows that the image of the isomorphism \eqref{eq:57}
restricted to \(\Der^{\omega}(A)\) coincides with the linear subspace of
closed 1-forms in \(\DR^{1}(A)\).

To each regular function \(f\in \DR^{0}(A)\) we can associate the (obviously
closed) 1-form \(\de f\in \DR^{1}(A)\), hence the corresponding
derivation\footnote{Beware: many authors define \(\theta_{f}\) with the
  opposite sign.}
\begin{equation}
  \label{eq:59}
  \theta_{f}\deq -\omega^{\sharp}(\de f)
\end{equation}
is symplectic, and has every right to be called the \textbf{Hamiltonian
  derivation} determined by \(f\). Thus we have defined a \(\bb{K}\)-linear
map
\begin{equation}
  \label{eq:66}
  \app{\theta}{\DR^{0}(A)}{\Der^{\omega}(A)}
\end{equation}
which sends every regular function on the associative symplectic variety
\((A,\omega)\) to the corresponding Hamiltonian derivation.

Returning to the commutative world, it follows from the discussion in section
\ref{s:rep} that for every \(d\in \bb{N}\) the pair consisting of the affine
variety \(\Rep_{d}^{A}\) and the induced 2-form \(\tind{\omega}\in
\Omega^{2}\Rep_{d}^{A}\) qualifies as an (ordinary) symplectic variety.
Moreover, for every \(f\in \DR^{0}(A)\) the derivation \eqref{eq:59} induces
precisely the Hamiltonian vector field on \(\Rep_{d}^{A}\) determined by the
function \(\tind{f}\in\bb{K}[\Rep_{d}^{A}]\). All these geometric objects are
automatically invariant with respect to the action \eqref{eq:act-G} of the
group \(G_{d}\), hence they descend to every quotient of \(\Rep_{d}^{A}\) with
respect to that action. As we shall see later in this section, by working
directly at the associative-geometric level it is possible to treat in a
unified way any family of dynamical systems whose phase space can be obtained
by a quotient process of this kind.

Now let us look for the associative version of the \emph{Poisson bracket}
naturally associated to a symplectic form. Using the above definitions and the
results established in section \ref{s:ncdf} it is a straightforward task to
verify that the following chain of equalities holds for every \(f,g\in
\DR^{0}(A)\):
\begin{equation}
  \label{eq:60}
  \mathcal{L}_{\theta_{f}}(g) = i_{\theta_{f}}(\de g) =
  i_{\theta_{f}}(-i_{\theta_{g}}(\omega)) = i_{\theta_{g}}i_{\theta_{f}}(\omega)
  = -i_{\theta_{g}}(\de f) = -\mathcal{L}_{\theta_{g}}(f),
\end{equation}
where the various Lie derivative and contraction operators involved are seen
as maps on \(\DR^{\bullet}(A)\), as discussed at the end of \S\ref{assoc-df}.
Let us define the \textbf{Poisson bracket} of \(f\) and \(g\), denoted
\(\poibr{f}{g}\), to be the regular function on \(A\) resulting from any of
the expressions in equation \eqref{eq:60}. Equivalently, this defines a
\(\bb{K}\)-bilinear map
\begin{equation}
  \label{eq:68}
  \app{\poibr{\mph}{\mph}}{\DR^{0}(A)\times \DR^{0}(A)}{\DR^{0}(A)}.
\end{equation}
It follows immediately from \eqref{eq:60} that this bracket operation on
\(\DR^{0}(A)\) is skew-symmetric. The easiest way to prove that it also
satisfies the Jacobi identity is to first make the connection with the
commutator bracket on the corresponding symplectic derivations.

Let us start by noting that, quite generally, given \(\gamma,\eta\in \Der(A)\)
and using equation \eqref{eq:37b} we have
\begin{equation}
  \label{eq:54}
  i_{\comm{\gamma}{\eta}} = \mathcal{L}_{\gamma}\circ i_{\eta} -
  i_{\eta}\circ \mathcal{L}_{\gamma} = (\de\circ i_{\gamma} + i_{\gamma}\circ
  \de)\circ i_{\eta} - i_{\eta}\circ (\de\circ i_{\gamma} + i_{\gamma}\circ \de).
\end{equation}
Then by taking \(\gamma = \theta_{f}\), \(\eta = \theta_{g}\) and applying
\(i_{\comm{\theta_{f}}{\theta_{g}}}\) to \(\omega\) we get
\begin{equation}
  \label{eq:61}
  \begin{aligned}
    i_{\comm{\theta_{f}}{\theta_{g}}}(\omega) &= \de
    i_{\theta_{f}}(i_{\theta_{g}}(\omega)) + i_{\theta_{f}}(\de
    i_{\theta_{g}}(\omega)) - i_{\theta_{g}}(\de i_{\theta_{f}}(\omega)) +
    i_{\theta_{g}}(i_{\theta_{f}}(\de \omega))\\
    &= -\de i_{\theta_{f}}(\de g) - i_{\theta_{f}}(\de \de g) +
    i_{\theta_{g}}(\de \de f)\\
    &= -\de \poibr{f}{g}.
  \end{aligned}
\end{equation}
But \(\theta_{\poibr{f}{g}}\) is, by definition, the only derivation such that
\(i_{\theta_{\poibr{f}{g}}}(\omega) = -\de \poibr{f}{g}\), hence
\begin{equation}
  \label{eq:64}
  \theta_{\poibr{f}{g}} = \comm{\theta_{f}}{\theta_{g}}
\end{equation}
and since the bracket \(\comm{\mph}{\mph}\) satisfies the Jacobi identity, the
same holds true for \(\poibr{\mph}{\mph}\). Summing up, we have proved the
following:
\begin{teo}
  The pair \(\p{\DR^{0}(A),\poibr{\mph}{\mph}}\) is a Lie algebra, and the map
  \(f\mapsto \theta_{f}\) is a Lie algebra morphism from it to
  \(\p{\Der^{\omega}(A), \comm{\mph}{\mph}}\).
\end{teo}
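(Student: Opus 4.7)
The heavy lifting has essentially been done in the discussion preceding the statement: the chain of equalities \eqref{eq:60} already establishes that \(\{f,g\} = \mathcal{L}_{\theta_{f}}(g) = -\mathcal{L}_{\theta_{g}}(f) = -\{g,f\}\), so skew-symmetry is free. Bilinearity over \(\bb{K}\) is also immediate from the \(\bb{K}\)-bilinearity of \(i_{\theta}(\de g)\) in \(\theta\) and \(g\), together with the fact that \(\theta_{f}\) depends \(\bb{K}\)-linearly on \(f\) (since \(\de\) and \(\omega^{\sharp}\) are both \(\bb{K}\)-linear). I would also briefly check that \(\{f,g\}\) really lands in \(\DR^{0}(A)\): since \(\mathcal{L}_{\theta}\) descends to \(\DR^{\bullet}(A)\) and preserves the grading, \(\mathcal{L}_{\theta_{f}}(g) \in \DR^{0}(A)\) whenever \(g\in \DR^{0}(A)\).

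The main point is the Jacobi identity, and the key input is equation \eqref{eq:64}, which states \(\theta_{\{f,g\}} = \gc{\theta_{f}}{\theta_{g}}\). Assuming this, I would compute
\begin{align*}
  \{f,\{g,h\}\} &= -\mathcal{L}_{\theta_{\{g,h\}}}(f) = -\mathcal{L}_{\gc{\theta_{g}}{\theta_{h}}}(f)\\
  &= -\mathcal{L}_{\theta_{g}}\mathcal{L}_{\theta_{h}}(f) + \mathcal{L}_{\theta_{h}}\mathcal{L}_{\theta_{g}}(f),
\end{align*}
using identity \eqref{eq:37a}. Rewriting each Lie derivative as a Poisson bracket via \(\mathcal{L}_{\theta_{x}}(y) = \{x,y\}\) and then applying skew-symmetry once, the right-hand side becomes \(-\{g,\{h,f\}\} - \{h,\{f,g\}\}\), which is exactly the Jacobi identity. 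Thus \(\p{\DR^{0}(A),\poibr{\mph}{\mph}}\) is a Lie algebra.

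For the second assertion, the map \(f\mapsto \theta_{f}\) is \(\bb{K}\)-linear by construction (as \(\theta_{f} = -\omega^{\sharp}(\de f)\) with both \(\de\) and \(\omega^{\sharp}\) being \(\bb{K}\)-linear), and the Lie-bracket compatibility \(\theta_{\poibr{f}{g}} = \gc{\theta_{f}}{\theta_{g}}\) is precisely equation \eqref{eq:64}. I do not anticipate any serious obstacle: once one trusts \eqref{eq:64}, the rest is a purely formal manipulation. The only place where one should be a little careful is checking that every Lie derivative and contraction used in the Jacobi computation is really the operator on \(\DR^{\bullet}(A)\) (not on \(\Omega^{\bullet}(A)\) before quotienting), but this has already been set up in \S\ref{assoc-df}, so no additional argument is needed.
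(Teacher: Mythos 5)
Your proposal is correct and follows essentially the same route as the paper: skew-symmetry from the chain of equalities \eqref{eq:60}, and the Jacobi identity together with the morphism property both extracted from \(\theta_{\poibr{f}{g}} = \comm{\theta_{f}}{\theta_{g}}\), i.e.\ equation \eqref{eq:64}, which the paper derives in the computation \eqref{eq:61} immediately before stating the theorem. If anything, your explicit Lie-derivative computation of the Jacobi identity via \eqref{eq:37a} is slightly more careful than the paper's one-line deduction (``since \(\comm{\mph}{\mph}\) satisfies Jacobi, so does \(\poibr{\mph}{\mph}\)''), which taken literally would need injectivity of \(f\mapsto \theta_{f}\); your version sidesteps that issue.
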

We conclude that the space of regular functions on an associative symplectic
variety \(\p{A,\omega}\) is naturally equipped with a Lie algebra structure.

Classically, the Poisson bracket has also the essential feature of being a
\emph{derivation} in both arguments with respect to the associative (and
commutative) product on the coordinate ring of a symplectic variety; in other
words, it determines a \emph{Poisson algebra} structure on that ring. In the
present setting it makes no sense to impose such a condition on the bracket
\eqref{eq:68}, as there is no associative product on \(\DR^{0}(A)\). However,
the induced bracket
\[ \app{\poibr{\mph}{\mph}}{\bb{K}[\Rep_{d}^{A}]^{G_{d}}\times \bb{K}[\Rep_{d}^{A}]^{G_{d}}}
{\bb{K}[\Rep_{d}^{A}]^{G_{d}}} \]
defined on the image of the map \eqref{eq:tr-DR0-inv} (which generates the
algebra of invariants) by
\[ \poibr{\tind{f}}{\tind{g}}\deq \wtind{\poibr{f}{g}} \]
is a genuine Poisson bracket\footnote{The map \eqref{eq:68} is thus an example
  of an ``\(H_{0}\)-Poisson structure'' as introduced by Crawley-Boevey in
  \cite{cb11}, and also comes from a \emph{double Poisson bracket} on \(A\) in
  the sense of Van den Bergh \cite{vdb08}. Lack of space forces us to postpone
  a discussion of these important notions to the second part of these notes
  \cite{nc2}.} on \(\bb{K}[\Rep_{d}^{A}]^{G_{d}}\), and in fact coincides with
the Poisson bracket obtained by inverting the \(G_{d}\)-invariant symplectic
form \(\tind{\omega}\).

To conclude this quick review of associative symplectic geometry let us
display the analogue of the familiar four-terms exact sequence of Lie algebras
associated to a symplectic variety.
\begin{lem}
  \label{lem:exseq}
  The map \eqref{eq:66} fits into the following exact sequence of Lie
  algebras:
  \begin{equation}
    \label{eq:62}
    0\longrightarrow H^{0}(\DR^{\bullet}(A))\longrightarrow \DR^{0}(A)
    \stackrel{\theta}{\longrightarrow} \Der^{\omega}(A)\longrightarrow
    H^{1}(\DR^{\bullet}(A))\longrightarrow 0.
  \end{equation}
\end{lem}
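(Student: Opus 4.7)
The plan is to identify the four maps in the sequence, verify exactness at each of the four spots, and finally check compatibility with the Lie brackets. The maps will be: the inclusion $\iota: H^0(\DR^\bullet(A)) \hookrightarrow \DR^0(A)$ (which makes sense because, in the complex $\DR^\bullet(A)$, the degree-zero cohomology is exactly $\ker(\de: \DR^0(A)\to \DR^1(A))$); the map $\theta$ of \eqref{eq:66}; and the map $\kappa: \Der^\omega(A)\to H^1(\DR^\bullet(A))$ sending a symplectic derivation $\eta$ to the cohomology class $[i_\eta(\omega)]$, which is well defined by lemma \ref{th:symp-der}.

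Exactness is then a straightforward verification. Injectivity of $\iota$ is tautological. At $\DR^0(A)$ one uses the nondegeneracy of $\omega$: the equality $\theta_f = -\omega^\sharp(\de f) = 0$ holds iff $\de f = 0$, iff $f\in H^0(\DR^\bullet(A))$. At $\Der^\omega(A)$ the relation $\kappa(\theta_f) = [-\de f] = 0$ gives one inclusion; for the converse, if $\kappa(\eta) = 0$ then $i_\eta(\omega) = \de g$ for some $g\in \DR^0(A)$, and nondegeneracy of $\omega$ yields $\eta = \omega^\sharp(\de g) = -\theta_g = \theta_{-g}$, putting $\eta$ in the image of $\theta$. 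Finally, given any closed 1-form $\alpha\in \DR^1(A)$, the derivation $\eta\deq \omega^\sharp(\alpha)$ is symplectic by lemma \ref{th:symp-der} and satisfies $\kappa(\eta) = [\alpha]$; this proves surjectivity onto $H^1$.

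The compatibility with Lie brackets requires a little more care. Since the Hamiltonian derivation attached to an element of $H^0(\DR^\bullet(A))$ vanishes, formula \eqref{eq:60} shows that $H^0$ lies in the center of $\p{\DR^0(A),\poibr{\mph}{\mph}}$; in particular its induced bracket is trivial, so $\iota$ is a morphism of Lie algebras. That $\theta$ is a Lie algebra morphism is precisely the content of equation \eqref{eq:64}. For $\kappa$, the key computation is that for $\eta_1,\eta_2\in \Der^\omega(A)$ one has
\begin{equation*}
  i_{\comm{\eta_1}{\eta_2}}(\omega)
  = \mathcal{L}_{\eta_1}(i_{\eta_2}(\omega)) - i_{\eta_2}(\mathcal{L}_{\eta_1}(\omega))
  = \de\bigl(i_{\eta_1}i_{\eta_2}(\omega)\bigr) + i_{\eta_1}\bigl(\de i_{\eta_2}(\omega)\bigr),
\end{equation*}
where the first step uses identity \eqref{eq:37b} and the second uses Cartan's formula \eqref{eq:34}. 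The second summand vanishes because $\eta_2$ is symplectic (so $\de i_{\eta_2}(\omega) = \mathcal{L}_{\eta_2}(\omega) - i_{\eta_2}(\de\omega) = 0$), so $i_{\comm{\eta_1}{\eta_2}}(\omega)$ is exact; this shows that commutators of symplectic derivations land in $\ker\kappa$. Hence the Lie bracket on $H^1(\DR^\bullet(A))$ induced by requiring $\kappa$ to be a morphism is identically zero, and the sequence is a genuine sequence of Lie algebras.

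The main conceptual point, and the only non-bookkeeping step, is the computation of $i_{\comm{\eta_1}{\eta_2}}(\omega)$ above, which simultaneously identifies the kernel structure and supplies the Lie-algebraic content of the sequence; everything else is a direct translation of the classical symplectic argument via the dg-algebraic properties of $\Omega^\bullet(A)$ and their descent to $\DR^\bullet(A)$ guaranteed by lemma \ref{lem:grc-der}.
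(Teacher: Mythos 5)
Your proof is correct and takes exactly the approach the paper intends: lemma \ref{lem:exseq} is stated there without proof, on the strength of the remark opening section \ref{s:appl} that classical symplectic arguments translate verbatim once the dg-calculus of \S\ref{assoc-df} is in place, and your writeup is precisely that translation (identifying \(H^{0}\) with \(\ker \de\), using nondegeneracy of \(\omega\) for exactness at \(\DR^{0}(A)\) and \(\Der^{\omega}(A)\), lemma \ref{th:symp-der} for surjectivity onto \(H^{1}\), and equation \eqref{eq:64} for the Lie-algebra compatibility of \(\theta\)). One cosmetic point: in your displayed computation the second equality uses, besides Cartan's formula \eqref{eq:34}, the vanishing \(\mathcal{L}_{\eta_{1}}(\omega)=0\) in order to discard the term \(-i_{\eta_{2}}(\mathcal{L}_{\eta_{1}}(\omega))\) coming from \eqref{eq:37b} --- this is legitimate because \(\eta_{1}\in \Der^{\omega}(A)\), but it should be recorded alongside the analogous vanishing you do state for \(\eta_{2}\).
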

Here the linear spaces \(H^{0}(\DR^{\bullet}(A))\) and
\(H^{1}(\DR^{\bullet}(A))\) are seen as Lie algebras by equipping them with
the zero bracket.

\subsection{Some examples of associative symplectic varieties}

We now review a few examples of symplectic structures on the associative
varieties introduced in section \ref{s:ncdf}. These examples are exactly the
symplectic structures studied by Ginzburg and Bocklandt-Le Bruyn in
\cite{ginz01,blb02}.

Let us start by looking for symplectic structures on associative affine
spaces. Let \(A\) be a free algebra, seen again as the tensor algebra
\(\bm{T}(V^{*})\) of a \(n\)-dimensional vector space \(V^{*}\) with dual
space \(V\). By definition, an associative symplectic structure on \(A\) is
given by a 2-form \(\omega \in \DR^{2}(A)\) which is closed and
nondegenerate. The nature of closed 2-forms on \(A\) is clarified by the
following result.
\begin{teo}
  \label{th:cl2f-aff}
  When the algebra \(A\) is free the subspace of closed forms in
  \(\DR^{2}(A)\) is canonically isomorphic to \([A,A]\).
\end{teo}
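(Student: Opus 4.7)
The plan is to combine two facts proved earlier in the excerpt: the triviality of the positive-degree cohomology of the Karoubi--de Rham complex on a free algebra (Theorem \ref{th:coh-sp-aff}) with the exactness of the Quillen complex (Lemma \ref{lem:Afree-Qex}). Both will be interpreted as computations of the same quotient $\DR^{1}(A)/\de(\ol{\DR}^{0}(A))$, and the desired isomorphism will just be the comparison of the two computations.

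First, I would use $H^{2}(\DR^{\bullet}(A))=0$ to conclude that every closed 2-form is exact, so the differential restricts to a surjection
\[
\app{\de}{\DR^{1}(A)}{\{\omega\in \DR^{2}(A) : \de\omega = 0\}}.
\]
Its kernel is, by definition, the space of closed 1-forms, and by $H^{1}(\DR^{\bullet}(A)) = 0$ this coincides with $\de(\DR^{0}(A)) = \de(\ol{\DR}^{0}(A))$ (the constants in $\DR^{0}(A)$ being killed by $\de$). Hence the first isomorphism theorem gives
\[
\{\omega\in \DR^{2}(A) : \de\omega = 0\} \isom \DR^{1}(A)/\de(\ol{\DR}^{0}(A)).
\]

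Second, from Lemma \ref{lem:Afree-Qex} the Quillen complex \eqref{eq:69} is exact, so $\ker b = \de(\ol{\DR}^{0}(A))$ and $b$ is surjective onto $[A,A]$. Therefore
\[
[A,A]\isom \DR^{1}(A)/\de(\ol{\DR}^{0}(A))
\]
via the map induced by $b$. Composing with the previous isomorphism we obtain the canonical identification: a closed 2-form $\omega\in \DR^{2}(A)$ is sent to $b(\alpha)\in [A,A]$, where $\alpha\in \DR^{1}(A)$ is any 1-form with $\de\alpha = \omega$. Well-definedness is the one point deserving a remark: two choices of primitive $\alpha,\alpha'$ differ by a closed 1-form, which by $H^{1} = 0$ is of the form $\de f$ for some $f\in \ol{\DR}^{0}(A)$, and the identity \eqref{eq:db-zero} gives $b(\de f) = 0$. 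Concretely, if we write $\alpha = \sum_{i} a_{i}\de b_{i}$ then $\omega = \sum_{i}\de a_{i}\,\de b_{i}$ and the isomorphism sends this to $\sum_{i}[a_{i},b_{i}]\in [A,A]$.

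I do not expect a genuine obstacle; the entire argument is essentially the observation that the two exact sequences
\[
\ol{\DR}^{0}(A)\xrightarrow{\de}\DR^{1}(A)\xrightarrow{\de}\{\text{closed 2-forms}\}\to 0
\quad\text{and}\quad
\ol{\DR}^{0}(A)\xrightarrow{\de}\DR^{1}(A)\xrightarrow{b}[A,A]\to 0
\]
have the same middle term and the same left map, and differ only in the identification of the cokernel. The mildly delicate point is simply to invoke the two vanishing results (Theorem \ref{th:coh-sp-aff} and Lemma \ref{lem:Afree-Qex}) in the right places; both are available to us by hypothesis that $A$ is free.
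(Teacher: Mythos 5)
Your proof is correct and follows essentially the same route as the paper, which likewise combines the exactness of the Quillen complex (Lemma \ref{lem:Afree-Qex}) with the exact sequence \(0\to \ol{\DR}^{0}(A)\to \DR^{1}(A)\to \DR^{2}(A)_{\text{cl}}\to 0\) coming from Theorem \ref{th:coh-sp-aff}, and then compares the two cokernels. The only difference is that the paper dismisses the comparison as ``a trivial diagram chasing argument,'' whereas you carry it out explicitly, including the well-definedness check via \(b\circ\de = 0\) and the concrete formula \(\omega\mapsto b(\alpha)\), which matches the paper's remark following the theorem.
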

\begin{proof}
  By lemma \ref{lem:Afree-Qex} the Quillen complex
  \[ \xymatrix{
    0\ar[r] & \ol{\DR}^{0}(A)\ar[r]^-{\de} & \DR^{1}(A) \ar[r]^-{b} &
    [A,A]\ar[r] & 0} \]
  is exact. By theorem \ref{th:coh-sp-aff}, the following sequence is also
  exact:
  \[ \xymatrix{
    0\ar[r] & \ol{\DR}^{0}(A)\ar[r]^-{\de} & \DR^{1}(A) \ar[r]^-{\de} &
    \DR^{2}(A)_{\text{cl}}\ar[r] & 0} \]
  A trivial diagram chasing argument then gives an isomorphism \([A,A]\to
  \DR^{2}(A)_{\text{cl}}\). 
\end{proof}
More explicitly, if \(\omega\in \DR^{2}(A)_{\text{cl}}\) is a closed 2-form
and \(\alpha\in \DR^{1}(A)\) is a primitive for \(\omega\) (that is,
\(\de\alpha = \omega\)), then the above-mentioned isomorphism sends \(\omega\)
to \(b(\alpha)\in [A,A]\).

Recall from \S\ref{nc-aff-sp} that \(\Der(A)\isom A\otimes V\) and
\(\DR^{1}(A)\isom A\otimes V^{*}\). Let \(\omega_{V}\) be a symplectic
bilinear form on the vector space \(V\); then \(n = 2k\) for some \(k\in
\bb{N}\) and we can find a set of canonical coordinates \(x_{1}, \dots, x_{k},
y_{1}, \dots, y_{k}\in V^{*}\) such that \(\omega_{V} = \sum_{i=1}^{k} \de
y_{i}\wedge \de x_{i}\). Let us consider the associative 2-form on \(A\) given
by the equivalence class in \(\DR^{2}(A)\) of
\begin{equation}
  \label{eq:css-falg}
  \omega\deq \sum_{i=1}^{k} 1\otimes y_{i}\otimes 1\otimes x_{i}\otimes 1\in
  \Omega^{2}(A).
\end{equation}
Notice that \(\omega = \sum_{i=1}^{k} \de y_{i}\, \de x_{i}\), so that
\(\de\omega = 0\). In fact \(\omega\) is the differential of the ``associative
Liouville 1-form'' \(\alpha = \sum_{i=1}^{k} y_{i}\de x_{i}\).

Now let \(\p{e_{1}, \dots, e_{k}, f_{1}, \dots, f_{k}}\) be the (Darboux)
basis for \(V\) dual to the canonical coordinates on \(V^{*}\) considered
above. Then every derivation \(\theta\in A\otimes V\) may be expressed as
\[ \theta = \sum_{j=1}^{k} (a_{j}\otimes e_{j} + b_{j}\otimes f_{j})
\qquad\text{ for some } a_{1},\dots,a_{k},b_{1},\dots,b_{k}\in A. \]
An easy computation shows that
\[ \omega^{\flat}(\theta) = i_{\theta}(\omega) = \sum_{i=1}^{k}
(\theta(y_{i})\de x_{i} - \theta(x_{i})\de y_{i}) = \sum_{i=1}^{k} (b_{i}\de
x_{i} - a_{i}\de y_{i}) = (\id_{A}\otimes \omega_{V}^{\flat})(\theta), \]
where \(\omega_{V}^{\flat}\) is the vector space isomorphism \(V\to V^{*}\)
induced by the symplectic form \(\omega_{V}\). It follows that the map
\(\omega^{\flat}\) is also an isomorphism, and the associative 2-form
\eqref{eq:css-falg} defines an associative symplectic structure on \(A\).
We shall call this 2-form the \textbf{canonical symplectic structure on the
  associative affine space} \(A\) (with respect to the chosen set of generators
\(x_{1}, \dots, x_{k}, y_{1}, \dots, y_{k}\)). Notice that all these 2-forms
are related by (affine) automorphisms of \(A\). Actually it is not hard to
show that these are the \emph{only} possible associative symplectic forms on
\(A\); in particular, odd-dimensional associative affine spaces have no
symplectic forms.

The subspace of symplectic derivations for the canonical symplectic structure
on \(A\) is easy to characterize. For the sake of simplicity we shall consider
only the associative plane, \(A = \falg{\bb{K}}{x,y}\); the adaptation to the
higher-dimensional cases is immediate. The symplectic form \eqref{eq:css-falg}
reads \(\omega = \de x\, \de y\). The derivation defined by \(\theta(x,y) =
\p{f_{1},f_{2}}\) is symplectic if and only if
\[ \mathcal{L}_{\theta}(\omega) = \de f_{1}\, \de y + \de x\, \de f_{2} = 0. \]
Using the isomorphism given by theorem \ref{th:cl2f-aff} this is equivalent to
the condition
\begin{equation}
  \label{eq:symp-cond}
  [f_{1},y] - [f_{2},x] = 0.
\end{equation}
By theorem \ref{th:prim} the above equation is solved by all the pairs of the
form \(\p{\frac{\partial f}{\partial y}, -\frac{\partial f}{\partial x}}\).
Hence the generic symplectic derivation of \(\p{A,\omega}\) is given by
\begin{equation}
  \label{eq:sder-aff}
  \theta(x,y) = \Bigl(\frac{\partial f}{\partial y}, -\frac{\partial
    f}{\partial x}\Bigr)
\end{equation}
for some \(f\in \ol{\DR}^{0}(A)\). This correspondence is bijective and a Lie
algebra isomorphism, as follows from exactness of the sequence \eqref{eq:62}
(where \(H^{0}(\DR^{\bullet}(A)) = \bb{K}\) and \(H^{1}(\DR^{\bullet}(A))=0\)
by virtue of theorem \ref{th:coh-sp-aff}).

The reason for calling the symplectic structure \eqref{eq:css-falg}
``canonical'' becomes clear when we look at the induced symplectic structure
on the space \(\Rep_{d}^{A} = \Mat_{d,d}(\bb{K})^{\oplus n}\). Given \(\rho\in
\Rep_{d}^{A}\), let us define \(X_{i}\deq \rho(x_{i})\) and \(Y_{i}\deq
\rho(y_{i})\). Then
\[ \tind{\omega}_{\rho} = \tr (\de Y_{1}\wedge \de X_{1} + \dots + \de
Y_{k}\wedge \de X_{k}). \]
This can be interpreted as the canonical symplectic form on the cotangent
bundle
\[ T^{*}\Mat_{d,d}(\bb{K})^{\oplus k} \]
if we identify a point \(\p{X_{1}, \dots, X_{k}, Y_{1}, \dots, Y_{k}}\in
\Rep_{d}^{A}\) with the point \(\p{X_{1}, \dots, X_{k}, \zeta_{1}, \dots,
  \zeta_{k}}\in T^{*}\Mat_{d,d}(\bb{K})^{\oplus k}\), where for each
\(i=1\dots k\) the linear functional \(\zeta_{i}\) is defined by
\begin{equation}
  \label{eq:cor-mlf}
  \zeta_{i}(M) = \tr MY_{i} \quad\text{ for every } M\in
  \Mat_{d,d}(\bb{K}).
\end{equation}
A second source of examples comes from a class of associative symplectic
structures on quiver path algebras. In order to describe them it is useful to
define the following ``doubling'' operation. Given a quiver \(Q\) the
\textbf{opposite} of \(Q\), denoted by \(Q^{\op}\), is the quiver with the
same vertices as \(Q\) and, for each arrow \(\app{\xi}{i}{j}\) in \(Q\), an
arrow \(\app{\xi^{*}}{j}{i}\) going in the opposite direction. The
\textbf{double} of \(Q\), denoted \(\ol{Q}\), is the quiver having the same
set of vertices as \(Q\) and the arrows of \(Q\) and \(Q^{\op}\).

Now let \(Q\) be any quiver and denote by \(A\deq \bb{K}\ol{Q}\) the path
algebra of its double. We continue to denote by \(B\) the subalgebra of \(A\)
spanned by the trivial paths. We consider the associative 2-form on \(A\)
given by (the equivalence class in \(\DR^{2}(A/B)\) of)
\begin{equation}
  \label{eq:63}
  \omega_{Q}\deq \sum_{\xi\in Q} \de \xi^{*}\, \de \xi\in \Omega^{2}(A/B).
\end{equation}
Notice that the sum runs over all the arrows in the original quiver \(Q\).
This 2-form is closed, being the differential of \(\alpha_{Q}\deq \sum_{\xi\in
  Q} \xi^{*}\de \xi\). Furthermore, an argument similar to the one used above
for the 2-form \eqref{eq:css-falg} (using the expression of the path algebra
as a tensor algebra of the \(B\)-bimodule \(E_{\ol{Q}}\)) shows that the map
\[ \omega_{Q}^{\flat}(\theta) = \sum_{\xi\in Q} (\theta(\xi^{*})\de \xi -
\theta(\xi)\de \xi^{*}) \]
is invertible, so that \(\omega_{Q}\) is also nondegenerate.

It follows that every quiver \(Q\) gives origin to an associative symplectic
variety \(\p{\bb{K}\ol{Q},\omega_{Q}}\). By analogy with the previous case, we
shall call the 2-form \eqref{eq:63} the \textbf{canonical symplectic form
  associated to the quiver} \(Q\). One reason is that, when \(Q\) is the
quiver with one vertex and \(k\) loops \(x_{1}, \dots, x_{k}\), the 2-form
\(\omega_{Q}\) coincides with the 2-form \eqref{eq:css-falg}. But the main
reason is that the induced symplectic structures on representation spaces may
again be interpreted as canonical symplectic forms on the cotangent bundle
\[ T^{*}\Rep(Q,\bm{d}) \]
where a point \((\rho(\xi), \rho(\xi^{*}))_{\xi\in Q}\in \Rep(\ol{Q},\bm{d})\)
is identified with the point in \(T^{*}\Rep(Q,\bm{d})\) determined on the base
by the matrices \((\rho(\xi))_{\xi\in Q}\) and on the fiber by the linear
functionals corresponding to the matrices \((\rho(\xi^{*}))_{\xi\in Q}\) via
the isomorphism \eqref{eq:cor-mlf}.

From the relative version of the sequence \eqref{eq:62},
\[ 0\longrightarrow H^{0}(\DR^{\bullet}(A/B))\longrightarrow \DR^{0}(A)
\stackrel{\theta}{\longrightarrow} \Der_{B}^{\omega}(A)\longrightarrow
H^{1}(\DR^{\bullet}(A/B))\longrightarrow 0 \]
(where \(\Der_{B}^{\omega}(A)\) denotes the Lie subalgebra of \(\Der_{B}(A)\)
consisting of symplectic derivations) we get, using the description for the
cohomology of the complex \(\DR^{\bullet}(A)\) provided by theorem
\ref{th:coh-qpalg}, the following short exact sequence of Lie algebras:
\begin{equation}
  \xymatrix{
    0\ar[r] & B\ar[r] & \DR^{0}(A)\ar[r]^-{\theta} & \Der_{B}^{\omega}(A)\ar[r] & 0}
\end{equation}
It follows that the Lie algebra \(\Der_{B}^{\omega}(A)\) of symplectic
derivations can be identified with the quotient space \(\ol{\DR}^{0}(A)\). The
generic symplectic derivation of \(A\) can be written as
\begin{equation}
  \label{eq:sder-qpa}
  \theta(\xi_{i},\xi_{i}^{*}) = \Bigl( \frac{\partial f}{\partial
    \xi_{i}^{*}}, -\frac{\partial f}{\partial \xi_{i}}\Bigr)
\end{equation}
where \(f\in \ol{\DR}^{0}(A)\) and the index \(i\) runs over the arrows in the
quiver \(Q\).

\subsection{Free motion on the associative plane and the rational Calogero-Moser system}
\label{cm-sys}

From now on we specialize to the case \(\bb{K} = \bb{C}\), the field of
complex numbers. We are going to describe some examples of dynamics on the
(complex) associative plane \(A = \falg{\bb{C}}{x,y}\) equipped with the
canonical symplectic form \(\omega = \de y\, \de x\), and the corresponding
flows on representation spaces.

Let us start from the simplest possible system, namely the Hamiltonian
describing the free motion on \(\p{A,\omega}\):
\begin{equation}
  \label{eq:ham-fm}
  H = \frac{1}{2} y^{2}.
\end{equation}
Clearly \(\de H = y\,\de y\), so that the symplectic derivation determined by
\(H\) is
\[ \theta_{H}(x,y) = (y,0). \]
This derivation induces an Hamiltonian vector field on each manifold
\(\Rep_{d}^{A}\isom T^{*}\Mat_{d,d}(\bb{C})\) equipped with the canonical
symplectic form \(\tind{\omega}_{\p{X,Y}} = \tr (\de Y\wedge \de X)\). The
resulting flow is rather trivial, and is given by
\begin{equation}
  \label{eq:free-flow}
  \Phi_{t}(X,Y) = (X + tY,Y).
\end{equation}
Things become much more interesting on certain quotient spaces of
\(\Rep_{d}^{A}\) with respect to the natural action of \(G_{d} =
\PGL_{d}(\bb{C})\). Since we want to reduce the symplectic form
\(\tind{\omega}\) along with the manifold, it is natural to consider a
\emph{symplectic reduction} (or Marsden-Weinstein quotient) of the symplectic
vector space \(\p{\Rep_{d}^{A},\tind{\omega}}\) (see for instance \cite{mfk94,
  or04,etin06} and many other sources). The basic ingredient of this process
is the so-called \emph{moment} (or \emph{momentum}) \emph{map} of the
\(G_{d}\)-action \eqref{eq:act-G}, which in our case is the map%
\footnote{Here and in what follows we shall tacitly identify the Lie algebra
  \(\la{sl}_{d}(\bb{C})\) with its dual using the nonsingular bilinear form
  \(\p{X,Y}\mapsto \tr XY\).}
\[ \app{\mu}{\Rep_{d}^{A}}{\la{sl}_{d}(\bb{C})} \quad\text{ defined by }\quad
\mu(X,Y) = [X,Y]. \]
The reduction then proceeds as described by the following:
\begin{teo}
  \label{th:orb-red}
  Let \(\mathcal{O}\) be an adjoint orbit in the Lie algebra
  \(\la{sl}_{d}(\bb{C})\) and consider the categorical quotient
  \begin{equation}
    \label{eq:red-sp}
    M_{\mathcal{O}}\deq \cq{\mu^{-1}(\mathcal{O})}{G_{d}}
  \end{equation}
  with projection map \(\app{\pi}{\mu^{-1}(\mathcal{O})}{M_{\mathcal{O}}}\).
  Suppose that the action of the group \(G_{d}\) on
  \(\mu^{-1}(\mathcal{O})\subseteq \Rep_{d}^{A}\) is free. Then the variety
  \(M_{\mathcal{O}}\) is smooth and there exists a unique symplectic form
  \(\omega_{\mathcal{O}}\) on \(M_{\mathcal{O}}\) such that
  \begin{equation}
    \pi^{*}(\omega_{\mathcal{O}}) = i^{*}(\tind{\omega})
  \end{equation}
  where \(i\) is the canonical immersion
  \(\mu^{-1}(\mathcal{O})\hookrightarrow \Rep_{d}^{A}\).
\end{teo}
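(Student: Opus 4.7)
The plan is to mimic the classical Marsden--Weinstein reduction theorem in this algebraic setting, splitting the argument into smoothness of the level set, smoothness of the quotient, descent of the form, and nondegeneracy. I would first check smoothness of $\mu^{-1}(\mathcal{O})$ by showing that $\mu$ is transverse to $\mathcal{O}$ at every point. The key ingredient is the moment map identity $i_{\xi^{\#}}\tind{\omega} = \de\pair{\mu}{\xi}$ for every $\xi \in \la{sl}_d(\bb{C})$, which one verifies by a direct computation from the explicit formula $\mu(X,Y) = \comm{X}{Y}$ (where $(X,Y)$ denotes a point of $\Rep_d^A$) and the canonical form $\tind{\omega}_{(X,Y)} = \tr(\de Y \wedge \de X)$. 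Equivariance of $\mu$ then yields $\de\mu_p(\xi^{\#}_p) = \mathrm{ad}^*_\xi \mu(p)$, so the image of $\de\mu_p$ contains $T_{\mu(p)}\mathcal{O}$; the freeness hypothesis (which forces $\xi \mapsto \xi^{\#}_p$ to be injective) upgrades this to genuine transversality. Smoothness of $M_\mathcal{O}$ then follows from the Luna slice theorem applied to the free action of the reductive group $G_d$: one obtains étale-local slices transverse to the orbits, making $\pi$ a trivial principal bundle locally.

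To construct $\omega_\mathcal{O}$ I would verify that $i^*\tind{\omega}$, possibly adjusted by the pullback of the Kirillov--Kostant--Souriau form on $\mathcal{O}$, is basic with respect to $\pi$. The $G_d$-invariance is inherited from that of $\tind{\omega}$. For any vertical vector $\xi^{\#}_p$ and any $v \in T_p \mu^{-1}(\mathcal{O})$, the moment map identity gives $\tind{\omega}_p(\xi^{\#}_p, v) = \pair{\de\mu_p(v)}{\xi}$; since $\de\mu_p(v) \in T_{\mu(p)}\mathcal{O}$ this is exactly the Kirillov--Kostant--Souriau pairing, which can be absorbed so that the residual form annihilates vertical vectors and descends cleanly to $\omega_\mathcal{O}$. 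Uniqueness is immediate from surjectivity of $\pi$, and closedness passes to the quotient via $\pi^* \de\omega_\mathcal{O} = \de(i^*\tind{\omega}) = i^* \de\tind{\omega} = 0$ combined with injectivity of $\pi^*$ on forms.

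The main obstacle is nondegeneracy of $\omega_\mathcal{O}$. One must identify $T_{[p]} M_\mathcal{O}$ with the quotient $T_p \mu^{-1}(\mathcal{O})/(\la{sl}_d(\bb{C}) \cdot p)$ and verify that the induced bilinear pairing there is nondegenerate. This reduces to establishing the symplectic-orthogonality identity $T_p \mu^{-1}(\mathcal{O}) = (\la{sl}_d(\bb{C}) \cdot p)^{\perp_{\tind{\omega}}}$ inside $T_p \Rep_d^A$: one inclusion is a direct consequence of the moment map identity (any $v \in T_p \mu^{-1}(\mathcal{O})$ pairs trivially under $\tind{\omega}_p$ with the generators of orbit-tangent directions coming from the stabilizer of $\mu(p)$), while the reverse inclusion follows by a dimension count exploiting freeness. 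Once this identity is in place, a standard linear-algebra argument shows that the form induced on $T_p \mu^{-1}(\mathcal{O})/(\la{sl}_d(\bb{C}) \cdot p)$ is nondegenerate, completing the proof.
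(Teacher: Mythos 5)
The paper offers no proof of this theorem: it is quoted as the standard Marsden--Weinstein (orbit) reduction theorem, with references to \cite{mfk94,or04,etin06}, so your sketch has to be measured against the classical argument in those sources. Your first paragraph is essentially fine: freeness plus the identity \(\tind{\omega}_{p}(\xi^{\#}_{p},v)=\pm\pair{\de\mu_{p}(v)}{\xi}\) and nondegeneracy of \(\tind{\omega}\) give surjectivity of \(\de\mu_{p}\) (more than transversality, and the containment of \(T_{\mu(p)}\mathcal{O}\) in the image is not really needed); and the appeal to Luna is legitimate because a free action of the reductive group \(G_{d}\) on the affine variety \(\mu^{-1}(\mathcal{O})\) automatically has all orbits closed (all orbits then have dimension \(\dim G_{d}\), while the boundary of an orbit closure is a union of strictly smaller orbits), a point you should state rather than gloss. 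The genuine failure is in your nondegeneracy step. From the displayed moment map identity, \((\la{sl}_{d}(\bb{C})\cdot p)^{\perp_{\tind{\omega}}}=\ker\de\mu_{p}=T_{p}\mu^{-1}(\mu(p))\), the tangent space to the fibre over the \emph{point} \(\mu(p)\); whereas \(T_{p}\mu^{-1}(\mathcal{O})=\ker\de\mu_{p}+\la{sl}_{d}(\bb{C})\cdot p\), which is strictly larger by \(\dim\mathcal{O}\). So your key identity \(T_{p}\mu^{-1}(\mathcal{O})=(\la{sl}_{d}(\bb{C})\cdot p)^{\perp_{\tind{\omega}}}\) is false for every positive-dimensional orbit --- the only case the paper uses, since \(\dim\mathcal{O}_{\nu}=2(d-1)\) --- and no dimension count can rescue the reverse inclusion. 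Your own parenthetical betrays this: the pairing vanishes only against \(\xi\) in the stabilizer \(\la{g}_{\mu(p)}\), so what you actually prove is containment in \((\la{g}_{\mu(p)}\cdot p)^{\perp_{\tind{\omega}}}\), not in \((\la{sl}_{d}(\bb{C})\cdot p)^{\perp_{\tind{\omega}}}\).

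The same computation exposes a second problem, which you half-see and then suppress. Since \(\de\mu_{p}\) maps \(T_{p}\mu^{-1}(\mathcal{O})\) onto \(T_{\mu(p)}\mathcal{O}\), the form \(i^{*}\tind{\omega}\) pairs vertical vectors nontrivially with tangent vectors whenever \(\dim\mathcal{O}>0\); it is therefore not basic, and \emph{no} form on the quotient can satisfy the literal equation \(\pi^{*}(\omega_{\mathcal{O}})=i^{*}(\tind{\omega})\). The precise orbit-reduction statement (Marle, Kazhdan--Kostant--Sternberg; see \cite{or04}) reads \(\pi^{*}\omega_{\mathcal{O}}=i^{*}\tind{\omega}-\mu_{\mathcal{O}}^{*}\omega_{\mathcal{O}}^{\mathrm{KKS}}\), where \(\app{\mu_{\mathcal{O}}}{\mu^{-1}(\mathcal{O})}{\mathcal{O}}\) is the restriction of \(\mu\); the equation as printed in the paper is the informal version current in the Calogero--Moser literature. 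Your phrase ``which can be absorbed'' is exactly this correction, but you never reconcile it with the statement being proved, and your closedness computation \(\pi^{*}\de\omega_{\mathcal{O}}=\de(i^{*}\tind{\omega})\) and your uniqueness claim silently revert to the uncorrected equation, so the proposal as written establishes a different identity than the one claimed. The clean repairs are either to prove the corrected identity (closedness then also uses \(\de\omega_{\mathcal{O}}^{\mathrm{KKS}}=0\), and nondegeneracy requires the more careful linear algebra of \cite{or04}), or to perform point reduction of \(\mu^{-1}(\xi)\) by the stabilizer \((G_{d})_{\xi}\) --- where your orthogonality identity is the true one --- and transport the symplectic structure through the natural isomorphism \(\cq{\mu^{-1}(\mathcal{O})}{G_{d}}\isom\cq{\mu^{-1}(\xi)}{(G_{d})_{\xi}}\), which is how \cite{etin06} treats the Calogero--Moser space.
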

We are going to use this result to recover the phase space and the dynamics of
the \emph{rational Calogero-Moser system} \cite{calo71,moser75} from the free
motion on the associative plane. In fact this is precisely the example that
motivated the initial development of associative symplectic geometry by
Ginzburg in \cite{ginz01}. As this particular reduction is explained in a
number of excellent sources \cite{kks78,pere90,etin06}, we shall be quite
brief.

Let us denote by \(\mathcal{O}_{\nu}\) the adjoint orbit in
\(\la{sl}_{d}(\bb{C})\) of the \(d\times d\) matrix
\[ \nu = \imm\tau
\begin{pmatrix}
  0 & 1 & \hdots & 1\\
  1 & 0 & \ddots & 1\\
  \vdots & \ddots & \ddots & \vdots\\
  1 & 1 & \hdots & 0
\end{pmatrix} \]
for some \(\tau\in \bb{C}^{*}\) (the orbits corresponding to different choices
of \(\tau\) are isomorphic; notice that these are precisely the adjoint orbits
of \emph{minimal nonzero dimension} in \(\la{sl}_{d}(\bb{C})\)). It can be
proved (see e.g. \cite[Theorem 1.22]{etin06}) that the action of \(G_{d}\) on
the inverse image \(\mu^{-1}(\mathcal{O}_{\nu})\) is free. We are thus in a
position to apply theorem \ref{th:orb-red}, obtaining a smooth symplectic
variety of dimension \(2d\) that we denote by
\begin{equation}
  \label{eq:def-Cd}
  \mathcal{C}_{d}\deq \cq{\mu^{-1}(\mathcal{O}_{\nu})}{\PGL_{d}(\bb{C})}.
\end{equation}
As explained for instance in \cite[Section 2.7]{etin06} the variety
\(\mathcal{C}_{d}\) is naturally interpreted as the completed phase space of
the (complexified) Calogero-Moser system. In order to recover the usual
interpretation in terms of particles moving on a (complex) line, let us
restrict to the open dense subset \(U\subset \mathcal{C}_{d}\) consisting of
equivalence classes of pairs where the matrix \(X\) is diagonalizable (in
which case it automatically has distinct eigenvalues). Then a point in \(U\)
can be represented by a pair \(\p{X,Y}\) in which the first matrix is
diagonal, say \(X = \diag(q_{1}, \dots, q_{d})\), where all the \(q_{i}\)'s
are distinct. An easy computation then shows that the matrix \(Y\) must have
the form
\[ Y =
\begin{pmatrix}
  p_{1} & \frac{\imm\tau}{q_{1}-q_{2}} & \hdots & \frac{\imm\tau}{q_{1}-q_{d}}\\
  \frac{\imm\tau}{q_{2}-q_{1}} & p_{2} & \ddots & \vdots\\
  \vdots & \ddots & \ddots & \vdots\\
  \frac{\imm\tau}{q_{d}-q_{1}} & \hdots & \hdots & p_{d}
\end{pmatrix} \]
for some complex numbers \(p_{1}, \dots, p_{d}\). The correspondence
\((X,Y)\mapsto (q_{1}, \dots, q_{d}, p_{1}, \dots, p_{d})\) sets up a
bijection between \(U\) and the cotangent bundle to the space
\[ \bb{C}^{(d)}\deq (\bb{C}^{d}\setminus \Delta)/S_{d} \]
of \(d\)-tuples of unordered distinct complex numbers (here \(\Delta\)
denotes the ``big diagonal'' in \(\bb{C}^{d}\), namely the union of all the
hyperplanes \(x_{i}=x_{j}\) for \(i\neq j\in \{1\dots n\}\)). In these
coordinates, the reduced symplectic form on \(\mathcal{C}_{d}\) (restricted to
\(U\)) reads
\[ \omega_{\mathcal{O}_{\nu}} = \sum_{i=1}^{d} \de p_{i}\wedge \de q_{i}. \]
We thus have a symplectic isomorphism \(U\isom T^{*}\bb{C}^{(d)}\), and the
Hamiltonian induced by the necklace word \eqref{eq:ham-fm} becomes, in the new
coordinates,
\[ \tind{H}(q_{i},p_{i}) = \frac{1}{2} \sum_{i=1}^{d} p_{i}^{2} +
\frac{\tau^{2}}{2} \sum_{i\neq j=1}^{d} \frac{1}{(q_{i}-q_{j})^{2}} \]
which is the Hamiltonian of the rational Calogero-Moser system with coupling
constant \(\tau\).

Notice that by construction the variables \((q_{1}, \dots, q_{d})\) can be
identified with the eigenvalues of the matrix \(X\) at each instant of time.
Since \(X\) evolves according to the very simple law \eqref{eq:free-flow}, the
positions of the \(d\) particles at time \(t\) are completely determined by
finding the eigenvalues of the matrix
\[ X(t) = X(0) + tY(0) \]
that is, by finding the \(d\) roots of an \emph{algebraic} equation. 

This method of solving the rational Calogero-Moser system is well known: it
goes back to the seminal papers by Olshanetsky and Perelomov (see \cite{op81}
and references therein), who called it the \emph{projection method}. Their
basic idea is to consider a geodesic motion in some ``big'' Riemannian
symmetric space (for which the solution curves can be explicitly written down
using the exponential map) and then project these curves on some suitable
quotient space in order to reproduce the dynamics of a nonlinear system with a
smaller number of degrees of freedom.

The formalism of associative symplectic geometry sheds a new light on this
classic procedure, seamlessly incorporating it in a much more general
mechanism for producing an infinite family of dynamical systems (one for each
\(d\in \bb{N}\)) starting from a single associative variety equipped with a
symplectic form and a Hamiltonian function.

To show the fruitfulness of this new point of view let us consider a slight
variation of the symplectic quotient \eqref{eq:def-Cd} obtained by replacing
the orbit \(\mathcal{O}_{\nu}\) defined above with an adjoint orbit
\(\mathcal{O}\) of higher dimension in the Lie algebra
\(\la{sl}_{d}(\bb{C})\). In this case there are some additional complications
due to the fact that the action of \(G_{d}\) on the inverse image
\(\mu^{-1}(\mathcal{O})\) is no longer free in general, and the ordinary
theory of Marsden-Weinstein reduction does not apply. However one can resort
to the more general theory of singular symplectic reductions (see e.g.
\cite{or04}), in which case the quotient \eqref{eq:red-sp} exists as a
\emph{stratified symplectic space}. This approach is taken, for instance, in
\cite{hoch08}. It turns out that the reduced dynamics is then confined on a
smooth symplectic stratum inside the singular quotient space. Moreover, there
exists a dense open subset \(U\) and a system of coordinates \(\p{q_{i},
  p_{i}}_{i=1\dots d}\) and \(\p{\lambda_{ij}}_{i\neq j=1\dots d}\) on it such
that the function induced by the Hamiltonian \eqref{eq:ham-fm} reads as
follows:
\[ \tind{H}(q_{i},p_{i},\lambda_{ij}) = \frac{1}{2} \sum_{i=1}^{d} p_{i}^{2} +
\frac{1}{2} \sum_{i\neq j=1}^{d} \frac{\lambda_{ij}\lambda_{ji}}{(q_{i}-q_{j})^{2}} \]
This describes another class of solvable many-body models known as
\emph{rational Calogero-Moser systems with spin}.

\subsection{Other systems obtained from motions on the associative plane}

Let us show some other examples of integrable dynamical systems which may be
obtained by symplectic reduction from a motion defined on the associative
plane. Consider first the ``harmonic oscillator'' Hamiltonian
\[ H = \frac{1}{2}(y^{2} + \omega^{2}x^{2}). \]
where \(\omega\in \bb{C}\) is a constant. By formula \eqref{eq:sder-aff}, the
symplectic derivation determined by \(H\) is
\[ \theta_{H}(x,y) = (y, -\omega^{2}x). \]
On the symplectic vector space \(\p{\Rep_{d}^{A},\tind{\omega}}\) the induced
Hamiltonian function reads
\begin{equation}
  \label{eq:H-osc}
  \tind{H}(X,Y) = \frac{1}{2} \tr (Y^{2} + \omega^{2} X^{2})
\end{equation}
and Hamilton's equations are \(\dot{X} = Y\), \(\dot{Y} = -\omega^{2}X\).
These equations can also be integrated easily; the corresponding flow is given
by
\begin{equation}
  \label{eq:fl-har}
  \Phi_{t}(X,Y) = \p{X\cos (\omega t) + Y \omega^{-1} \sin (\omega t),
    Y\cos(\omega t) - X\omega \sin(\omega t)}.
\end{equation}
Now let us descend to the symplectic quotient \eqref{eq:def-Cd}. Restricting
once again to the dense open subset \(U\isom T^{*}\bb{C}^{(d)}\) with
canonical coordinates \((q_{1}, \dots, q_{d}, p_{1}, \dots, p_{d})\) we see
that the function \eqref{eq:H-osc} becomes
\[ \tind{H}(q_{i},p_{i}) = \frac{1}{2} \sum_{i=1}^{d} p_{i}^{2} + \frac{\tau^{2}}{2}
\sum_{i\neq j=1}^{d} \frac{1}{(q_{i}-q_{j})^{2}} + \frac{\omega^{2}}{2}
\sum_{i=1}^{d} q_{i}^{2}. \]
This is the Hamiltonian of the rational Calogero-Moser system with the
addition of an external harmonic potential. This model is also completely
integrable in the Liouville sense \cite{op81,pere90}. The position of the
particles at time \(t\) are simply the eigenvalues of the first matrix in the
pair \eqref{eq:fl-har},
\[ X(t) = X(0)\cos (\omega t) + Y(0) \omega^{-1} \sin (\omega t). \]
By performing a symplectic reduction along a higher-dimensional adjoint orbit
we can similarly obtain a version of the rational Calogero-Moser systems with
spin variables and an external harmonic potential.

More generally, we could consider a generic Hamiltonian in ``standard'' form
\begin{equation}
  \label{eq:H-mech}
  H = \frac{1}{2}y^{2} + p(x)
\end{equation}
where \(p\) is a polynomial that will play the role of an external potential
for the Calogero-Moser particles after the reduction step. The symplectic
derivation determined by \eqref{eq:H-mech} is
\[ \theta_{H}(x,y) = \p{y, -p'(x)}. \]
Notice that for this particular class of examples the noncommutativity of the
variables \(x\) and \(y\) is totally irrelevant. The induced flow on
representation spaces is then given by the solutions to the matrix
differential equation
\[ \ddot{X} + p'(X) = 0. \]
Of course, the difficulty in this case is to explicitly solve this equation
(which in general amounts to a system of \(d^{2}\) coupled nonlinear ODEs; for
the harmonic potential these equations become linear and decoupled).

The motions determined by non-standard Hamiltonians on \(\p{A,\omega}\) are
also of considerable interest. For instance let us take, following
\cite[Section 2.8]{etin06},
\begin{equation}
  \label{eq:ham-cmhyp}
  H = \frac{1}{2} xyxy.
\end{equation}
In this case \(\de H = yxy\de x + xyx\de y\) and the associated symplectic
derivation is
\[ \theta_{H}(x,y) = \p{xyx, -yxy}. \]
Descending to the symplectic quotient \(\mathcal{C}_{d}\) and restricting to
the usual open subset \(U\isom T^{*}\bb{C}^{(d)}\) we see that the necklace
word \eqref{eq:ham-cmhyp} induces the function
\begin{equation}
  \label{eq:ham-1}
  \tind{H}(q_{i},p_{i}) = \frac{1}{2} \sum_{i=1}^{d} q_{i}^{2}p_{i}^{2} +
  \frac{\tau^{2}}{2} \sum_{i\neq j=1}^{d} \frac{q_{i}q_{j}}{(q_{i}-q_{j})^{2}}.
\end{equation}
By further restricting to the open subset
\[ U'\deq \set{(q_{i},p_{i})\in U | q_{i}>0 \text{ for every } i=1\dots d} \]
and performing the change of variables
\[ \theta_{i}\deq \log q_{i} \quad\text{ and }\quad \ind{p}_{i}\deq q_{i}p_{i}, \]
the function \eqref{eq:ham-1} becomes
\[ \tind{H}(\theta_{i},\ind{p}_{i}) = \frac{1}{2} \sum_{i=1}^{d}
\ind{p}_{i}^{2} + 2\tau^{2} \sum_{i\neq j=1}^{d} \left(\sinh
\frac{\theta_{i} - \theta_{j}}{2}\right)^{-2} \]
which is the Hamiltonian of the \emph{hyperbolic} Calogero-Moser system. With
a similar change of variables the system with trigonometric potential can also
be obtained.

Finally let us note that this mechanism for producing families of solvable
dynamical systems is by no means limited to Hamiltonian evolution equations.
In fact every derivation \(\theta\in \Der(A)\), not necessarily symplectic,
will give rise to a \(\GL_{d}\)-invariant vector field on each representation
space \(\Rep_{d}^{A}\). If we are able to explicitly solve the corresponding
matrix ODEs, thus obtaining an explicit expression for the integral curves of
this vector field, we can again project these solution curves on suitable
lower-dimensional quotients of \(\Rep_{d}^{A}\) (not necessarily obtained by
symplectic reduction) in order to get a solvable system with a smaller number
of degrees of freedom.

A similar process has been used quite effectively in a series of papers by
Calogero and his coworkers (see \cite{bc06} and references therein). Following
the exposition in \cite{bc06}, the idea is to start from a matrix differential
equation of second order
\begin{equation}
  \label{eq:matrix-ode}
  \ddot{X} = F(X,\dot{X})
\end{equation}
whose solutions can be written explicitly. The function \(F\) is assumed to be
\(\PGL_{d}\)-equivariant, that is
\[ gF(U,\dot{U})g^{-1} = F(gUg^{-1}, g\dot{U}g^{-1}) \qquad\text{ for every }
g\in \PGL_{d}(\bb{C}). \]
Each solution of \eqref{eq:matrix-ode} defines a curve \(X = X(t)\); we only
consider those solutions for which the matrix \(X\) is diagonalizable with
distinct eigenvalues at all times. Then we can look once again at the
corresponding motion of the eigenvalues \(\p{q_{1}, \dots, q_{d}}\) of the
matrix \(X\). In general, the evolution equations for these eigenvalues will
not be self-contained. However, in some particular cases the supplementary
unknowns can be consistently expressed, by means of an appropriate
\emph{ansatz}, as functions of the \(q_{i}\)'s and their derivatives. The
resulting system of \(d\) scalar second order ODEs can be interpreted as the
equations of motion for a dynamical system consisting of \(n\) point particles
on the complex line subject to nonlinear interactions of various sorts.

Let us sketch the natural interpretation of this construction from the point
of view of associative geometry. At each instant of time, the pair of matrices
\(\p{X(t),\dot{X}(t)}\) defines a point in \(\Rep_{d}^{A}\). The evolution
equation \eqref{eq:matrix-ode} can then be interpreted as the definition of a
(not necessarily symplectic) derivation on the associative affine plane. The
resulting \(G_{d}\)-invariants flows on \(\Rep_{d}^{A}\) clearly descend to
the categorical quotient \(\cq{\Rep_{d}^{A}}{G_{d}}\); however, they can be
further projected on a \(2d\)-dimensional submanifold \(M\) inside
\(\cq{\Rep_{d}^{A}}{G_{d}}\) by means of the particular \emph{ansatz} which is
used to get rid of the additional unknowns. The resulting flow on \(M\) then
defines the evolution of the reduced dynamical system. We plan to provide some
detailed examples of this reduction process in a future work.

\subsection{Integrable systems related to quiver varieties}
\label{gh-sys}

To conclude let us present a family of dynamical systems whose phase space may
be obtained as a quotient of the representation spaces of a quiver with more
than one vertex, thus leaving the realm of associative affine spaces. As these
systems were introduced by Gibbons and Hermsen in \cite{gh84} we shall call
them \emph{Gibbons-Hermsen systems}.

For every natural number \(r\geq 1\) let \(Q_{r}\) denote the quiver%
\footnote{Readers of \cite{tac15} should note that the quivers described here
  are not the ``zigzag'' quivers; rather, it is the family of quivers briefly
  considered in Appendix B of that paper.} with two vertices \(1\) and \(2\),
a loop \(a\) at \(1\), an arrow \(\app{x}{2}{1}\) and (if \(r>1\)) \(r-1\)
arrows \(\app{y_{2}, \dots, y_{r}}{1}{2}\) (notice that there is no
\(y_{1}\)). Let \(\ol{Q}_{r}\) denote the double of this quiver; it has an
additional loop \(a^{*}\) at \(1\), an arrow \(\app{x^{*}} {1}{2}\) and
\(r-1\) arrows \(\app{y_{2}^{*}, \dots, y_{r}^{*}}{1}{2}\).

The canonical symplectic form determined by the quiver \(Q_{r}\) is
\[ \omega_{r}\deq \de a^{*}\de a + \de x^{*}\de x + \sum_{i=2}^{r} \de
y_{i}^{*} \de y_{i} \]
We are going to consider free motion on \(\p{\bb{C}\ol{Q}_{r},\omega_{r}}\),
described by the Hamiltonian \(H = \frac{1}{2}a^{*2}\).

Let us consider representations of \(\ol{Q}_{r}\) with dimension vector
\((n,1)\). A point in this space is given by \(2+2r\) matrices. We shall
denote by:
\begin{itemize}
\item \(X\) and \(Y\) the \(n\times n\) matrices representing \(a\) and
  \(a^{*}\);
\item \(v_{1}\) the \(n\times 1\) matrix representing \(x\) and \(w_{1}\) the
  \(1\times n\) matrix representing \(x^{*}\);
\item \(w_{2}, \dots, w_{r}\) the \(1\times n\) matrices representing \(y_{2},
  \dots, y_{r}\) and \(v_{2}, \dots, v_{r}\) the \(n\times 1\) matrices
  representing \(y_{2}^{*}, \dots, y_{r}^{*}\).
\end{itemize}
The natural action of the group \(G_{(n,1)}\isom \GL_{n}(\bb{C})\) on this
data is given by
\begin{equation}
  \label{eq:act-gh}
  g.(X,Y,v_{\alpha},w_{\alpha}) = (gXg^{-1}, gYg^{-1}, gv_{\alpha}, w_{\alpha}g^{-1})
\end{equation}
and the flow induced by the Hamiltonian \(\frac{1}{2}a^{*2}\) is
\[ \Phi_{t}(X,Y,v_{\alpha},w_{\alpha}) = \p{X + tY, Y, v_{\alpha}, w_{\alpha}} \]
The moment map relative to the action \eqref{eq:act-gh} is the map
\(\Rep(\ol{Q}_{r},(n,1))\to \la{gl}_{n}(\bb{C})\) defined by
\[ \mu(X,Y,v_{\alpha},w_{\alpha}) = [X,Y] + v_{1}w_{1} - \sum_{i=2}^{r} v_{i}w_{i} \]
In order to recover the phase space of the Gibbons-Hermsen system we must
consider the (trivial) adjoint orbit in \(\la{gl}_{n}(\bb{C})\) given by the
single point \(\tau I\). We trust the reader to verify that the
\(2nr\)-dimensional variety given by symplectic quotient
\[ \mathcal{C}_{n,r}\deq \cq{\mu^{-1}(\tau I))}{\GL_{n}(\bb{C})} \]
coincides with the symplectic quotient construction considered in \cite{gh84}%
\footnote{Explicitly, the matrices \(X\), \(P\), \(F\) and \(E\) used by
  Gibbons and Hermsen correspond, respectively, to \(X\), \(Y\), the \(n\times
  r\) matrix \((-v_{1},v_{2}, \dots, v_{r})\) and the \(r\times n\) matrix
  \((w_{1}, w_{2}, \dots, w_{r})^{t}\).}.

Let us denote by \(U\) the open dense subset of \(\mathcal{C}_{n,r}\)
consisting of equivalence classes of \((2+2r)\)-tuples for which the matrix
\(X\) is diagonalizable with distinct eigenvalues. The points of this subset
may be parametrized by a set of \(2n\) complex numbers \((x_{1}, \dots, x_{n},
y_{1}, \dots, y_{n})\) and \(n\) additional points \((f_{i},e_{i})\in A_{r}\),
where \(A_{r}\) is the algebraic variety defined by
\[ A_{r}\deq \set{(\xi, \eta)\in \Mat_{1,r}(\bb{C})\times \Mat_{r,1}(\bb{C}) |
  \pair{\xi}{\eta} = 1}/\bb{C}^{*} \]
with \(\lambda\in \bb{C}^{*}\) acting as \((\xi,\eta)\mapsto (\lambda\xi,
\lambda^{-1}\eta)\). The reduced symplectic form restricted to \(U\) reads
\[ \omega = \sum_{i=1}^{n} \left( \de y_{i} \wedge \de x_{i} + \tau\,
  \de e_{i}\wedge \de f_{i}\right) \]
and the Hamiltonian \(\tind{H} = \frac{1}{2}\tr Y^{2}\) projects down to
\[ H = \frac{1}{2}\sum_{i=1}^{n} p_{i}^{2} + \frac{\tau^{2}}{2} \sum_{i\neq
  j=1}^{n} \frac{\pair{f_{i}}{e_{j}} \pair{f_{j}}{e_{i}}}{(q_{i} -
  q_{j})^{2}} \]
The resulting dynamics is connected to the one described by the Calogero-Moser
systems with spin, although in this case the number of internal degrees of
freedom is higher.

Let us emphasize that the system considered above is just a single example
involving a particular family of quivers, a particular choice of the dimension
vector for the corresponding representation spaces, and a particular choice of
Hamiltonian function. Clearly, many variations on this theme are possible.
Actually, one could argue that every quiver possess a large family of
``natural'' dynamical systems defined on the corresponding representation
spaces; these systems may frequently be explicitly solvable and/or integrable
in the Liouville sense.

This possibility was considered by Nekrasov in his survey \cite{nek99} on
many-body integrable systems obtained by symplectic reduction. In
\cite[Section 5.3]{nek99} Nekrasov explicitly poses the problem of determining
which dynamical systems of this form are integrable, both in the smooth and in
the holomorphic setting (problems 5.20 and 5.22). To the best of our
knowledge, these problems are still wide open.

% \bibliographystyle{unsrt}
% \bibliography{full}

\begin{thebibliography}{10}

\bibitem{khal09}
Masoud Khalkhali.
\newblock {\em Basic Noncommutative Geometry}.
\newblock EMS Series of Lectures in Mathematics. European Mathematical Society,
  2009.

\bibitem{reyes12}
Manuel~L. Reyes.
\newblock Obstructing extensions of the functor {S}pec to noncommutative rings.
\newblock {\em Israel J. Math.}, 192(2):667--698, 2012.

\bibitem{vdbh14}
Benno Van~den Berg and Chris Heunen.
\newblock Extending obstructions to noncommutative functorial spectra.
\newblock {\em Theory and Applications of Categories}, 29(17):457--474, 2014.
\newblock \texttt{arXiv:1407.2745}.

\bibitem{conn94}
Alain Connes.
\newblock {\em Noncommutative geometry}.
\newblock Academic Press, Inc., San Diego, CA, 1994.

\bibitem{az94}
M.~Artin and J.~J. Zhang.
\newblock Noncommutative projective schemes.
\newblock {\em Adv. Math.}, 109(2):228--287, 1994.

\bibitem{svdb01}
J.~T. Stafford and M.~van~den Bergh.
\newblock Noncommutative curves and noncommutative surfaces.
\newblock {\em Bull. Amer. Math. Soc. (N.S.)}, 38(2):171--216, 2001.

\bibitem{staf02}
J.~T. Stafford.
\newblock Noncommutative projective geometry.
\newblock In {\em Proceedings of the {I}nternational {C}ongress of
  {M}athematicians, {V}ol. {II} ({B}eijing, 2002)}, pages 93--103. Higher Ed.
  Press, Beijing, 2002.

\bibitem{keel01}
Dennis~S. Keeler.
\newblock The rings of noncommutative projective geometry.
\newblock In {\em Advances in algebra and geometry ({H}yderabad, 2001)}, pages
  195--207. Hindustan Book Agency, New Delhi, 2003.

\bibitem{ginzlect}
Victor Ginzburg.
\newblock Lectures on noncommutative geometry.
\newblock \texttt{arXiv:math/0506603}, 2005.

\bibitem{kont93}
Maxim Kontsevich.
\newblock Formal (non)commutative symplectic geometry.
\newblock In {\em The {G}el$'$fand {M}athematical {S}eminars, 1990--1992},
  pages 173--187. Birkh\"auser Boston, Boston, MA, 1993.

\bibitem{op81}
M.~A. Olshanetsky and A.~M. Perelomov.
\newblock Classical integrable finite-dimensional systems related to {L}ie
  algebras.
\newblock {\em Phys. Rep.}, 71(5):313--400, 1981.

\bibitem{cbeg07}
William Crawley-Boevey, Pavel Etingof, and Victor Ginzburg.
\newblock Noncommutative geometry and quiver algebras.
\newblock {\em Adv. Math.}, 209:274--336, 2007.
\newblock \texttt{math.AG/0502301}.

\bibitem{vdb08}
Michel Van~den Bergh.
\newblock Double poisson algebras.
\newblock {\em Trans. Amer. Math. Soc.}, 360:5711--5769, 2008.
\newblock \texttt{arXiv:math/0410528}.

\bibitem{nc2}
Alberto Tacchella.
\newblock An introduction to associative geometry, part {II}.
\newblock Manuscript in preparation.

\bibitem{lebr08}
Lieven Le~Bruyn.
\newblock {\em Noncommutative geometry and {C}ayley-smooth orders}, volume 290
  of {\em Pure and Applied Mathematics (Boca Raton)}.
\newblock Chapman \& Hall/CRC, Boca Raton, FL, 2008.

\bibitem{landi97}
Giovanni Landi.
\newblock {\em An introduction to noncommutative spaces and their geometries},
  volume~51 of {\em Lecture Notes in Physics. New Series m: Monographs}.
\newblock Springer-Verlag, Berlin, 1997.

\bibitem{dub01}
Michel Dubois-Violette.
\newblock Lectures on graded differential algebras and noncommutative geometry.
\newblock In {\em Noncommutative differential geometry and its applications to
  physics ({S}honan, 1999)}, volume~23 of {\em Math. Phys. Stud.}, pages
  245--306. Kluwer Acad. Publ., Dordrecht, 2001.

\bibitem{gbvf01}
Jos{\'e}~M. Gracia-Bond{\'{\i}}a, Joseph~C. V{\'a}rilly, and H{\'e}ctor
  Figueroa.
\newblock {\em Elements of noncommutative geometry}.
\newblock Birkh\"auser Advanced Texts: Basler Lehrb\"ucher. Birkh\"auser
  Boston, Inc., Boston, MA, 2001.

\bibitem{MacLane1998}
Saunders MacLane.
\newblock {\em Categories for the working mathematician}, volume~5 of {\em
  Graduate Texts in Mathematics}.
\newblock Springer-Verlag, second edition, 1998.

\bibitem{Eisenbud1995}
David Eisenbud.
\newblock {\em Commutative algebra with a view towards algebraic geometry},
  volume 150 of {\em Graduate texts in mathematics}.
\newblock Springer-Verlag, 1995.

\bibitem{Matsumur1989}
Hideyuki Matsumura.
\newblock {\em Commutative ring theory}.
\newblock Cambridge university press, 1989.

\bibitem{cq95}
Joachim Cuntz and Daniel Quillen.
\newblock Algebra extensions and nonsingularity.
\newblock {\em J. Amer. Math. Soc.}, 8(2):251--289, 1995.

\bibitem{Lang2002}
Serge Lang.
\newblock {\em Algebra}, volume 211 of {\em Graduate Texts in Mathematics}.
\newblock Springer-Verlag, third edition, 2002.

\bibitem{Mueller-Hoissen1997}
Folkert M\"uller-Hoissen.
\newblock Introduction to noncommutative geometry of commutative algebras and
  applications in physics.
\newblock In A.~Garcia, C.~L\"ammerzahl, A.~Macias, and T.~Matos annd
  D.~Nu\~nez, editors, {\em Recent Developments in Gravitation and Mathematical
  Physics}, Proceedings of the Second Mexican School on Gravitation and
  Mathematical Physics. Science Network Publishing, 1997.

\bibitem{kar86}
Max Karoubi.
\newblock Homologie cyclique et {$K$}-th\'eorie.
\newblock {\em Ast\'erisque}, (149):147, 1987.

\bibitem{cam94}
Peter~J. Cameron.
\newblock {\em Combinatorics: topics, techniques, algorithms}.
\newblock Cambridge University Press, Cambridge, 1994.

\bibitem{ginz01}
Victor Ginzburg.
\newblock Non-commutative symplectic geometry, quiver varieties, and operads.
\newblock {\em Math. Res. Lett.}, 8(3):377--400, 2001.
\newblock \texttt{arXiv:math/0005165}.

\bibitem{blb02}
Raf Bocklandt and Lieven Le~Bruyn.
\newblock Necklace {L}ie algebras and noncommutative symplectic geometry.
\newblock {\em Math. Z.}, 240(1):141--167, 2002.
\newblock \texttt{arXiv:math/0010030}.

\bibitem{ginz09}
Victor Ginzburg.
\newblock Lectures on {N}akajima's quiver varieties.
\newblock In {\em Geometric methods in representation theory. {I}}, volume~24
  of {\em S\'emin. Congr.}, pages 145--219. Soc. Math. France, Paris, 2012.
\newblock \texttt{arXiv:0905.0686}.

\bibitem{mfk94}
David Mumford, John Fogarty, and Frances Kirwan.
\newblock {\em Geometric invariant theory}, volume~34 of {\em Ergebnisse der
  Mathematik und ihrer Grenzgebiete (2)}.
\newblock Springer-Verlag, Berlin, third edition, 1994.

\bibitem{brio10}
Michel Brion.
\newblock Introduction to actions of algebraic groups.
\newblock {\em Les cours du CIRM}, 1(1):1--22, 2010.

\bibitem{art69}
M.~Artin.
\newblock On {A}zumaya algebras and finite dimensional representations of
  rings.
\newblock {\em J. Algebra}, 11:532--563, 1969.

\bibitem{ass06}
Ibrahim Assem, Daniel Simson, and Andrzej Skowro{\'n}ski.
\newblock {\em Elements of the representation theory of associative algebras.
  {V}ol. 1}, volume~65 of {\em London Mathematical Society Student Texts}.
\newblock Cambridge University Press, Cambridge, 2006.

\bibitem{lp90}
Lieven Le~Bruyn and Claudio Procesi.
\newblock Semisimple representations of quivers.
\newblock {\em Trans. Amer. Math. Soc.}, 317(2):585--598, 1990.

\bibitem{os98}
Peter~J. Olver and Vladimir~V. Sokolov.
\newblock Integrable evolution equations on associative algebras.
\newblock {\em Communications in Mathematical Physics}, 193(2):245--268, 1998.

\bibitem{ms00}
A.~V. Mikhailov and V.~V. Sokolov.
\newblock Integrable {ODE}s on associative algebras.
\newblock {\em Communications in Mathematical Physics}, 211(1):231--251, 2000.

\bibitem{cb11}
William Crawley-Boevey.
\newblock Poisson structures on moduli spaces of representations.
\newblock {\em J. Algebra}, 325:205--215, 2011.

\bibitem{or04}
Juan-Pablo Ortega and Tudor~S. Ratiu.
\newblock {\em Momentum maps and {H}amiltonian reduction}, volume 222 of {\em
  Progress in Mathematics}.
\newblock Birkh\"auser Boston, Inc., Boston, MA, 2004.

\bibitem{etin06}
Pavel Etingof.
\newblock {\em {C}alogero-{M}oser systems and representation theory}.
\newblock Zurich Lectures in Advanced Mathematics. EMS, Zurich, 2007.
\newblock \texttt{arXiv:math/0606233}.

\bibitem{calo71}
Francesco Calogero.
\newblock Solution of the one-dimensional $n$-body problems with quadratic
  and/or inversely quadratic pair potentials.
\newblock {\em J. Math. Phys.}, 12:419--436, 1971.

\bibitem{moser75}
J.~Moser.
\newblock Three integrable {H}amiltonian systems connected with isospectral
  deformations.
\newblock {\em Adv. Math.}, 16:1--23, 1975.

\bibitem{kks78}
D.~Kazhdan, B.~Kostant, and S.~Sternberg.
\newblock Hamiltonian group actions and dynamical systems of {C}alogero type.
\newblock {\em Comm. Pure Appl. Math.}, 31:481--507, 1978.

\bibitem{pere90}
A.~M. Perelomov.
\newblock {\em Integrable systems of classical mechanics and {L}ie algebras.
  {V}ol. {I}}.
\newblock Birkh\"auser Verlag, Basel, 1990.
\newblock Translated from the Russian by A. G. Reyman.

\bibitem{hoch08}
Simon Hochgerner.
\newblock Singular cotangent bundle reduction \& spin {C}alogero-{M}oser
  systems.
\newblock {\em Differential Geom. Appl.}, 26(2):169--192, 2008.

\bibitem{bc06}
M.~Bruschi and F.~Calogero.
\newblock Goldfishing: a new solvable many-body problem.
\newblock {\em J. Math. Phys.}, 47(10):102701, 35, 2006.

\bibitem{gh84}
John Gibbons and Theo Hermsen.
\newblock A generalization of the {C}alogero-{M}oser system.
\newblock {\em Physica}, 11D:337--348, 1984.

\bibitem{tac15}
Alberto Tacchella.
\newblock On a family of quivers related to the {G}ibbons-{H}ermsen system.
\newblock {\em Journal of Geometry and Physics}, 93:11--32, 2015.
\newblock \texttt{arXiv:1311.4403}.

\bibitem{nek99}
Nikita Nekrasov.
\newblock Infinite-dimensional algebras, many-body systems and gauge theories.
\newblock In {\em Moscow {S}eminar in {M}athematical {P}hysics}, volume 191 of
  {\em Amer. Math. Soc. Transl. Ser. 2}, pages 263--299. Amer. Math. Soc.,
  Providence, RI, 1999.

\end{thebibliography}

\end{document}